\documentclass[onecolumn,11pt,draftclsnofoot]{IEEEtran}
\IEEEoverridecommandlockouts
\usepackage[utf8]{inputenc}
\usepackage{stackengine}
\usepackage{graphicx}
\usepackage{amssymb}
\usepackage{amsmath}
\usepackage{mathrsfs}
\usepackage{ulem}
\usepackage{epsf,epsfig}
\usepackage{cite}
\usepackage{color,xcolor}
\usepackage{dsfont}
\usepackage{bbm}
\usepackage{amsthm}
\usepackage{physics}
\usepackage{tcolorbox}
\usepackage{wrapfig}
\usepackage{url}
\usepackage{graphicx}
\usepackage{stmaryrd}
\usepackage{hyperref}
\usepackage{eurosym}
\usepackage{enumitem}
\usepackage{nopageno}
\usepackage{fancyhdr}
\usepackage{mathtools}
\usepackage{tabularx}
\usepackage{array} 
\usepackage{multirow}

\def\naturals{\mathbb{N}}

\def\complex{\mathbb{C}}

\def\integers{\mathbb{Z}}

\def\fieldq{\mathcal{F}_{q}}
\def\Expectation{\mathbb{E}}

\newcommand{\msout}[1]{\text{\sout{\ensuremath{#1}}}}


\definecolor{darkgreen}{rgb}{0,0.5,0}
\definecolor{darkmagenta}{rgb}{.5,0,.5}
\definecolor{darkblue}{rgb}{0,0,0.5}


\def\CalA{\mathcal{A}}
\def\CalB{\mathcal{B}}
\def\CalC{\mathcal{C}}
\def\CalD{\mathcal{D}}
\def\CalE{\mathcal{E}}
\def\CalF{\mathcal{F}}
\def\CalG{\mathcal{G}}
\def\CalH{\mathcal{H}}

\def\CalL{\mathcal{L}}
\def\CalM{\mathcal{M}}

\def\CalP{\mathcal{P}}
\def\CalQ{\mathcal{Q}}

\def\CalT{\mathcal{T}}
\def\CalU{\mathcal{U}}
\def\CalV{\mathcal{V}}
\def\CalW{\mathcal{W}}
\def\CalX{\mathcal{X}}
\def\CalY{\mathcal{Y}}


\def\ulineG{\underline{G}}

\def\ulineQ{\underline{Q}}
\def\ulineR{\underline{R}}

\def\ulineT{\underline{T}}
\def\ulineU{\underline{U}}
\def\ulineV{\underline{V}}
\def\ulineW{\underline{W}}
\def\ulineX{\underline{X}}
\def\ulineY{\underline{Y}}
\def\ulineZ{\underline{Z}}

\def\ulinea{\underline{a}}

\def\ulinem{\underline{m}}

\def\ulineq{\underline{q}}

\def\ulineu{\underline{u}}
\def\ulinev{\underline{v}}

\def\ulinex{\underline{x}}

\def\ulinez{\underline{z}}




\def\hata{\hat{a}}
\def\hatb{\hat{b}}

\def\hatm{\hat{m}}

\def\hatu{\hat{u}}
\def\hatv{\hat{v}}
\def\hatw{\hat{w}}

\def\haty{\hat{y}}


\def\CalA{\mathcal{A}}
\def\CalB{\mathcal{B}}
\def\CalC{\mathcal{C}}
\def\CalD{\mathcal{D}}
\def\CalE{\mathcal{E}}
\def\CalF{\mathcal{F}}
\def\CalG{\mathcal{G}}
\def\CalH{\mathcal{H}}

\def\CalL{\mathcal{L}}
\def\CalM{\mathcal{M}}

\def\CalP{\mathcal{P}}
\def\CalQ{\mathcal{Q}}

\def\CalT{\mathcal{T}}
\def\CalU{\mathcal{U}}
\def\CalV{\mathcal{V}}
\def\CalW{\mathcal{W}}
\def\CalX{\mathcal{X}}
\def\CalY{\mathcal{Y}}


\def\ulineCalM{\underline{\mathcal{M}}}

\def\ulineCalQ{\underline{\mathcal{Q}}}

\def\ulineCalU{\underline{\mathcal{U}}}
\def\ulineCalV{\underline{\mathcal{V}}}

\def\ulineCalX{\underline{\mathcal{X}}}


\def\parsec{\par\noindent}
\def\med{\medskip\parsec}


\def\define{\mathrel{\ensurestackMath{\stackon[1pt]{=}{\scriptstyle\Delta}}}}




\def\olineB{\overline{B}}

\def\olineG{\overline{G}}

\def\olineS{\overline{S}}
\def\olineT{\overline{T}}


\def\ScrC{\mathscr{C}}






\newtheorem{remark}{Remark}
\newtheorem{theorem}{Theorem}
\newtheorem{corollary}{Corollary}
\newtheorem{definition}{Definition}

\newtheorem{example}{Example}

\newtheorem{fact}{Fact}

\newtheorem{prop}{Prop}


\def\apl{a_{\mbox{{\tiny$\oplus$}}}}
\def\hatapl{\hata_{\mbox{{\tiny$\oplus$}}}}

\def\dpl{d_{\mbox{{\tiny$\oplus$}}}}
\def\hatapl{\hata_{\mbox{{\tiny$\oplus$}}}}
\def\lambdapl{\lambda_{\mbox{{\tiny$\oplus$}}}}

\def\hata{\hat{a}}

\def\hatu{\hat{u}}

\def\hatw{\hat{w}}

\def\hatm{\hat{m}}

\def\hatv{\hat{v}}

\def\haty{\hat{y}}

\def\hatb{\hat{b}}


\global\long\def\11{\mathbbm{1}}




\def\ulineR{\underline{R}}

\def\ulineY{\underline{B}}

\def\ulinem{\underline{m}}

\def\ulineX{\underline{X}}
\def\ulinex{\underline{x}}
\def\ulineU{\underline{U}}
\def\ulineu{\underline{u}}
\def\ulineW{\underline{W}}

\def\ulineCalX{\underline{\mathcal{X}}}

\def\ulineCalU{\underline{\mathcal{U}}}
\def\ulineCalV{\underline{\mathcal{V}}}

\def\ulineCalM{\underline{\mathcal{M}}}

\def\ulineZ{\underline{Z}}
\def\ulinez{\underline{z}}
\def\ulineY{\underline{Y}}

\def\underlineSemiPrivateRV{\underline{\SemiPrivateRV}}
\def\underlinePrivateRV{\underline{\PrivateRV}}

\def\ulineX{\underline{X}}
\def\ulineY{\underline{Y}}
\def\ulineV{\underline{V}}
\def\ulinev{\underline{v}}
\def\ulinea{\underline{a}}
\def\ulinelambda{\underline{\lambda}}


\def\olinekappa{\overline{\kappa}}


\def\3To1BC{$3\mhyphen$to$-1$}

\def\define{:{=}~}

\def\naturals{\mathbb{N}}

\def\UM{\mathscr{U}\!\mathcal{M}-}

\def\fieldpi{\mathcal{F}_{\pi}}

\def\hatm{\hat{m}}
\def\cl{\mbox{cl}}
\def\cocl{\mbox{cocl}}

\def\SetOfDistributions{\mathbb{D}}

\def\SemiPrivateRVSet{\mathcal{V}}

\def\PrivateRV{V}
\def\SemiPrivateRV{U}
\def\underlinePrivateRV{\underline{\PrivateRV}}
\def\Expectation{\mathbb{E}}
\def\fieldq{\mathcal{F}_{q}}

\def\fieldpij{\mathcal{F}_{\pi_{j}}}
\newif\ifProofForORDBC

\def\parsec{\par\noindent}
\def\med{\medskip\parsec}



\def\CalA{\mathcal{A}}
\def\CalB{\mathcal{B}}
\def\CalC{\mathcal{C}}
\def\CalD{\mathcal{D}}
\def\CalE{\mathcal{E}}
\def\CalF{\mathcal{F}}
\def\CalG{\mathcal{G}}
\def\CalH{\mathcal{H}}

\def\CalL{\mathcal{L}}
\def\CalM{\mathcal{M}}

\def\CalP{\mathcal{P}}
\def\CalQ{\mathcal{Q}}

\def\CalT{\mathcal{T}}
\def\CalU{\mathcal{U}}
\def\CalV{\mathcal{V}}
\def\CalW{\mathcal{W}}
\def\CalX{\mathcal{X}}
\def\CalY{\mathcal{Y}}


\def\11{\mathbbm{1}}


\def\3To1BC{$3\mhyphen$to$-1$}

\def\define{:{=}~}

\def\naturals{\mathbb{N}}

\def\UM{\mathscr{U}\!\mathcal{M}-}

\def\fieldpi{\mathcal{F}_{\pi}}

\mathchardef\mhyphen="2D


\def\Expectation{\mathbb{E}}

\usepackage{stackengine}

\def\define{\mathrel{\ensurestackMath{\stackon[1pt]{=}{\scriptstyle\Delta}}}}

%


\newif\ifJournal

\usepackage{amssymb}
\usepackage{amsmath}
\usepackage{mathrsfs}
\usepackage{ulem}
\usepackage{epsf,epsfig}
\usepackage{cite}
\usepackage{color}
\usepackage{dsfont}
\usepackage{bbm}
\usepackage{amsthm}
\usepackage{physics}
\usepackage{cite}

\usepackage{float} 
\usepackage{accents}

\mathchardef\mhyphen="2D
\def\olinekappa{\overline{\kappa}}
\def\3To1BC{$3\mhyphen$to$-1$}

\def\ulinelambda{\underline{\lambda}}
\def\dbrackthree{\llbracket 3 \rrbracket}

\def\SemiPrivateRVSet{\mathcal{U}}
\def\underlineSemiPrivateRV{\underline{\SemiPrivateRV}}
\def\underlinePrivateRV{\underline{\PrivateRV}}

\def\PrivateRV{V}
\def\SemiPrivateRV{U}
\def\underlinePrivateRV{\underline{\PrivateRV}}
\def\Expectation{\mathbb{E}}
\def\fieldq{\mathcal{F}_{\upsilon}}

\def\Prime{\upsilon}

\def\fieldq{\mathcal{F}_{q}}

\def\fieldpij{\mathcal{F}_{\upsilon_{j}}}

\def\doubleunderline#1{\underline{\underline{#1}}}
\def\dulineU{\doubleunderline{U}}
\def\dulineCalU{\doubleunderline{\CalU}}

\def\dulinev{\doubleunderline{v}}
\def\dulineq{\doubleunderline{q}}

\def\dulineQ{\doubleunderline{Q}}
\def\dulineq{\doubleunderline{q}}
\def\dulineu{\doubleunderline{u}}
\def\dulineCalQ{\doubleunderline{\CalQ}}
\def\dulineT{\doubleunderline{T}}
\def\dulineCalU{\doubleunderline{\CalU}}

\begin{document}

\sloppy

\title{\huge Achievable Rate Regions for $3-$User Classical-Quantum Broadcast Channels}

\author{
\IEEEauthorblockN{Fatma Gouiaa and Arun Padakandla\\}
\IEEEauthorblockA{Communication Systems, EURECOM}
}
\maketitle

\begin{abstract}
We consider the scenario of communicating on a $3\mhyphen$user classical-quantum broadcast channel. We undertake an information theoretic study and focus on the problem of characterizing an inner bound to its capacity region. We design a new coding scheme based \textit{partitioned coset codes} - an ensemble of codes possessing algebraic properties. Analyzing its information-theoretic performance, we characterize a new inner bound. We identify examples for which the derived inner bound is strictly larger than that achievable using IID random codes.

Proceeding further, we incorporate Sen's technique of tilting smoothing and augmentation to perform simultaneous decoding via a simultaneous decoding POVM and thereby characterize a further enlarged achievable rate region for communicating classical bits over the $3-$user classical-quantum broadcast channel. Finally, in our last step, we characterize a new inner bound to the classical-quantum capacity region of the $3-$user classical-quantum broadcast channel that subsumes all previous known inner bounds by combining the conventional unstructured IID codes with structured coset code strategies.
\end{abstract}

\section{Introduction}
\label{Sec:Introduction}

We consider the scenario of communicating on a $3\mhyphen$user classical-quantum broadcast channel ($3\mhyphen$CQBC) depicted in Fig.~\ref{Fig:CommnOver3CQBC}. Our focus is on the problem of designing an efficient coding scheme and characterizing an inner bound (achievable rate region) to the capacity region of a general $3\mhyphen$CQBC. The current known coding schemes \cite{201110TIT_YarHayDev,201512TIT_SavWil,201605TIT_RadSenWar} for CQBCs are based on conventional IID random codes, also referred to herein as unstructured codes. In this work, we undertake a study of coding schemes based on \textit{coset codes} for CQBC communication and present the following contributions.

We propose a new coding scheme based on \textit{partitioned coset codes} (PCC) - an ensemble of codes possessing algebraic closure properties. We analyze its performance to derive a new inner bound to the capacity region of a general $3\mhyphen$CQBC. We identify examples of $3\mhyphen$CQBCs for which the derived inner bound is strictly larger than the current known largest. Our findings maybe viewed as another step \cite{202107ISIT_AnwPadPra3CQIC, 202212TIT_SohAtiPadPra} in our pursuit of designing  coding schemes based on coset codes for network CQ communication, and in particular, a first step in deriving a new achievable rate region for a general $3\mhyphen$CQBC.

An information theoretic study of CQBCs was initiated by Yard et. al. \cite{201110TIT_YarHayDev} in the context of $2\mhyphen$CQBCs wherein the superposition coding scheme was studied. Furthering this study, Savov and Wilde \cite{201512TIT_SavWil} proved achievability of Marton's binning \cite{197905TIT_Mar} for a general $2\mhyphen$CQBC. While these studied the asymptotic regime, Radhakrishnan et. al. \cite{201605TIT_RadSenWar} proved achievability of Marton's inner bound \cite{197905TIT_Mar} in the one-shot regime which also extended to clear proofs for the former regime considered in \cite{201110TIT_YarHayDev,201512TIT_SavWil}.

The works \cite{201110TIT_YarHayDev,201512TIT_SavWil,201605TIT_RadSenWar} were aimed at generalizing Marton's classical coding \cite{197905TIT_Mar} schemes that is based on IID codes. Fueled by the simplicity of IID codes and more importantly, a lack of evidence for their sub-optimality in one-to-many communication scenarios, most studies of the broadcast channel (BC) problem have restricted attention to IID coding schemes. Indeed, even within the larger class of BCs that include continuous valued sets, any number of receivers (Rxs) and multiple antennae, we were unaware of any BC for which IID coding schemes were sub-optimal. In 2013, \cite{201307ISIT_PadPraDBC} identified a $3\mhyphen$user classical BC ($3\mhyphen$CBC) for which a linear coding scheme strictly outperforms even a most general IID coding scheme that incorporates all known strategies \cite{197201TIT_Cov,197905TIT_Mar,198101TIT_HanKob}. This article builds on the ideas in \cite{201804TIT_PadPra} and is driven by a motivation to elevate the same to $3\mhyphen$CQBCs. In the sequel, we discuss why codes endowed with algebraic closure properties can enhance one-to-many communication.
\begin{figure}
    \centering
    \includegraphics[width=3.5in]{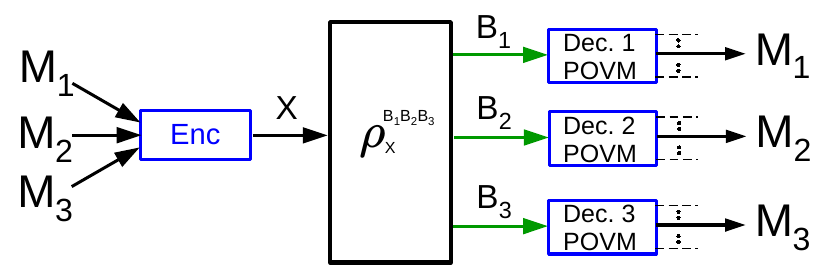}
    \caption{Communicating classical information over a $3\mhyphen$CQBC.}
    \label{Fig:CommnOver3CQBC}
\end{figure}

Communication on a BC entails fusing codewords chosen for the different Rxs through a single input. From the perspective of any Rx, a specific aggregation of the codewords chosen for the other Rxs acts as interference. See Fig.~\ref{Fig:InterferencOn3BC}. The Tx can precode for this interference via Marton's binning. In general, precoding entails a rate loss. In other words, suppose $W$ is the interference seen by Rx $1$, then the rate that Rx $1$ can achieve by decoding $W$ and peeling it off can be strictly larger than what it can achieve if the Tx precodes for $W$. This motivates every Rx to decode as large a fraction of the interference that it can and precode only for the minimal residual uncertainty.

In contrast to a BC with $2$ Rxs, the interference $W$ on a BC with $3$ Rxs is a \textit{bivariate} function of $V_{2},V_{3}$ - the signals of the other Rxs (Fig.~\ref{Fig:InterferencOn3BC}). A joint design of the $V_{2}\mhyphen,V_{3}\mhyphen$codes endowing them with structure can enable Rx $1$ decode $W$ efficiently even while being unable to decode $V_{2}$ and $V_{3}$. We elaborate on this phenomena through chosen examples (Ex.~\ref{Ex:GenAdd3CQBCExample}, \ref{Ex:RotatedExampleWithAdd}) followed by a self contained discussion that informs us of the structure (Sec.~\ref{SubSec:KeyElements}) of a general coding scheme. 

In this article, we present three inner bounds (Thms.~\ref{Thm:3CQBCCodingTheorem1}, \ref{SubSec:StepICodingTheorem} and \ref{Sec:Step2}). The former two illustrate several of the new elements - code structure, decoding rule, design and analysis of a simultaneous decoding POVM. Let us now elaborate on novelty involved in the designed coding strategies and analyzing their performances. The coding strategy in Sec.~\ref{Sec:AchRegionI} equips each user with just one codebook. Rx $1$ decodes a bivariate function of the interference while Rxs $2$ and $3$ only decode into their codebooks. A general coding scheme for a $3-$CQBC must permit precoding and message splitting to enable each Rx decode both univariate and bivariate components of signals intended for the other Rxs. As elaborated in the context of a $3\mhyphen$CBC \cite[Sec.~IV]{201804TIT_PadPra}, this involves multiple codebooks. Specifically, this will involve design and analysis of a simultaneous decoding POVM. Design and analysis of a simultaneous decoding POVM remained an important stumbling block in network quantum information theory since its recognition by Winter in \cite{200107TIT_Win}. Recently, discovering new ideas of \textit{tilting, smoothing and augmentation} (TSA) and new analysis techniques, Sen \cite{202103SAD_Sen,202102SAD_Sen} has presented an approach to design and analyze simultaneous decoding. Combining our coset code strategy with Sen's TSA \cite{202103SAD_Sen,202102SAD_Sen}, we design and analyze performance of a coset code strategy involving multiple codebooks. This enables all the Rxs decode parts of the interference and peel them off, thereby overcome the rate loss if they were precoded for. Thm.~\ref{SubSec:StepICodingTheorem} designs and analyzes a strategy involving each Rx decode only a bivariate interference component. Thm.~\ref{Thm:Step2} goes further and characterizes an inner bound that combines the conventional unstructured IID code strategy with the coset code strategy. As  Prop.~\ref{Prop:OutperformsForQuantExample} states, the inner bounds derived in this article can strictly enlarge the largest proven achievable via conventional unstructured IID codes.
\begin{figure}
    \centering
    \includegraphics[width=3.8in]{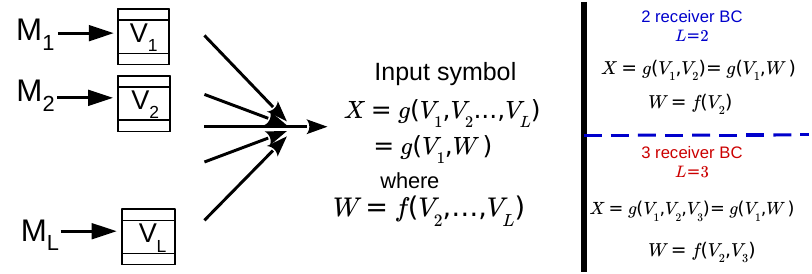}
    \caption{From Rx $1$'s perspective, the interference $W$ it experiences is a specific combination of the other user codewords. While on $2\mhyphen$user BC, $W$ is a univariate function of $V_{2}$, on a $3\mhyphen$user, $W$ is a bi-variate function of $V_{2},V_{3}$}
    \label{Fig:InterferencOn3BC}
\end{figure}

\section{Preliminaries and Problem Statement}
\label{Sec:Preliminaries}
For $n\in \mathbb{N}$, $[n] \define \left\{1,\cdots,n \right\}$. For a Hilbert space $\mathcal{H}$, $\mathcal{L}(\mathcal{H}),\mathcal{P}(\mathcal{H})$ and $\mathcal{D}(\mathcal{H})$ denote the collection of linear, positive and density operators acting on $\mathcal{H}$ respectively. 
We let an \underline{underline} denote an appropriate aggregation of objects. For example, $\ulineCalV \define \CalV_{1}\times \CalV_{2} \times \CalV_{3}$ and in regards to Hilbert spaces $\mathcal{H}_{Y_{i}}: i \in [3]$, we let $\mathcal{H}_{\ulineY} \define \otimes_{i=1}^{3}\mathcal{H}_{Y_{i}}$. For $j \in \{2,3\}$, $\msout{j}$ denotes the complement index, i.e., $\{ j,\msout{j}\} = \{2,3\}$. We dopt the notion of typicality, typical subspaces and typical projectors as in \cite[Chap.~15]{BkWilde_2017}. Specifically, if the collection $(\rho_{x} \in \CalD(\CalH): x \in \CalX)$ is a collection of density operators with spectral decompositions $\rho_{x}=\sum_{y \in \CalY}p_{Y|X}(y|x)\ketbra{e_{y|x}}$ for all $x \in \CalX$ satisfying $\braket{e_{y|x}}{e_{\haty|x}} = \delta_{y\haty}$ for all $x \in \CalX$ and $\rho=\sum_{x \in \CalX}p_{X}(x)\rho_{x}$ has spectral decomposition $\rho=\sum_{y \in \CalY}q(y)\ketbra{f_{y}}$, then 
\begin{eqnarray}
\label{Eqn:TypicalProjector}
\pi^{Y}=\sum_{y^{n}\in\CalY^{n}}\bigotimes_{t=1}^{n}\ketbra{f_{y_{t}}}\mathds{1}_{\{ y^{n} \in T_{\eta}^{n}(q)\}}, ~ \pi_{x^{n}} \define \sum_{y^{n} \in \CalY^{n}}\bigotimes_{t=1}^{n}\ketbra{f_{y_{t}}}\mathds{1}_{\{ (x^{n},y^{n}) \in T_{\eta}^{n}(p_{X}p_{Y|X})\}}, 
\end{eqnarray}
where $T_{\eta}^{n}(q) \subseteq \CalY^{n}$ and $T_{\eta}^{n}(p_{X}p_{Y|X}) \subseteq\CalX^{n}\times \CalY^{n}$ are the typical subsets in $\CalX^{n}$ and $\CalX^{n}\times \CalY^{n}$ with respect to $q_{Y}$ and $p_{X}p_{Y|X}$ respectively. We abbreviate conditional and unconditional typical projector as C-Typ-Proj and U-Typ-Proj respectively.
\begin{remark}
 \label{Rem:TypicalProjector}
While the conditional typical projector defined in \eqref{Eqn:TypicalProjector} is not identical to that defined in \cite[Defn.~15.2.3]{BkWilde_2017}, it is functionally equivalent and yields all the usual typicality bounds in \cite[Chap.~15]{BkWilde_2017}. In particular, in contrast to summing $y^{n}$ over the conditional typical subset $T_{\eta}^{n}(p_{Y|X}|x^{n})$, we have summed over all $y^{n}$ for which $(x^{n},y^{n})$ is an element of the jointly typical set $T_{\eta}^{n}(p_{X}p_{Y|X})$. One consequence of this is that $\pi_{x^{n}} = 0$, the zero projector, whenever $x^{n} \notin T_{\eta}^{n}(p_{X})$.
\end{remark}

Consider a (generic) \textit{$3\mhyphen$CQBC} $(\rho_{x} \in \mathcal{D}(\mathcal{H}_{\ulineY}): x \in \CalX,\kappa)$ specified through (i) a finite set $\mathcal{X}$, (ii) three Hilbert spaces $\mathcal{H}_{Y_{1}}: j \in [3]$, (iii) a collection $( \rho_{x} \in \mathcal{D}(\mathcal{H}_{\ulineY} ): x \in \CalX )$ and (iv) a cost function $\kappa :\mathcal{X} \rightarrow [0,\infty)$. The cost function is assumed to be additive, i.e., the cost of preparing the state $\otimes_{t=1}^{n}\rho_{x_{t}}$ is $\olinekappa^{n}(x^{n}) \define \frac{1}{n}\sum_{t=1}^{n}\kappa(x_{t})$. Reliable communication on a $3\mhyphen$CQBC entails identifying a code. 
\begin{definition}
A \textit{$3\mhyphen$CQBC code} $c=(n,\ulineCalM,e,\ulinelambda)$ consists of three (i) message index sets $[\mathcal{M}_{j}]: j \in [3]$, (ii) an encoder map $e: [\mathcal{M}_{1}] \times [\mathcal{M}_{2}] \times [\mathcal{M}_{3}] \rightarrow \mathcal{X}^{n}$ and (iii) POVMs $\lambda_{j} \define \{ \lambda_{j,m_{j}}: \mathcal{H}_{Y_{j}}^{\otimes n} \rightarrow \mathcal{H}_{Y_{j}}^{\otimes n} : m_{j} \in [\mathcal{M}_{j}] \} : j \in [3]$. The average probability of error of the $3\mhyphen$CQBC code $(n,\ulineCalM,e,\ulinelambda)$ is
\begin{eqnarray}
 \label{Eqn:AvgErrorProb}
 \overline{\xi}(e,\ulinelambda) \define 1-\frac{1}{\mathcal{M}_{1}\mathcal{M}_{2}\mathcal{M}_{3}}\sum_{\ulinem \in \ulineCalM}\tr(\lambda_{\ulinem}\rho_{c,\ulinem}^{\otimes n}).
 \nonumber
\end{eqnarray}
where $\lambda_{\ulinem} \define \otimes_{j=1}^{3}\lambda_{j,m_{j}}$, $\rho_{c,\ulinem}^{\otimes n} \define \otimes_{t=1}^{n}\rho_{x_{t}}$ where $(x_{t}:1\leq t \leq n) = x^{n}(\ulinem) \define e(\ulinem)$. Average cost per symbol of transmitting message $\ulinem \in \ulineCalM \in \tau(e|\ulinem) \define \olinekappa^{n}(e(\ulinem))$ and the average cost per symbol of $3\mhyphen$CQBC code is $\tau(e) \define \frac{1}{|\ulineCalM|}\sum_{\ulinem \in \ulineCalM}\tau(e|\ulinem)$.
\end{definition}
\begin{definition}A rate-cost quadruple $(R_{1},R_{2},R_{3},\tau) \in [0,\infty)^{4}$ is \textit{achievable} if there exists a sequence of $3\mhyphen$CQBC codes $(n,\ulineCalM^{(n)},e^{(n)},\ulinelambda^{(n)})$ for which $\displaystyle\lim_{n \rightarrow \infty}\overline{\xi}(e^{(n)},\ulinelambda^{(n)}) = 0$,
\begin{eqnarray}
 \label{Eqn:3CQICAchievability}
 \lim_{n \rightarrow \infty} n^{-1}\log \mathcal{M}_{j}^{(n)} = R_{j} :j \in [3], \mbox{ and }\lim_{n \rightarrow \infty} \tau(e^{(n)}) \leq \tau .
 \nonumber
\end{eqnarray}
The capacity region $\mathcal{C}$ of the $3\mhyphen$CQBC $(\rho_{x} \in \mathcal{D}(\mathcal{H}_{\ulineY}): x \in \CalX)$ is the set of all achievable rate-cost vectors.
\end{definition}

\section{Coset codes in Broadcast Communication}
\label{Sec:IdeaViaExample}
The main ideas and broad structure of the proposed coding scheme is illustrated through our discussion for Ex.~\ref{Ex:GenAdd3CQBCExample}, \ref{Ex:RotatedExampleWithAdd}. We begin with a classical example which provides a pedagogical step to present the non-commuting Ex.~\ref{Ex:RotatedExampleWithAdd}.
\begin{example}
 \label{Ex:GenAdd3CQBCExample}
Let $\ulineCalX = \CalX_{1}\times \CalX_{2} \times \CalX_{3}$ denote the input set with $\CalX_{j} = \{0,1\}$ for $j \in [3]$. For $b \in \{0,1\}$, let $\sigma_{b}(\eta) \define (1-\eta)\ketbra{b}+ \eta\ketbra{1-b}$. For $\ulinex = (x_{1},x_{2},x_{3}) \in \ulineCalX$, let $\rho_{\ulinex} = \sigma_{x_{1}\oplus x_{2} \oplus x_{3}}(\epsilon) \otimes \sigma_{x_{2}}(\delta) \otimes \sigma_{x_{3}}(\delta)$. The symbol $X_{1}$ is costed via a Hamming cost function i.e., the cost function $\kappa:\ulineCalX \rightarrow \{0,1\}$ is given by $\kappa(\ulinex) = x_{1}$. The average cost constraint $\tau \in (0,\frac{1}{2})$ satisfies $\tau * \epsilon < \delta$ and $h_{b}(\delta) < \frac{1+h_{b}(\tau * \epsilon)}{2}$.
\end{example}
Since $\rho_{\ulinex} : \ulinex \in \ulineCalX$ are commuting, Ex.~\ref{Ex:GenAdd3CQBCExample} can be identified via a $3\mhyphen$user classical BC (Fig.~\ref{Fig:3DBCInExample}) equipped with input set $\ulineCalX$ and output sets $\CalY_{1}=\CalY_{2}=\CalY_{3}=\{0,1\}$. Its input $X=(X_{1},X_{2},X_{3}) \in \ulineCalX$ and outputs $Y_{j}: j \in [3]$ are related via $Y_{1}=X_{1}\oplus X_{2} \oplus X_{3}\oplus N_{1}, Y_{j}=X_{j} \oplus N_{j}$ where $N_{1}, N_{2},N_{3}$ are mutually independent Ber$(\epsilon)$, Ber$(\delta)$, Ber$(\delta)$ RVs respectively. We focus our study on the following question. \textit{What is the maximum achievable rate for Rx $1$ while Rxs $2$ and $3$ are fed at their respective capacities $C_{2}\define C_{3}\define 1-h_{b}(\delta)$?}

\begin{figure}
    \centering
\includegraphics[width=2in]{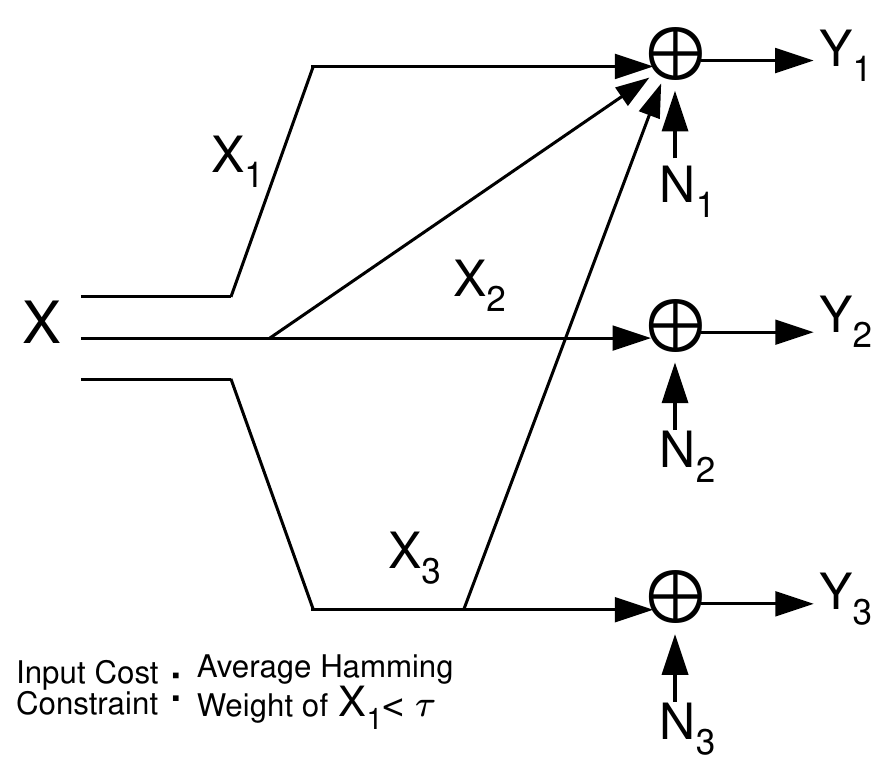}
    \caption{Equivalent Classical $3\mhyphen$user BC of Ex.~\ref{Ex:GenAdd3CQBCExample}.}
    \label{Fig:3DBCInExample}
\end{figure}
Since Rxs $2$ and $3$ can be fed information only through $X_{2}$ and $X_{3}$ we are forced to choose $X_{2},X_{3}$ to be independent Ber$(\frac{1}{2})$ RVs throughout. Observe that Rx $1$ experiences an interference $X_{2}\oplus X_{3}$ which is now a Ber$(\frac{1}{2})$ RV. Furthermore, we recall that the average Hamming weight of $X_{1}$ is constrained to $\tau < \frac{1}{2}$. This has two imports. Firstly, the Tx cannot cancel off this interference that Rx $1$ experiences. Secondly, the maximum rate achievable for Rx $1$ is $C_{1}\define h_{b}(\tau * \epsilon)-h_{b}(\epsilon)$, where $*$ denotes binary convolution. Can the interference mitigation strategies available via unstructured coding achieve a rate $C_{1}$ for Rx $1$? In \cite[Sec.~III]{201804TIT_PadPra}, we prove that no current known unstructured coding scheme can achieve a rate $C_{1}$ for Rx $1$. In the following, we explain the reasoning behind the deficiency of unstructured coding schemes.

The two interference mitigation strategies known are 1) Precoding for interference at the Tx, i.e. Marton's binning and 2) decoding interference at the Rx partly or wholly via Han-Kobayashi's message splitting. Since the unstructured coding schemes builds independent codes with independently picked codewords for every component of the signal, Rx $1$ is unable to decode $X_{2}\oplus X_{3}$ without decoding $X_{2},X_{3}$. A detailed proof in \cite[Sec.~III]{201804TIT_PadPra} establishes that unstructured coding schemes can enable the Rx $1$ decode only univariate functions $V_{2} = f_{2}(X_{2})$ and $V_{3}=f_{3}(X_{3})$ of $X_{2},X_{3}$ efficiently. If it has to decode a bi-variate function $g(X_{2},X_{3})$, the unstructured IID nature of the codes force the Rx to decode $X_{2}$ \textit{and} $X_{3}$.

Can Rx $1$ decode $X_{2},X_{3}$ and achieve a rate $C_{1}$ for itself? Let us compare $\mathscr{C}_{1}$, defined as the capacity of $\ulineX - Y_{1}$ link subject to $p_{X_{2}X_{3}}$ being uniform, with $C_{1}+C_{2}+C_{3}$. We have
\begin{eqnarray}
 \mathscr{C}_{1}\!=\!\sup_{p_{X_{1}}}I(Y_{1};\ulineX)\! = \!1\!-\!h_{b}(\epsilon)<\sum_{i=1}^{3}\!C_{i}\!=\!C_{1}\!+\!2\!-\!2h_{b}(\delta),\!\!\!
 \label{Eqn:UnstructuredConstraint}
\end{eqnarray}

\noindent where the inequality in \eqref{Eqn:UnstructuredConstraint} follows by substituting $C_{1},C_{2},C_{3}$ and noting that $h_{b}(\delta) < \frac{1+h_{b}(\tau * \epsilon)}{2}$ for this example. This precludes Rx $1$ from decoding $X_{2},X_{3}$ \textit{and} achieve a rate of $C_{1}$ for Rx $1$. Any unstructured coding scheme therefore leaves Rx $1$ with residual uncertainty of the interference $X_{2}\oplus X_{3}$. This forces the unstructured coding scheme to resort to the only other interference mitigating strategy - precoding for this residual uncertainty. In contrast to decoding interference, precoding in general entails a \textit{rate loss}. In other words, the rate achievable by Rx $1$ if it can decode the entire interference and subtract it off is strictly larger that what it can achieve if some component of the interference is precoded for by the Tx.
\begin{fact}
 \label{Fact:UnstructuredCodingSchemeSub-Optimal}
 Consider Ex.~\ref{Ex:GenAdd3CQBCExample}. The rate triple $(h_{b}(\tau * \epsilon)-h_{b}(\epsilon), 1-h_{b}(\delta),1-h_{b}(\delta))$ is not achievable via any current known unstructured coding scheme.
\end{fact}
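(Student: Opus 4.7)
The plan is to exploit that $\{\rho_{\ulinex}\}$ commute, so Ex.~\ref{Ex:GenAdd3CQBCExample} is equivalent to the classical $3\mhyphen$user DBC of Fig.~\ref{Fig:3DBCInExample} and the claim reduces to the classical sub-optimality result of \cite[Sec.~III]{201804TIT_PadPra}, whose backbone I would reproduce in three steps.

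\textbf{Setup of the forced input distribution.} Because Rx~$2$ (resp.~Rx~$3$) receives only $X_2$ (resp.~$X_3$) through an independent BSC$(\delta)$ and attains its point-to-point capacity $C_2 = C_3 = 1-h_b(\delta)$, the per-letter marginals $p_{X_2}$ and $p_{X_3}$ are forced to be Ber$(1/2)$. Simultaneously decoding two independent rate-$(1-h_b(\delta))$ streams from independent BSC$(\delta)$ outputs leaves no rate slack for nontrivial correlation between the $X_2$- and $X_3$-layers, so the interference $X_2\oplus X_3$ that Rx~$1$ experiences is a genuinely bivariate Ber$(1/2)$ random variable.

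\textbf{Two interference-mitigation options.} Since $X_1$ is Hamming-costed at $\tau<1/2$, Rx~$1$ cannot cancel $X_2\oplus X_3$ through $X_1$; treating it as noise caps the achievable rate at $\sup_{p_{X_1}:\,\EE\kappa\le\tau}I(X_1;X_1\oplus X_2\oplus X_3\oplus N_1) = h_b(\tau*\epsilon)-h_b(\epsilon) = C_1$. To attain $R_1=C_1$, Rx~$1$ must dispose of the interference either by (i) decoding and peeling it off or (ii) having it precoded at the Tx via Marton binning. The sum-rate bound \eqref{Eqn:UnstructuredConstraint}, namely $\mathscr{C}_1 = 1-h_b(\epsilon) < C_1+C_2+C_3$ (which is exactly where the parameter assumption $h_b(\delta)<(1+h_b(\tau*\epsilon))/2$ is used) forbids Rx~$1$ from simultaneously decoding $X_2^n$ and $X_3^n$ in full.

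\textbf{The hard step.} The remaining possibility, that Rx~$1$ decodes only the bivariate combination $X_2\oplus X_3$ without decoding each component, is where the unstructured nature of the codebooks is decisive, and is the step I expect to require the most care. In any Marton/Han-Kobayashi scheme the $X_j$-layer codebook is IID-generated from a single-letter $p_{X_j}$ and every auxiliary $V_j$ intended to assist Rx~$1$ is an IID sequence drawn from $p_{V_j|X_j}$; in particular, each auxiliary depends only on a \emph{univariate} function of one user's symbol. I would follow \cite[Sec.~III]{201804TIT_PadPra} to argue that the joint-typicality decoder at Rx~$1$ cannot resolve $X_2^n\oplus X_3^n$ more accurately than a pair $f_2(X_2^n),f_3(X_3^n)$ of univariate quantities, so any strictly positive bivariate information extracted at Rx~$1$ necessarily consumes $(X_2,X_3)$-layer rate and contradicts $R_2 = R_3 = 1-h_b(\delta)$. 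Hence option~(i) fails, while option~(ii) incurs a strict Marton rate-loss equal to $H(X_2\oplus X_3)=1$ bit against any auxiliary $U_1$ with $\EE\kappa(X_1)\le\tau$, so $R_1<C_1$. Together these rule out every unstructured scheme.
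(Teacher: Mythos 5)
Your overall route is the same as the paper's: use commutativity to reduce Ex.~\ref{Ex:GenAdd3CQBCExample} to the classical BC of Fig.~\ref{Fig:3DBCInExample}, note that feeding Rxs $2,3$ at capacity forces $X_{2},X_{3}$ to be independent Ber$(\frac12)$ so that Rx $1$ faces a Ber$(\frac12)$ interference $X_{2}\oplus X_{3}$, rule out joint decoding of $(X_{2},X_{3})$ at Rx $1$ via the sum-rate comparison \eqref{Eqn:UnstructuredConstraint}, argue that IID codebooks only let Rx $1$ decode univariate functions $f_{2}(X_{2}),f_{3}(X_{3})$, and conclude that the residual interference must be precoded for at a strict rate loss; like the paper, you delegate the rigorous versions of the last two steps to \cite[Sec.~III]{201804TIT_PadPra}. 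So the skeleton is sound and matches the paper's reasoning.

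However, two of your quantitative assertions are wrong as written. First, you claim that treating the interference as noise ``caps the achievable rate at $\sup_{p_{X_{1}}}I(X_{1};X_{1}\oplus X_{2}\oplus X_{3}\oplus N_{1})=h_{b}(\tau*\epsilon)-h_{b}(\epsilon)$.'' With $X_{2}\oplus X_{3}\sim$ Ber$(\frac12)$ independent of $X_{1}$, the aggregate noise $X_{2}\oplus X_{3}\oplus N_{1}$ is Ber$(\frac12)$, so this mutual information is $0$, not $C_{1}$; taken literally your equality would say the triple is trivially achievable, the opposite of the Fact. What you need (and what the paper states) is that $C_{1}=h_{b}(\tau*\epsilon)-h_{b}(\epsilon)$ is the rate of the \emph{interference-free} channel $X_{1}\oplus N_{1}$ under the Hamming constraint $\tau$, i.e., an upper bound attainable only if the interference is fully removed. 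Second, your claim that precoding ``incurs a strict Marton rate-loss equal to $H(X_{2}\oplus X_{3})=1$ bit'' cannot be right: that would force $R_{1}<C_{1}-1<0$. The actual content of the hard step is that the binary Gelfand--Pinsker/Marton rate $\max\,[I(U_{1};Y_{1})-I(U_{1};X_{2}\oplus X_{3})]$ against a uniform known interference, under the cost $\EE[\kappa(X_{1})]\le\tau<\frac12$, is \emph{strictly smaller} than $h_{b}(\tau*\epsilon)-h_{b}(\epsilon)$ (no lossless binary dirty-paper coding), and quantifying this strict gap, together with showing that partial (univariate) decoding of the interference does not close it, is precisely the nontrivial argument of \cite[Sec.~III]{201804TIT_PadPra}; it cannot be summarized by a one-bit loss. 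With those two statements corrected, your proposal is the paper's argument.
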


We now present a simple linear coding scheme that can achieve the rate triple $(C_{1},C_{2},C_{3})$. In contrast to building independent codebooks for Rx $2$ and $3$, suppose we choose cosets $\lambda_{2},\lambda_{3}$ of a \textit{common linear code} of rate $1-h_{b}(\delta)$ to communicate to both Rxs $2$ and $3$. Indeed, there exists cosets of a linear code that achieve capacity of a binary symmetric channel. Since the sum of two cosets is another coset of the \textit{same linear} code, the interference patterns seen by Rx $1$ are constrained to another coset $\lambda_{2}\oplus \lambda_{3}$ of the same linear code of rate $1-h_{b}(\delta)$. Instead of attempting to decode the \textit{pair} of Rx $2,3$'s codewords, suppose Rx $1$ decodes its the corresponding sum $\lambda_{2}\oplus \lambda_{3}$. Specifically, suppose Rx ${1}$ attempts to jointly decode it's codeword and the sum of Rx $2$ and $3$'s codewords, the latter being present in $\lambda_{2}\oplus \lambda_{3}$, then it would be able to achieve a rate of $C_{1}$ for itself if 
\begin{eqnarray}
 \mathscr{C}_{1}=1-h_{b}(\epsilon) > C_{1}+ \max\{C_{2},C_{3}\} = C_{1}+1-h_{b}(\delta).\!\!\!
 \label{Eqn:LinCodeCondition}
\end{eqnarray}

\noindent Substituting $C_{1}$, note that \eqref{Eqn:LinCodeCondition} is satisfied since $\tau * \epsilon < \delta$ for Ex.~\ref{Ex:GenAdd3CQBCExample}. Thus the coset code strategy permits Rx $1$ achieve a rate $C_{1}$ even while Rx $2$ and $3$ achieve rates $C_{2},C_{3}$ respectively.
\begin{example}
 \label{Ex:RotatedExampleWithAdd}
Let $\ulineCalX = \CalX_{1}\times \CalX_{2} \times \CalX_{3}$ denote the input set with $\CalX_{j} = \{0,1\}$ for $j \in [3]$. For $b \in \{0,1\}$, let $\sigma_{b}^{1}(\eta) \define (1-\eta)\ketbra{b}+ \eta\ketbra{1-b}$, and for $j=2,3$, let $\sigma_{0}^{j}=\ketbra{0}$ and $\sigma_{1}^{j}=\ketbra{v_{\theta_{j}}}$ where $\ket{v_{\theta_{j}}} = [\cos\theta_{j}~\sin\theta_{j}]^{T}$. For $\ulinex = (x_{1},x_{2},x_{3}) \in \ulineCalX$, let $\rho_{\ulinex} = \sigma_{x_{1}\oplus x_{2} \oplus x_{3}}^{1}(\epsilon) \otimes \sigma_{x_{2}}^{2} \otimes \sigma_{x_{3}}^{3}$. The symbol $X_{1}$ is costed via a Hamming cost function i.e., the cost function $\kappa:\ulineCalX \rightarrow \{0,1\}$ is given by $\kappa(\ulinex) = x_{1}$. Let $0 < \theta_{2}< \theta_{3}<\frac{\pi}{2}$, $\tilde{\delta}_{j} = \frac{1+\cos\theta_{j}}{2}$ and $\tau,\epsilon$ satisfy
\begin{eqnarray}
 \label{Eqn:ConditionForRotatedAdditionExample}
 h_{b}(\tau*\epsilon)+h_{b}(\tilde{\delta}_{2})+h_{b}(\tilde{\delta}_{3}) \stackrel{(a)}{>} 1 \stackrel{(b)}{>} h_{b}(\tau*\epsilon)+h_{b}(\tilde{\delta}_{3}).
\end{eqnarray}
\end{example}
Comparing Exs.~\ref{Ex:GenAdd3CQBCExample} and Ex.~\ref{Ex:RotatedExampleWithAdd}, note that the latter differs from the former only in components of Rxs $2$ and $3$. Moreover, it can be verified that for no choice of basis, is the collection $\rho_{\ulinex}: \ulinex \in \ulineCalX$ commuting. From \cite[Ex.5.6]{BkHolevo_2019}, it is also clear that for $j=2,3$, the capacity of user $j$ is $C_{j}=h_{b}(\tilde{\delta}_{j})$ and is achieved by choosing $p_{X_{j}}$ uniform. Since $\frac{\pi}{2}>\theta_{3}>\theta_{2}>0$, we have $C_{3}>C_{2}$. If we subtract $h_{b}(\epsilon)$ to the two inequalities in \eqref{Eqn:ConditionForRotatedAdditionExample}, we obtain $C_{1}+C_{2}+C_{3}\stackrel{(a)}{>}\mathscr{C}_{1}\stackrel{(b)}{>}C_{1}+\max\{C_{2},C_{3}\}$, where $C_{1},\mathscr{C}_{1}$ are as defined for Ex.~\ref{Ex:GenAdd3CQBCExample}. From our discussion for Ex.~\ref{Ex:GenAdd3CQBCExample} and inequality $(a)$, it is clear that unstructured codes cannot achieve the rate triple $(C_{1},C_{2},C_{3})$. If we can design capacity achieving linear codes $\lambda_{2},\lambda_{3}$ for Rxs $2,3$ respectively in such a way that $\lambda_{2}$ is a \textit{sub-coset} of $\lambda_{3}$, then the interference patterns are contained within $\lambda_{2}\oplus \lambda_{3}$ which is now a coset of $\lambda_{3}$. Rx $1$ can therefore decode into this collection which is of rate atmost $C_{3}=h_{b}(\tilde{\delta}_{3})$. Inequality $(b)$ is analogous to the condition stated in \eqref{Eqn:LinCodeCondition} suggesting that coset codes can be achieve rate triple $(C_{1},C_{2},C_{3})$.

A reader may suspect whether gains obtained above via linear codes are only for `additive' channels. Our study of \cite[Ex.~2]{201804TIT_PadPra} and other settings \cite{201506ISIT_PadPra} analytically prove that we can obtain gains for non-additive scenarios too. To harness such gains, it is necessary to design coding schemes that achieve rates for arbitrary distributions as proved in Thms.~\ref{Thm:3CQBCCodingTheorem1}.

\subsection{Key Elements of a Generalized Coding Scheme}
\label{SubSec:KeyElements}
The codebooks of users $2$ and $3$ being cosets of a common linear code is central to the containment of the sum $X_{2}\oplus X_{3}$. In general users $2$ and $3$ may require codebooks of different rates. We therefore enforce that the smaller of the two codes be a sub-coset of the larger code. Secondly, in Ex.~\ref{Ex:GenAdd3CQBCExample}, users $2$ and $3$ could achieve capacity by choosing codewords that were uniformly distributed, i.e., typical with respect to the uniform distribution on $\{0,1\}$. Since all codewords of a random linear code are typical with respect to the uniform distribution, we could achieve capacity for users $2$ and $3$ by utilizing all codewords from the random linear code. In general, to achieve rates corresponding to non-uniform distributions, we have to enlarge the linear code beyond the desired rate and partition it into bins so that each bin contains codeword of the desired type. This leads us to a \textit{partitioned coset code} (PCC) (Defn.~\ref{Defn:PCC}) that is obtained by partitioning a coset code into bins that are individually unstructured. 
\section{Decoding Interference at a Single Rx}
\label{Sec:AchRegionI}
We provide a pedogogical description of our main results. In Sec.~\ref{SubSec:Rx1DecodesSumOfPvtCodebooks}, we present a coding scheme involving three codebooks. Specifically, each user is equipped with one codebook. Rx $1$ decodes the sum of user $2$ and $3$'s codewords. In Sec.~\ref{SubSec:DecSumOfPublicCdwrds}, we equip users $2$ and $3$ with two codebooks each, one of which is a private codebook whose codeword is decoded only by the corresponding Rx.

\subsection{Decoding Sum of Private Codewords}
\label{SubSec:Rx1DecodesSumOfPvtCodebooks}
We present our first inner bound to the capacity region of a $3\mhyphen$CQBC. In the sequel, we adopt the following notation. Suppose $S_{2},S_{3}$ and $B_{1}$ are real numbers, $U_{1},V_{2},V_{3}$ are random variables. For any $A \subseteq \{2,3\}$ and any $D \subseteq \{1\}$, let $S_{A} \define \sum_{a \in A}S_{a}$, $T_{A} \define \sum_{a \in A}T_{a}$, $B_{D} \define \sum_{d \in D}B_{d}$, $H(V_{A}) \define H(V_{a}: a \in A)$, $H(V_{A},U_{D}) = H(V_{a}: a \in A, U_{d}: d \in D)$ with the empty sum defined as $0$ throughout.\footnote{The term $I(U_{1};V_{2}\oplus_{q}V_{3})$ in bound \eqref{Eqn:ChnlCodingBnd3} was not added in our submission. This omission was an error. However, the evaluation of this region for the Examples \ref{Ex:GenAdd3CQBCExample}, \ref{Ex:RotatedExampleWithAdd} does not change since $U_{1},V_{2},V_{3}$ are chosen independent for achieving the rate triples claimed in those example .}
\begin{theorem}
 \label{Thm:3CQBCCodingTheorem1}
 A rate-cost quadruple $(\ulineR,\tau)$ is achievable if there exists a finite set $\CalU_{1}$, a finite field $\CalV_{2}=\CalV_{3}=\CalW=\fieldq$ of size $q$, real numbers $B_{1}\geq 0,S_{j}\geq 0,T_{j}=\frac{R_{j}}{\log q}\geq 0:j=2,3$ and a PMF $p_{XU_{1}V_{2}V_{3}}$ on $\CalX\times\CalU_{1}\times\CalV_{2}\times\CalV_{3}$ wrt which
 \begin{eqnarray}
 \label{Eqn:SrcCodingBound}
(S_{A}-T_{A})\log q + B_{D} &>& \sum_{d \in D}\!H(U_{d})+\sum_{a\in A}\!H(V_{a})-H(V_{A},U_{D})+|A|\log q -\sum_{a\in A}\!H(V_{a})
\\
\label{Eqn:SrcCodingBoundAlg}
\max\left\{S_{2}\log q+B_{D},S_{3}\log q+B_{D}  \right\}
&>& \log q  - \min_{\theta \in 
\fieldq\setminus\{0\}}H(V_{2}\oplus \theta V_{3}|U_{D}),
\\
\label{Eqn:ChnlCodingBnd1}
R_{1}+B_{1} &<& I(Y_{1},V_{2}\oplus_{q}V_{3};U_{1}), \\
\label{Eqn:ChnlCodingBnd2}
\max\left\{S_{2}\log q,S_{3}\log q \right\}
  &<&    I(Y_{1},U_{1};V_{2}\oplus_{q}V_{3})+\log q - H(V_{2}\oplus_{q} V_{3}),\\
   \label{Eqn:ChnlCodingBnd3}
  \max\left\{ 
R_{1}+B_{1}+S_{2}\log q,R_{1}+B_{1}+S_{3}\log q
  \right\}
  &<&   I(Y_{1};V_{2}\oplus_{q}V_{3},U_{1})+I(U_{1};V_{2}\oplus_{q}V_{3})\log q - H(V_{2}\oplus_{q} V_{3})  \\
  \label{Eqn:ChnlCodingBnd4}
  S_{j}\log q &<& I(Y_{j};V_{j})+\log q - H(V_{j}) : j =2,3,
 \end{eqnarray}
 for all $A \subseteq\{2,3\}$, $D \subseteq\{1\}$, $\sum_{x \in \CalX}p_{X}(x)\kappa(x) \leq \tau$, where all the above information quantities are computed wrt state,
\begin{eqnarray}
\label{Eqn:TheoremState}
\sigma^{\ulineY XU_{1}V_{2}V_{3}W} = \sum_{\substack{x,u_{1},v_{2},v_{3},w}}p_{XU_{1}V_{2}V_{3}W}(x,u_{1},v_{1},v_{2},w)\rho_{x}\otimes  \ketbra{x~u_{1}~v_{2}~v_{3}~w}\mbox{ with}\\
p_{XU_{1}V_{2}V_{3}W}\left( {x,u_{1}, v_{2},v_{3},w} \right)=p_{XU_{1}V_{2}V_{3}}\left({x,u_{1}, v_{2},v_{3}}\right)\mathds{1}_{\{\substack{ w=  v_{2}\oplus_{q} v_{3}}\}}\nonumber
\end{eqnarray}
$\forall (x,u_{1},v_{2},v_{3},w) \in \CalX\times\CalU_{1}\times \CalV_{2}\times \CalV_{3}\times \CalW$.
 \end{theorem}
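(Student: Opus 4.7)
The plan is to fix a PMF $p_{XU_1V_2V_3}$ and real parameters $(B_1,S_2,S_3,T_2,T_3)$ satisfying the stated inequalities, and to analyze a layered random ensemble. For user $1$, generate an IID codebook $\{u_1^n(b_1,m_1):b_1\in[2^{nB_1}],\,m_1\in[\mathcal{M}_1]\}$ under $p_{U_1}^{\otimes n}$. For $j\in\{2,3\}$, generate a PCC: a uniformly random coset $\lambda_j\subseteq\mathbb{F}_q^n$ of rate $S_j\log q$ partitioned into $q^{nT_j}$ bins indexed by $m_j$, with the smaller of $\lambda_2,\lambda_3$ taken as a sub-coset of the larger so that $W^n:=V_2^n\oplus_q V_3^n$ is forced to lie in a single coset of rate $\max(S_2,S_3)\log q$. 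The encoder, given $(m_1,m_2,m_3)$, searches for indices $(b_1,a_2,a_3)$ such that $(u_1^n(b_1,m_1),v_2^n(a_2,m_2),v_3^n(a_3,m_3))$ is jointly $p_{U_1V_2V_3}$-typical, then draws $x^n$ symbol-by-symbol from $p_{X|U_1V_2V_3}$ and prepares $\otimes_{t=1}^n\rho_{x_t}$.

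The covering step is handled by a second-moment computation on the count of typical index triples $(b_1,a_2,a_3)$. The key inputs are the IID nature of the $u_1$-ensemble and the pairwise-independence-with-uniform-marginals of distinct codewords in a random coset. Enumerating which subset $A\subseteq\{2,3\}$ of coset-based variables and $D\subseteq\{1\}$ of the IID variable has been fixed yields the covering bounds \eqref{Eqn:SrcCodingBound}. The supplementary constraint \eqref{Eqn:SrcCodingBoundAlg} appears because each nontrivial $\mathbb{F}_q$-linear combination $V_2\oplus\theta V_3$ is forced to lie in a coset of rate only $\max(S_2,S_3)\log q$; controlling the variance of the typical-triple count under this extra dependence produces exactly the $\min_{\theta\neq 0}$ term.

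For the channel step, receivers $2$ and $3$ employ a square-root POVM over their own codebooks built from the conditional/unconditional typical projectors of \eqref{Eqn:TypicalProjector}, and the Hayashi-Nagaoka inequality converts the expected error into the ``correct-codeword'' trace plus $q^{nS_j}$ ``rival-codeword'' traces; standard typical-projector estimates (noting that a PCC of outer rate $S_j\log q$ effectively supplies $q^{nS_j-n(\log q-H(V_j))}$ useful codewords) deliver \eqref{Eqn:ChnlCodingBnd4}. Receiver $1$ uses a joint POVM that searches over pairs $(u_1^n(b_1,m_1),w^n)$ with $w^n$ ranging in the coset $\lambda_2\oplus_q\lambda_3$, its atomic projector being the $U_1W$-conditional $Y_1$-typical projector. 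Applying Hayashi-Nagaoka splits the error into three rival-pair classes --- wrong $U_1$ only, wrong $W$ only, and both wrong --- and bounding each class by the appropriate expected trace yields \eqref{Eqn:ChnlCodingBnd1}, \eqref{Eqn:ChnlCodingBnd2}, \eqref{Eqn:ChnlCodingBnd3}, respectively.

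The main obstacle is that the PCC ensemble is not IID: each message $m_j$ indexes a bin of $q^{n(S_j-T_j)}$ codewords, and the rival set at receiver $1$ is a coset rather than a product ensemble. This creates cross terms in the Hayashi-Nagaoka bound that do not separate into IID averages. The way through is to exploit pairwise independence and uniformity of distinct coset codewords to reduce the aggregate expectations to single-codeword ones (up to prefactors $q^{nS_j}$ and $q^{n\max(S_2,S_3)}$), and then to impose a list-threshold truncation to suppress the combinatorial blow-up of cross terms that would otherwise appear post-Fourier-Motzkin. Combining the covering bounds \eqref{Eqn:SrcCodingBound}, \eqref{Eqn:SrcCodingBoundAlg} with the channel bounds \eqref{Eqn:ChnlCodingBnd1}--\eqref{Eqn:ChnlCodingBnd4} then yields the claimed achievable region.
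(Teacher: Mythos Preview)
Your proposal is correct and follows essentially the same route as the paper: nested PCCs for users $2,3$ with an IID $U_{1}$-codebook, Marton-style joint-typicality encoding, a second-moment covering argument over subsets $A\subseteq\{2,3\}$ and $D\subseteq\{1\}$ (with the extra $\min_{\theta\neq 0}$ constraint from linear dependencies), square-root decoding at Rxs $2,3$, a joint decoder at Rx $1$ over $(U_{1}^{n},W^{n})$ pairs with Hayashi--Nagaoka splitting into the three rival classes, and the list-threshold event to suppress the binning exponent. The only point you glossed over is the precise form of Rx $1$'s atomic projector: the paper uses the layered construction $\pi^{Y_{1}}\pi_{u_{1}^{n}}\pi_{u_{1}^{n}w^{n}}\pi_{u_{1}^{n}}\pi^{Y_{1}}$ from Fawzi et al.\ rather than a bare $U_{1}W$-conditional projector, which is what makes the $T_{12},T_{13},T_{14}$ bounds go through cleanly, but this is a technical refinement rather than a different strategy.
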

 \begin{prop}
  \label{Prop:OutperformsForQuantExample}
Consider Ex.~\ref{Ex:RotatedExampleWithAdd}. The rate triple $(h_{b}(\tau*\epsilon)-h_{b}(\epsilon),h_{b}(\tilde{\delta_{2}}),h_{b}(\tilde{\delta_{3}}) )$ is achievable if \eqref{Eqn:ConditionForRotatedAdditionExample}$(b)$ holds. This triple is not achievable via IID codes if \eqref{Eqn:ConditionForRotatedAdditionExample}$(a)$ holds.
  \end{prop}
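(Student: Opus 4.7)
The proposition has two independent claims: achievability via Thm.~\ref{Thm:3CQBCCodingTheorem1} under \eqref{Eqn:ConditionForRotatedAdditionExample}(b), and non-achievability via any IID (unstructured) coding scheme under \eqref{Eqn:ConditionForRotatedAdditionExample}(a). I tackle them in that order.

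\emph{Achievability.} The plan is to specialize Thm.~\ref{Thm:3CQBCCodingTheorem1} to $q=2$ with mutually independent auxiliaries $U_{1} \sim \mbox{Ber}(\tau)$ and $V_{2},V_{3} \sim \mbox{Ber}(\tfrac{1}{2})$, and the deterministic encoding $X_{1}=U_{1}, X_{2}=V_{2}, X_{3}=V_{3}$. Because $\sigma_{b}^{1}(\epsilon)$ is diagonal in $\{\ket{0},\ket{1}\}$, the output at Rx 1 reduces to the classical random variable $Y_{1}=U_{1}\oplus W \oplus N_{1}$ with $W \define V_{2}\oplus V_{3} \sim \mbox{Ber}(\tfrac{1}{2})$ and $N_{1}\sim \mbox{Ber}(\epsilon)$, while by \cite[Ex.~5.6]{BkHolevo_2019} the Holevo information at Rxs $2,3$ is $I(Y_{j};V_{j}) = h_{b}(\tilde{\delta}_{j})$. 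Substituting these quantities into \eqref{Eqn:SrcCodingBound}--\eqref{Eqn:ChnlCodingBnd4}, the source-coding inequalities collapse to $S_{j}>R_{j}$ for $j\in\{2,3\}$, while the channel-coding inequalities give $R_{1} < h_{b}(\tau*\epsilon)-h_{b}(\epsilon)$ together with $R_{j} < S_{j} < \min\{h_{b}(\tilde{\delta}_{j}),\, 1-h_{b}(\tau*\epsilon)\}$ for $j=2,3$. The binding feasibility requirement is $h_{b}(\tilde{\delta}_{3}) < 1-h_{b}(\tau*\epsilon)$, which is precisely \eqref{Eqn:ConditionForRotatedAdditionExample}(b) (the $j=2$ case follows from $h_{b}(\tilde{\delta}_{2}) < h_{b}(\tilde{\delta}_{3})$). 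Under (b) one may let $R_{j}\uparrow h_{b}(\tilde{\delta}_{j})$ with $S_{j}$ chosen in the prescribed range, and a standard diagonalization places the claimed triple in the closure of the achievable region.

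\emph{Non-achievability by IID codes.} The plan is to reduce to the classical argument of \cite[Sec.~III]{201804TIT_PadPra}. Since $\sigma_{0}^{j}=\ketbra{0}$ and $\sigma_{1}^{j}=\ketbra{v_{\theta_{j}}}$ are distinct pure states for $j\in\{2,3\}$, by \cite[Ex.~5.6]{BkHolevo_2019} the Holevo capacity $h_{b}(\tilde{\delta}_{j})$ has the \emph{unique} input maximizer $\mbox{Ber}(\tfrac{1}{2})$. Hence any sequence of IID codes achieving $R_{j}\to h_{b}(\tilde{\delta}_{j})$ at Rx $j$ must have the empirical distribution of $X_{j}$ converge to $\mbox{Ber}(\tfrac{1}{2})$, so that the induced interference $X_{2}\oplus X_{3}$ at Rx 1 is asymptotically $\mbox{Ber}(\tfrac{1}{2})$. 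Since Rx 1's channel $\sigma_{b}^{1}(\epsilon)$ is classical and identical to the one in Ex.~\ref{Ex:GenAdd3CQBCExample}, the converse of \cite[Sec.~III]{201804TIT_PadPra} transplants directly: an IID scheme permits Rx 1 to decode at most univariate functions of $X_{2},X_{3}$, so Rx 1 must either decode both $X_{2}$ and $X_{3}$---forcing the cut-set bound $C_{1}+C_{2}+C_{3}\leq \mathscr{C}_{1} = 1-h_{b}(\epsilon)$ that \eqref{Eqn:ConditionForRotatedAdditionExample}(a) explicitly violates---or precode for a nonzero component of $V_{2}\oplus V_{3}$, incurring via a Mrs.~Gerber-type estimate a strictly positive rate loss at Rx 1 that rules out $R_{1} = h_{b}(\tau*\epsilon)-h_{b}(\epsilon)$.

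The main technical obstacle is the non-achievability direction, where one must carefully transplant the fully classical converse of \cite[Sec.~III]{201804TIT_PadPra} to the present hybrid CQBC. Once uniqueness of the uniform Holevo-capacity-achieving input at Rxs 2 and 3 has pinned down the marginals of $X_{2},X_{3}$, the remaining work is to verify that the convergence of empirical input distributions is strong enough that the aggregated interference channel at Rx 1 is asymptotically indistinguishable from the classical $\mbox{Ber}(\tfrac{1}{2})$ interference of Ex.~\ref{Ex:GenAdd3CQBCExample}, so that no IID scheme can exploit residual non-uniformity to evade the Mrs.~Gerber bound invoked in Fact~\ref{Fact:UnstructuredCodingSchemeSub-Optimal}.
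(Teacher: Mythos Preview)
Your proof is correct and follows essentially the same approach as the paper. For achievability, both you and the paper specialize Thm.~\ref{Thm:3CQBCCodingTheorem1} to $q=2$ with $U_{1}=X_{1}\sim\mbox{Ber}(\tau)$, $V_{j}=X_{j}\sim\mbox{Ber}(\tfrac12)$ mutually independent, verify that the source-coding lower bounds \eqref{Eqn:SrcCodingBound}--\eqref{Eqn:SrcCodingBoundAlg} vanish (so $B_{1}\to 0$, $S_{j}\to T_{j}$ suffices), and check that the binding channel-coding constraint is \eqref{Eqn:ChnlCodingBnd3}, which under $R_{1}\to h_{b}(\tau\!*\!\epsilon)-h_{b}(\epsilon)$ reduces to $\max\{S_{2},S_{3}\}<1-h_{b}(\tau\!*\!\epsilon)$, i.e., exactly \eqref{Eqn:ConditionForRotatedAdditionExample}(b). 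The paper states the parameters directly at the boundary ($B_{1}=0$, $S_{j}=T_{j}$), while you are more careful in keeping the strict inequalities and invoking closure; the content is the same.

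For non-achievability, the paper simply refers to the proof of Fact~\ref{Fact:UnstructuredCodingSchemeSub-Optimal} in \cite[Sec.~III]{201804TIT_PadPra} and appeals to the ``similarity of the examples.'' Your outline supplies exactly the bridge that makes this appeal rigorous: the unique Holevo-capacity-achieving input for the pure-state binary channels at Rxs~$2,3$ forces $X_{2},X_{3}$ to be (asymptotically) uniform, after which Rx~$1$'s marginal channel is the purely classical additive channel of Ex.~\ref{Ex:GenAdd3CQBCExample} and the argument of \cite[Sec.~III]{201804TIT_PadPra} carries over verbatim. This is a refinement rather than a different route.
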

\noindent In order to prove the first statement of Prop.~\ref{Prop:OutperformsForQuantExample}, we identify a choice of parameters in statement of Thm.~\ref{Thm:3CQBCCodingTheorem1} and verify the bounds. Towards that end, consider $\CalU_{1}=\CalX_{1},\CalV_{j}=\CalX_{j}$, $X_{1}=U_{1}$, $X_{j}=V_{j}$ for $j=2,3$, $p_{X_{1}X_{2}X_{3}}(x_{1},x_{2},x_{3})=\frac{1-\tau}{4}$ if $x_{1}=0$  $p_{X_{1}X_{2}X_{3}}(x_{1},x_{2},x_{3})=\frac{\tau}{4}$ if $x_{1}=1$, $S_{j}=T_{j}=h_{b}(\tilde{\delta_{3}})$, $R_{1}=h_{b}(\tau*\epsilon)-h_{b}(\epsilon)$, $B_{1}=0$. We now verify the bounds in \eqref{Eqn:SrcCodingBound} - \eqref{Eqn:ChnlCodingBnd4}. Since $X_{1}=U_{1},X_{2}=v_{2}$ and $X_{3}=V_{3}$ are independent and moreover $H(V_{2})=H(V_{3})=\log 2$ and $q=2$, it can be verified that the lower bounds in \eqref{Eqn:SrcCodingBound} and \eqref{Eqn:SrcCodingBoundAlg} are $0$. This implies our choice $B_{1}=0$ and $S_{j}=T_{j}$ do not violate these bounds. With the choice for $p_{X_{1}}=p_{U_{1}}$, it can be verified that the upper bound on (i) $R_{1}$ in \eqref{Eqn:ChnlCodingBnd1} is $h_{b}(\tau*\epsilon)-h_{b}(\epsilon)$, (ii) $S_{3}\log q$ in \eqref{Eqn:ChnlCodingBnd2} is $1-h_{b}(\epsilon)$, (iii) $R_{1}+S_{3}\log q$ in \eqref{Eqn:ChnlCodingBnd3} is $1-h_{b}(\epsilon)$ and (iv) $S_{j}\log q$ in \eqref{Eqn:ChnlCodingBnd4} is $h_{b}(\tilde{\delta_{j}})$. Since \eqref{Eqn:ConditionForRotatedAdditionExample}$(b)$ holds, all of these bounds are satisfied. The proof of the second statement is based on our proof of Fact \ref{Fact:UnstructuredCodingSchemeSub-Optimal} that is provided in \cite[Sec.~III]{201804TIT_PadPra}. From the similarity of the examples, this can be verified.
\begin{remark}{\rm 
 The inner bound in Thm.~\ref{Thm:3CQBCCodingTheorem1} is characterized via additional code parameters $B_{1},S_{2},S_{3}$. To characterize an inner bound in terms of only $R_{1},R_{2},R_{3}$, we perform a variable elimination. Instead of using the Fourier Motzkin technique, we leverage the technique proposed in \cite{201108ISIT_ChaEmaZamAre} to perform variable elimination. In Corollary \ref{Cor:3CQBCCodingTheorem1}, we state the resulting inner bound obtained by eliminating variables $B_{1},S_{2}$ and $S_{3}$ in Thm.~\ref{Thm:3CQBCCodingTheorem1}.}
\end{remark}
\begin{proof}
We begin by outlining our techniques and identifying the new elements. The main novelty is in 1) the code structure (Sec.~\ref{SubSec:CodeStructure}) that involves \textit{jointly designed PCCs} built over $\fieldq$ for users $2$, $3$, and 2) the decoding rule (Sec.~\ref{SubSec:DecodingPOVMs}) wherein Rx $1$ decodes the \textit{sum} of users $2$ and $3$'s codewords to facilitate interference peeling. We adopt Marton's encoding (Sec.~\ref{SubSec:Encoding}) of identifying a jointly typical triple, point-to-point decoder POVMs for Rxs $2,3$ and the joint decoder proposed in \cite{201206TIT_FawHaySavSenWil} for Rx $1$. Since the random codewords of users $2,3$ are uniformly distributed and statistically correlated, we are forced to go beyond a `standard information theory proof' and carefully craft our analysis steps. We adapt the steps in \cite[Proof of Thm.~2]{201206TIT_FawHaySavSenWil} to analyze Rx $1$'s error and those in \cite[Proof of Thm.~2]{202107ISIT_AnwPadPra3CQIC} to analyze Rx $2,3$'s error. Specifically, a new element in our analysis is the use of `list threshold' event ($\mathcal{E}_{\ulinem}$) that enables us recover a binning exponent (Rem.~\ref{Rem:ListThresholdEvent}).

\subsection{Code Structure}
\label{SubSec:CodeStructure}
We design a coding scheme with parameters $R_{1},B_{1},S_{2},T_{2},S_{3},T_{3}$ that communicates at rates $R_{1}$ to Rx 1 and $R_{j}=T_{j}\log q$ to Rxs $j=2,3$ respectively. Let $\CalU_{1}$ be a finite set and $\CalV_{2}=\CalV_{3}=\fieldq$ be the finite field of cardinality $q$. Rx $1$'s message is communicated via codebook $\CalC_{1} = (u_{1}^{n}(m_{1},b_{1}) : m_{1} \in [\CalM_{1}] = [2^{nR_{1}}], b_{1} \in [2^{nB_{1}}])$ built over $\CalU_{1}$. As discussed in Sec.~\ref{SubSec:KeyElements}, Rx $2$ and $3$'s messages are communicated via PCCs that are characterized below.
 
\begin{definition}
\label{Defn:PCC}
A partitioned coset code (PCC) built over a finite field $\mathcal{V}=\fieldq$ comprises of (i) a generator matrix $g \in \CalV^{s \times n}$, (ii) a shift vector $d^{n} \in \CalV^{n}$ and (iii) a binning map $\iota : \CalV^{s} \rightarrow \CalV^{t}$. We let $v^{n}(a^{s}) \define a^{s}g \oplus_{q} d^{n} : a^{s} \in \CalV^{s}$ denote its codewords, $c(m^{t}) = \{a^{s}\in \CalV^{s} : \iota(a^{s}) = m^{t}\}$ denote the bin corresponding to message $m^{t}$, $S=\frac{s}{n}$ and $T = \frac{t}{n}$. When clear from context, we ignore superscripts $s$ in $a^{s}$ and $t$ in $m^{t}$. We refer to this as PCC $(n,S,T,g,d^{n},\iota)$ or PCC $(n,S,T,g,d^{n},c)$.
\end{definition}
For $j=2,3$, Rx $j$'s message is communicated via PCC $\lambda_{j}$. In Sec.~\ref{SubSec:KeyElements}, we noted that the smaller of these codes must be a sub-coset of the larger code. Without loss of generality assume $\lambda_{3}$ is larger than $\lambda_{2}$, we let $\lambda_{j}$ be the PCC $(n,S_{j},T_{j},g_{j},d_{j}^{n},\iota_{j})$ where $g_{3} = \left[ g_{2}^{T}~ g_{3/2}^{T} \right]^{T}$, $g_{2} \in \CalV^{s_{2} \times n}$ and $g_{3/2}  \in \CalV^{(s_{3}-s_{2}) \times n}$. This guarantees that the collection of vectors obtained by adding all possible pairs of codewords from $\lambda_{2}$ and $\lambda_{3}$ is the collection $\lambdapl \define (w^{n}(\apl) \define \apl g_{3}\oplus_{q}\dpl^{n} : \apl \in \CalV^{s_{3}} )$ where $\dpl \define d_{2}^{n}\oplus_{q} d_{3}^{n}$. For $j=2,3$ let $\CalM_{j} = q^{t_{j}}$ and $[\CalM_{j}] \define [q^{t_{j}}]$ denote Rx $j$ message index set.
 
\subsection{Encoding}
\label{SubSec:Encoding}
The triple $(u_{1}^{n}(m_{1},b_{1}): b_{1} \in [2^{nB_{1}}] ),c_{2}(m_{2}), c_{3}(m_{3})$ of bins correspond to the available choice of codewords that the encoder can use to communicate the message triple $\ulinem \define (m_{1},m_{2},m_{3}) \in [\ulineCalM]$. Let
 \begin{eqnarray}
\CalL(\ulinem) \define \left\{\!\!\!\begin{array}{c} (b_{1},a_{2},a_{3}) :  b_{1} \in [2^{nB_{1}}], a_{j} \in c_{j}(m_{j}): j =2,3 \\ \ni (u_{1}(m_{1},b_{1}),v_{2}(a_{2}),v_{3}(a_{3})) \in T_{\eta}(p_{U_{1}V_{2}V_{3}}),  \end{array}\!\!\!\right\}  
  \nonumber
 \end{eqnarray}
be a list of typical triples among this choice and $\alpha(\ulinem) \define |\CalL(\ulinem)|$. Let $(b_{1}(\ulinem),a_{2}(\ulinem),a_{3}(\ulinem) )$ be a triple chosen from $\CalL(\ulinem)$ if $\alpha(\ulinem) \geq 1$. Otherwise, set $(b_{1}(\ulinem),a_{2}(\ulinem),a_{3}(\ulinem) ) = (1,0^{s_{2}},0^{s_{3}})$. A `fusion map' $f:\CalU_{1}^{n} \times \CalV_{2}^{n} \times \CalV_{3}^{n} \rightarrow \CalX^{n}$ is used to map $( u_{1}^{n}(m_{1},b_{1}(\ulinem)),v_{2}(a_{2}(\ulinem)),v_{3}(a_{3}(\ulinem)) )$ into an input sequence henceforth denoted as $x^{n}(\ulinem) \define (x(\ulinem)_{t} : t \in [n])$. To communicate a message triple $\ulinem$, the encoder prepares the state $\rho_{\ulinem} \define \otimes_{t=1}^{n}\rho_{x(\ulinem)_{t}}$.
\subsection{Decoding POVMs}
\label{SubSec:DecodingPOVMs}
In addition to $m_{1}$, Rx $1$ aims to decode the sum $v_{2}(a_{2}(\ulinem)) \oplus v_{3}(a_{3}(\ulinem))$ of the codewords chosen by the Tx. To reference this sum succinctly, henceforth we let $a_{\oplus}(\ulinem) = a_{2}(\ulinem) 0^{s_{3}-s_{2}} \oplus_{q} a_{3}(\ulinem)$. Recollecting code $\lambdapl$, it can be verified that the sum $v_{2}(a_{2}(\ulinem)) \oplus v_{3}(a_{3}(\ulinem)) =  w^{n}(a_{\oplus}(\ulinem))$. Rx $1$ employs the decoding POVM \cite[Proof of Thm.~2]{201206TIT_FawHaySavSenWil}
\begin{eqnarray}
 \theta_{m_{1},b_{1}}^{\apl} \!\!\define \!\!\left( \sum_{\hatm_{1},\hatb_{1},\hatapl}\!\!\!\! \lambda_{\hatm_{1},\hatb_{1}}^{\hatapl} \!\!\right)^{\!\!-\frac{1}{2}}\!\!\!\!\!\lambda_{m_{1},b_{1}}^{\apl}\!\!\left( \sum_{\hatm_{1},\hatb_{1},\hatapl}\!\!\!\lambda_{\hatm_{1},\hatb_{1}}^{\hatapl} \!\right)^{\!\!-\frac{1}{2}}
 \nonumber
\end{eqnarray}
where $\lambda_{m_{1},b_{1}}^{\apl}\define \pi^{Y_{1}}\pi_{m_{1},b_{1}}\pi_{m_{1},b_{1}}^{\apl}\pi_{m_{1},b_{1}}\pi^{Y_{1}}$,
\begin{eqnarray}
\label{Eqn:Rx1DecPOVMs}
\pi_{m_{1},b_{1}} \define \pi_{u_{1}^{n}(m_{1},b_{1})}^{Y_{1}}  
, ~~~ \pi^{\apl}_{m_{1},b_{1}} \define \pi^{Y_{1}}_{u_{1}^{n}(m_{1},b_{1}),w^{n}(\apl)}. 
\end{eqnarray}
are the C-Typ-Projs. with respect to states to $\rho^{Y_{1}}_{u_{1}} = \tr_{Y_{2}Y_{3}}(\sum_{x}p_{X|U_{1}}(x|u_{1})\rho_{x}^{\ulineY} )$ and 
\begin{eqnarray}
\rho_{u_{1}w}^{Y_{1}} \define \tr_{Y_{2}Y_{3}}\left( \sum_{x}p_{X|U_{1}W}(x|u_{1},w)\rho_{x}^{\ulineY}  \right)\nonumber                                                                                                         \end{eqnarray}
respectively.

Rxs $2$ and $3$ decode only their messages. Rx $j$ employs the POVM $\{ \theta_{m_{j}} = \displaystyle\!\!\!\!\!\!\!\sum_{a_{j} \in c_{j}(m_{j})}\!\!\!\!\!\!\!\!\theta_{a_{j}} : m_{j} \in [\CalM_{j}]\}$ where
\begin{eqnarray}
\label{Eqn:POVMOfDecoder2And3}
 \theta_{a_{j}} \!\define\! \left(\!\! \sum_{\hata_{j}}\pi^{Y_{j}}\pi_{\hata_{j}}\pi^{Y_{j}}\!\! \right)^{\!\!\!-\frac{1}{2}}\!\!\! \pi^{Y_{j}}\pi_{a_{j}}\pi^{Y_{j}}\!\!\left( \sum_{\hata_{j}}\pi^{Y_{j}}\pi_{\hata_{j}}\pi^{Y_{j}} \right)^{\!\!\!-\frac{1}{2}}\!\!\!\!\!\!,
 \nonumber
\end{eqnarray}
$\pi^{Y_{2}},\pi^{Y_{3}}$ are the U-Typ-Proj of $\rho^{Y_{2}},\rho^{Y_{3}}$ respectively,  and for $j=2,3$, $\pi_{a_{j}} \define \pi_{v_{j}^{n}(a_{j})}$ is the C-Typ-Proj wrt state $\rho_{v_{j}}^{Y_{j}} \define \tr_{Y_{1}Y_{\msout{j}}}(\sum_{x}p_{X|V_{j}}(x|v_{j})\rho_{x}^{\ulineY} )$.

\subsection{Probability of Error Analysis}
\label{SubSec:ProbErrorAnalysis}
We first state the distribution of the random code. Codewords of $\CalC_{1}$ are picked IID $\prod p_{U_{1}}$. The generator matrix $G_{3}$, shifts $D_{2}^{n},D_{3}^{n}$, bin indices $(\iota_{j}(a_{j}) : a_{j}\in \fieldq^{s_{j}}):j =2,3$ are all picked mutually independently and uniforSubSec:DecSumOfPublicCdwrdsmly from their respective ranges. If $\alpha(\ulinem) \geq 1$, $(B_{1}(\ulinem),A_{2}(\ulinem),A_{3}(\ulinem))$ is picked uniformly from $\CalL(\ulinem)$. Otherwise, $(B_{1}(\ulinem),A_{2}(\ulinem),A_{3}(\ulinem)) \define (1,0^{s_{2}},0^{s_{3}})$. All of the above objects are also mutually independent. We emphasize that $(B_{1}(\ulinem),A_{2}(\ulinem),A_{3}(\ulinem))$ is \textit{conditionally independent} of $\CalC_{1}, G_{3},D_{2}^{n},D_{3}^{n}$, $(\iota_{j}(a_{j}) : a_{j}\in \fieldq^{s_{j}}):j =2,3$ \textit{given the event} $\{\alpha(\ulinem)\geq 1\}$, a fact ({\textbf{Note 1}}) we shall use at a later point in our analysis. Finally, $X^{n}(\ulinem)$ is picked according to $p_{X|U_{1}V_{2}V_{3}}^{n}(\cdot|V_{1}^{n}(m_{1},b_{1}(\ulinem)),V_{2}(A_{2}(\ulinem)),V_{3}(A_{3}(\ulinem)) )$.

The average probability of error is
\begin{eqnarray}
 \label{Eqn:FirstCodingThmProof_1}
P(\mbox{Err}) \!= \!\frac{1}{|\ulineCalM|}\! \sum_{ \ulinem  \in [\ulineCalM]}\!\!\! {\tr\!\left\{\!\left(\! I_{{\ulineY}}^{\otimes n}\!\!-\!\theta^{\apl(\ulinem)}_{m_{1},b_{1}}\otimes \theta_{m_{2}}\otimes \theta_{m_{3}} \!\right)\!\rho_{\ulinem} \right\}\!}. 
   \end{eqnarray}
Henceforth, we analyze a generic term $\xi_{\ulinem}$ in the sum \eqref{Eqn:FirstCodingThmProof_1}. Define a `list threshold' event $\CalE_{\ulinem} \define \{ \alpha(\ulinem) \geq 2^{n(\tau-\eta)\}}$ where 
\begin{eqnarray}
 \label{Eqn:Threshold}
 \tau \define B_{1}\!+\!\sum_{j=2}^{3}\!(S_{j}-T_{j})\log q\! -\! 2\log q\!+\!H(V_{2},V_{3}|U_{1}\!),
\end{eqnarray}

\noindent is\footnote{In our submission, the last term in \eqref{Eqn:Threshold} appeared erroneously with a negative sign as $-H(V_{2},V_{3}|U_{1})$.} the exponent of the expected number of jointly typical triples in any triple of bins and $\mathds{1}_{\CalE_{\ulinem}}$ be its indicator. Firstly, note that every term  $\xi_{\ulinem}$ in the above sum is at most $1$. Hence $\xi_{\ulinem} \mathds{1}_{\CalE^{c}_{\ulinem}} \leq \mathds{1}_{\CalE^{c}_{\ulinem}}$. Next, note that the operator inequalities $0 \leq \theta^{\apl(\ulinem)}_{m_{1},b_{1}} \leq I_{Y_{1}}$ and $0 \leq \theta_{m_{j}} \leq I_{Y_{j}}$ hold. We therefore have
\begin{eqnarray}
 \label{Eqn:FirstCodingThmProof_2}
 \xi_{\ulinem} \!\define\! \tr\!\left\{\!\!\left(\! I_{{\ulineY}}^{\otimes n}-\theta^{\apl(\ulinem)}_{m_{1},b_{1}}\!\!\otimes\! \theta_{m_{2}}\!\!\otimes\! \theta_{m_{3}}\! \!\right)\!\rho_{\ulinem}\! \right\}\! \leq T_{0}+T_{1}+T_{2} + T_{3}\!\!\!\!\!\!\!\!\!\!\!\!
 \nonumber\\
 \label{Eqn:FirstCodingThmProof_3}
 \mbox{where }T_{0} =\mathds{1}_{\CalE_{\ulinem}^{c}},~ T_{1} \define \tr_{Y_{1}}\!\!\left\{ \!\left( I_{Y_{1}}-\theta_{m_{1},b_{1}}^{\apl(\ulinem)} \right)\!\rho_{\ulinem}^{Y_{1}}\!\right\}\mathds{1}_{\CalE_{\ulinem}},\\
 \label{Eqn:FirstCodingThmProof_4}
  T_{j} \define \tr_{Y_{j}}\!\!\left\{ \left(I_{Y_{j}}-\theta_{m_{j}} \right) \!\rho_{\ulinem}^{Y_{j}}\!\right\}\mathds{1}_{\CalE_{\ulinem}}\mbox{ for }j=2,3,
 \end{eqnarray}
where $\xi_{\ulinem}\mathds{1}_{\CalE_{\ulinem}} \leq T_{1}+T_{2}+T_{3}$ is true because whenever operators $A_{i}$ satisfy $0 \leq A_{i} \leq I_{Y_{i}}$ for $i \in [3]$, we have $I_{Y_{1}}\!\otimes\! I_{Y_{2}}\!\otimes\! I_{Y_{3}} \!-\! A_{1}\!\otimes\! A_{2}\!\otimes\! A_{3} \!\leq\! I_{Y_{1}}\!\otimes\! I_{Y_{2}} \!\otimes\! (I_{Y_{3}}\!-\!A_{3})\!+\! I_{Y_{1}}\!\otimes\! (I_{Y_{2}}\!-\!A_{2})\!\otimes\! I_{Y_{3}}\!  +\! (I_{Y_{1}}\!-\!A_{1})\!\otimes\! I_{Y_{2}}\!\otimes\! I_{Y_{3}}$. This is analogous to a `union bound' \cite[Eqn.~\(78\)]{201512TIT_SavWil} in classical probability. The rest of our proof analyzes each of $T_{0},T_{1},T_{2}$ and $T_{3}$.
 \begin{remark}
  \label{Rem:ListThresholdEvent} We have tagged along $\mathds{1}_{\CalE_{\ulinem}}$ to suppress a binning exponent in our pre-variable-elimination bounds. While this does not enlarge the rate region, it enables our variable elimination to yield a compact description of the inner bound. Specifically, if we had not tagged along this event and not suppressed the binning exponent, then a variable elimination performed on the characterization in Thm.~\ref{Thm:3CQBCCodingTheorem1} would yield far more inequalities post variable elimination as compared to that provided in Corollary \ref{Cor:3CQBCCodingTheorem1}.
 \end{remark}

 \med\textit{Analysis of $T_{0}$} : Since $T_{0}$ involves only classical probabilities, its analysis is identical to that in \cite[App.~1]{201804TIT_PadPra} and we therefore refer the reader to \cite[App.~1]{201804TIT_PadPra} for a proof of Prop.~\ref{Prop:TheClassicalCovering}.
\begin{prop}
 \label{Prop:TheClassicalCovering}
 For any $\eta > 0$, there exists $N_{\eta} \in \naturals$ such that for all $n \geq N_{\eta}$, we have $\mathbb{E}\{ T_{0}\}  \leq \exp\{-n\eta\}$ if the bounds \eqref{Eqn:SrcCodingBound}, \eqref{Eqn:SrcCodingBoundAlg} in the Thm.~\ref{Thm:3CQBCCodingTheorem1} statement holds for all $A \subseteq \{2,3\}$ and every $D \subseteq \{1\}$ with the empty sum being defined as $0$.
 \end{prop}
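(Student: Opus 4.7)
The plan is to reduce the claim to a purely classical covering/packing argument since $T_{0} = \mathds{1}_{\mathcal{E}_{\ulinem}^{c}}$ is a deterministic function of the (classical) random code, and then to apply a second-moment (Chebyshev) bound to the list size $\alpha(\ulinem)$. Observe that the threshold exponent $\tau$ in \eqref{Eqn:Threshold} is, up to polynomial factors in $n$, precisely the exponent of $\mathbb{E}\{\alpha(\ulinem)\}$: the number of candidate triples is $2^{nB_{1}} \cdot q^{s_{2}-t_{2}} \cdot q^{s_{3}-t_{3}}$, and since each $U_{1}^{n}(m_{1},b_{1})$ is IID $p_{U_{1}}$ while each $V_{j}^{n}(a_{j}) = a_{j}G_{j}\oplus D_{j}^{n}$ is uniform on $\mathcal{V}_{j}^{n}$ with $V_{2}^{n}(a_{2})$ and $V_{3}^{n}(a_{3})$ mutually independent (for $a_{2}\neq 0, a_{3}\neq 0$) and independent of $\mathcal{C}_{1}$, the probability that any fixed triple is jointly $p_{U_{1}V_{2}V_{3}}$-typical is $\doteq 2^{-n(H(U_{1})+2\log q - H(U_{1}V_{2}V_{3}))}$. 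Multiplying yields $\mathbb{E}\{\alpha(\ulinem)\} \doteq 2^{n\tau}$.

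Next I would set up Chebyshev: $\mathbb{P}(\alpha(\ulinem) < 2^{n(\tau-\eta)}) \leq \mathrm{Var}(\alpha(\ulinem))/(\mathbb{E}\{\alpha(\ulinem)\} - 2^{n(\tau-\eta)})^{2}$, so it suffices to show $\mathrm{Var}(\alpha(\ulinem)) = o(\mathbb{E}\{\alpha(\ulinem)\}^{2})$ exponentially. Writing $\alpha(\ulinem) = \sum_{(b_{1},a_{2},a_{3})} \mathds{1}\{\cdot \in T_{\eta}^{n}\}$ and expanding the second moment gives a double sum over pairs $((b_{1},a_{2},a_{3}),(b_{1}',a_{2}',a_{3}'))$ that must be partitioned according to which indices coincide (which is what the subset $D \subseteq \{1\}$ captures) and, crucially, which linear combinations of $(a_{2},a_{3})$ and $(a_{2}',a_{3}')$ are equal (captured by the subset $A \subseteq \{2,3\}$ together with the algebraic bound \eqref{Eqn:SrcCodingBoundAlg} involving $\min_{\theta \neq 0} H(V_{2}\oplus\theta V_{3}|U_{D})$). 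The dominant non-independent contributions come from either (i) two candidates sharing the $U_{1}$ bin-index $b_{1}$, in which case the $U_{1}$ part is common; (ii) two codewords of $\lambda_{2}$ or $\lambda_{3}$ being identical, which happens when the difference of their message indices lies in the kernel of the corresponding generator; and (iii) the additive structure forcing an algebraic coincidence through a nonzero $\theta \in \fieldq \setminus\{0\}$. Each such case produces a term whose exponent is controlled precisely by one of the bounds \eqref{Eqn:SrcCodingBound}--\eqref{Eqn:SrcCodingBoundAlg} for the corresponding $(A,D)$.

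Since the random vectors $\{U_{1}^{n}(m_{1},b_{1})\}_{b_{1}}$ are IID and the coset codewords $\{V_{j}^{n}(a_{j})\}_{a_{j}}$ are pairwise independent and uniform (and pairwise joint-uniform across $j=2,3$ by the sub-coset construction, except on the low-dimensional algebraic coincidences captured by $\theta$), each individual cross-term can be bounded using standard typical-set counts of the form $|T_{\eta}^{n}(p_{V_{A}U_{D}})| \leq 2^{n(H(V_{A},U_{D})+\epsilon)}$. Combining these, one obtains $\mathrm{Var}(\alpha(\ulinem))/\mathbb{E}\{\alpha(\ulinem)\}^{2} \leq \sum_{A,D} 2^{-n\Delta_{A,D}}$, where each gap $\Delta_{A,D}$ is strictly positive exactly when the hypotheses \eqref{Eqn:SrcCodingBound}, \eqref{Eqn:SrcCodingBoundAlg} hold strictly. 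Choosing $\eta$ smaller than the minimum such gap and than all typicality slacks then yields $\mathbb{E}\{T_{0}\} \leq \exp(-n\eta)$ for all sufficiently large $n$.

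The main obstacle is careful bookkeeping of the second-moment expansion: one must enumerate every way in which two candidate triples $(b_{1},a_{2},a_{3})$ and $(b_{1}',a_{2}',a_{3}')$ can fail to be independent — including the subtle algebraic event where $a_{2} \neq a_{2}'$ and $a_{3} \neq a_{3}'$ yet $(a_{2}-a_{2}')G_{2} = \theta (a_{3}'-a_{3})G_{3/2}'$-type coincidences force $V_{2}^{n}(a_{2})\oplus V_{3}^{n}(a_{3}) = V_{2}^{n}(a_{2}')\oplus V_{3}^{n}(a_{3}')$ for some nonzero $\theta \in \fieldq$ — and to match each case with exactly one of the listed bounds. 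Since this accounting is entirely classical and was carried out in detail in \cite[App.~1]{201804TIT_PadPra} for the analogous $3$-user classical BC setting (the randomness here is purely in the code construction, not in the channel), the argument transfers verbatim to the CQ setting, justifying the deferral to that reference.
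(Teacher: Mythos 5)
Your proposal is correct and matches the paper's route: the paper likewise treats $T_{0}$ as a purely classical quantity (the threshold $\tau$ being exactly the exponent of $\mathbb{E}\{\alpha(\ulinem)\}$) and defers the proof to the second-moment/covering analysis of \cite[App.~1]{201804TIT_PadPra}, which is precisely the Chebyshev argument with the $(A,D)$ and $\theta\in\fieldq\setminus\{0\}$ case enumeration you sketch. No discrepancy to report.
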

 To comprehend the above bounds, note that the first $3$ terms on the RHS form the usual lower bound in classical covering. The extra $|A|\log q - \sum_{a \in A}H(V_{a})$ is the penalty in binning rate we pay since the codewords of the linear code are uniformly distributed. Indeed, this is the divergence $D(p_{V_{A}}||\mbox{Unif}^{|A|})$ between the desired distribution and the uniform on $\fieldq^{|A|}$.
\med\textit{Analysis of $T_{1}$} : Let $\pi_{\apl(\ulinem)} = \pi_{w^{n}(\apl(\ulinem))}$ be the conditional typical projector with respect to the state $\rho_{w}^{Y_{1}} \define \tr_{Y_{2}Y_{3}}\left( \sum_{x}p_{X|W}(x|w)\rho_{x}^{\ulineY} \right)$. Since we have fixed an arbitrary $\ulinem$, we let $\apl = \apl(\ulinem)$ and $\pi_{\apl} \define \pi_{\apl(\ulinem)} $in the sequel. We adapt \cite[Proof of Thm.~2]{201206TIT_FawHaySavSenWil} to derive our upper bound on $T_{1}$. From the definition of $T_{1}$ in \eqref{Eqn:FirstCodingThmProof_3} and the fact that $\tr(\lambda\rho) \leq \tr(\lambda\sigma)+\norm{\rho-\sigma}_{1}$ for $0 \leq \rho,\sigma,\lambda \leq I$, we have
\begin{eqnarray}
 \label{Eqn:FirstCodingThmProof_6}
 T_{1} \leq \tr_{Y_{1}}\!\!\left\{ \!\left( I_{Y_{1}}-\theta_{m_{1},b_{1}}^{\apl} \right)\!\pi_{\apl}\rho_{\ulinem}^{Y_{1}}\pi_{\apl}\!\right\}\mathds{1}_{\CalE_{\ulinem}} + T_{10}
\end{eqnarray}
where $T_{10} = \norm{\pi_{\apl}\rho_{\ulinem}^{Y_{1}}\pi_{\apl} - \rho_{\ulinem}^{Y_{1}}}_{1}\mathds{1}_{\CalE_{\ulinem}}$. Since $\alpha(\ulinem) \geq 1$, the chosen triple of codewords is jointly typical and by pinching and the gentle operator lemma, we have $\mathbb{E}\{T_{10}\} \leq \exp\{-n\eta\}$ for sufficiently large $n$. Denoting the first term in \eqref{Eqn:FirstCodingThmProof_6} as $\overline{T}_{11}$ and applying the Hayashi-Nagaoka inequality \cite{200307TIT_HayNag} on $\overline{T}_{11}$, we have $\overline{T}_{11} \leq 2(I-T_{11})\mathds{1}_{\CalE_{\ulinem}} + 4(T_{12}+T_{13} + T_{14})\mathds{1}_{\CalE_{\ulinem}}$, where
\begin{eqnarray}
\label{Eqn:FirstCodingThmProof_7}
T_{11} \define \tr\!\left( \!\lambda_{m_{1},b_{1}}^{\apl} \pi_{\apl}\rho_{\ulinem}^{Y_{1}}\pi_{\apl}\!\right)\!, T_{12}\! \define \!\!\!\!\!\!\sum_{\substack{(\hatm_{1},\hatb_{1}) \neq \\ (m_{1},b_{1}) }} \!\!\!\!\!\!\!\tr(\!\lambda_{\hatm_{1},\hatb_{1}}^{\apl}\!\!\pi_{\apl}\rho_{\ulinem}^{Y_{1}}\pi_{\apl}\!\!)\nonumber\\
T_{13}\! \define \!\!\!\!\sum_{\substack{\hatapl \neq \apl}} \!\!\!\!\!\tr(\!\lambda_{m_{1},b_{1}}^{\hatapl}\!\!\pi_{\apl}\rho_{\ulinem}^{Y_{1}}\pi_{\apl}\!\!), T_{14} \! \define \!\!\!\!\!\!\!\!\!\!\!\!\!\!\sum_{\substack{\hatapl \neq \apl \\ (\hatm_{1},\hatb_{1}) \neq (m_{1},b_{1}) }} \!\!\!\!\!\!\!\!\!\!\!\!\!\!\!\tr(\!\lambda_{\hatm_{1},\hatb_{1}}^{\hatapl}\!\!\pi_{\apl}\rho_{\ulinem}^{Y_{1}}\pi_{\apl}\!\!)
\nonumber
\end{eqnarray}
Analysis of $T_{11}$ is straightforward involving repeated use of the gentle operator lemma. See \cite{202202arXiv_Pad3CQBC} for detailed steps.
\med\textit{Analysis of $T_{12},T_{13}, T_{14}$} : We pull forward our new steps that enable us analyze $T_{12}, T_{13}, T_{14}$ and illustrate them below in a unified manner. Abbreviating $\ulinev^{n}=(v_{2}^{n},v_{3}^{n})$, $\ulineV=(V_{2},V_{3})$, recalling $\apl = \apl(\ulinem)$ and with a slight abuse of notation, let
\begin{eqnarray}
\label{Eqn:Rx1AnalysisF1}
 \CalF_{1}\!\! \define\!\! \left\{\!\!\! \!
 \begin{array}{c} 
 V_{j}^{n}(a_{j})=v_{j}^{n}, \iota_{j}(a_{j}) = m_{j}\!:\!j\!=\!2,3, U_{1}(m_{1},b_{1}) = u_{1}^{n}
 ,W^{n}(\apl) = w^{n}
 \end{array} \!\!\!\!\right\}\!, \CalB\! \define\! {\left\{ \!{(u_{1}^{n},\ulinev^{n},w^{n})  \in T_{\eta}\!(p_{U_{1}\ulineV W}\!) }\! \right\}}
 \\
 \label{Eqn:Rx1AnalysisF2F3}
 \CalF_{2}\! \define\!\left\{ \!\!\! \! \begin{array}{c}B_{1}(\ulinem)=b_{1}, A_{j}(\ulinem)= a_{j}  \end{array}\!\!\! : \!\!\begin{array}{c}j=2,3 \end{array}\!\!\!\right\},~ \CalG_{12} = \left\{ \!\!\!\! \begin{array}{c}U_{1}^{n}(\hatm_{1},\hatb_{1})=\hatu_{1}^{n}, (\hatm_{1},\hatb_{1}) \neq (m_{1},b_{1})\end{array} \!\!\! \right\} 
 \nonumber\\
\CalF_{3}\!\define \!\{X^{n}(\ulinem) = x^{n} \!\},  \CalG_{13}\define \left\{\!\!\!\! \begin{array}{c} W^{n}(\hatapl) = \hatw^{n},\hatapl \neq \apl \end{array}\!\!\!\! \right\} ,
  \label{Eqn:Rx1AnalysisG14}
 \CalG_{14} \define \CalG_{12} \cap \CalG_{13}, \CalA \define \left\{ \!\!\!\! \begin{array}{c} (\hatu_{1}^{n},w^{n})\in T_{\eta}(p_{U_{1}W}) \end{array}\!\!\!\!\right\},
\end{eqnarray}
where the event $\CalF_{1}\cap \CalF_{2} \cap \CalF_{3}$ specifies the realization for the chosen codewords and $\CalG_{12},\CalG_{13},\CalG_{14}$ specify the realization of an incorrect codeword in regards to $T_{12}, T_{13},T_{14}$ respectively. We let $\lambda_{u_{1}^{n},w^{n}}\define \pi^{Y_{1}}\pi_{u_{1}^{n}} \pi_{ u_{1}^{n}w^{n}}\pi_{u_{1}^{n}} \pi^{Y_{1}}$ as in \eqref{Eqn:Rx1DecPOVMs}, $\CalE_{1k} \define \CalF_{1}\cap\CalG_{1k} \cap \CalE_{\ulinem} \cap  \CalF_{2}\cap \CalF_{3}\cap \CalA \cap \CalB$ for $k=2,3,4$ and note that
\begin{eqnarray}
 \label{Eqn:Rx1ErrorEventChars-12}
 \Expectation\{T_{12}\mathds{1}_{\CalE_{\ulinem}} \} =\!\!\!\!\!\! \sum_{\substack{b_{1},a_{2},a_{3}\\u_{1}^{n},v_{2}^{n},v_{3}^{n}}}\sum_{\substack{w^{n},x^{n}\\\hatm_{1},\hatb_{1},\hatu_{1}^{n}}}\!\!\!\! \!\!
 \tr(\lambda_{\hatu_{1}^{n},w^{n}}\!\pi_{w^{n}}\rho_{x^{n}}\pi_{w^{n}} \!)P(\CalE_{12})
 \\
 \label{Eqn:Rx1ErrorEventChars-13}
 \Expectation\{T_{13}\mathds{1}_{\CalE_{\ulinem}}\} = \!\!\!\!\!\!\!\sum_{\substack{b_{1},a_{2},a_{3}\\u_{1}^{n},v_{2}^{n},v_{3}^{n}}} \sum_{\substack{w^{n},x^{n}\\\hatapl, \hatw^{n} }}\!\!\!\tr(\lambda_{u_{1}^{n},\hatw^{n}}\pi_{w^{n}}\rho_{x^{n}}\pi_{w^{n}} \!)P(\CalE_{13})
 \\
 \label{Eqn:Rx1ErrorEventChars-14}
 \Expectation\{T_{14}\mathds{1}_{\CalE_{\ulinem}}\} = \!\!\!\!\!\!\!\!\!\sum_{\substack{b_{1},a_{2},a_{3},u_{1}^{n},v_{2}^{n},v_{3}^{n}\\w^{n},x^{n} ,\hatm_{1},\hatb_{1},\hatapl, \hatu_{1}^{n},\hatw^{n} }} \!\!\!\!\!\!\!\!\!\!\!\!\!\!\!\!\tr(\lambda_{\hatu_{1}^{n},\hatw^{n}}\!\pi_{w^{n}}\rho_{x^{n}}\pi_{w^{n}} \!)P(\CalE_{14}).
\end{eqnarray}
We have
\begin{eqnarray}
 \label{Eqn:ProbabilityofEevents1}
 \lefteqn{P(\CalE_{1k}) =  P(\CalE_{\ulinem}\cap\CalF_{1}\cap\CalG_{1k})P(\CalF_{2}|\CalE_{\ulinem}\cap\CalF_{1}\cap\CalG_{1k})}\nonumber\\
 \!\! \!\!&\!\!\!\!&\times P(\CalF_{3}|\CalF_{2}\cap\CalE_{\ulinem}\cap\CalF_{1}\cap\CalG_{1k})
 \nonumber\\
 \label{Eqn:ProbabilityofEevents2}
 \!\! \!\!&\!\!\leq\!\!&\!\frac{P(\CalF_{1}\!\cap\CalG_{1k})}{2^{n(\tau-\eta)}}\!\!\cdot p^{n}_{X|U_{1}\ulineV W}(x^{n}|u_{1}^{n}\!,\!\ulinev^{n}\!,\!w^{n})\mathds{1}_{\left\{\!\substack{w^{n}=\\v_{2}^{n}\!\oplus_{q}\! v_{3}^{n}}\!\right\}},
\end{eqnarray}
where we have used \textbf{Note 1} stated earlier in substituting the upper bound $2^{-n(\tau-\eta)}$ on $P(\CalF_{2}|\CalE_{\ulinem}\cap\CalF_{1}\cap\CalG_{1k})$ and
\begin{eqnarray}
\label{Eqn:DistributionalReslation}
p_{X|U_{1}V_{2}V_{3}}(x|u_{1},v_{2},v_{3}) = p_{X|U_{1}V_{2}V_{3}W}(x|u_{1},v_{2},v_{3},v_{2}\oplus_{q}\!v_{3})
.\nonumber
\end{eqnarray}
From the distribution of the random code, we have
\begin{eqnarray}
 \label{Eqn:Rx1ErrorEventAssociatedProbs}
 P(\CalF_{1}\!\cap\!\CalG_{12}\!) \!=\! \frac{p_{U_{1}}^{n}\!(u_{1}^{n})p_{U_{1}}^{n}\!(\hatu_{1}^{n})}{q^{n(2+T_{2}+T_{3})}}
\!, P(\!\CalF_{1}\!\cap\!\CalG_{13}\!) \!=\! \frac{p_{U_{1}}^{n}\!(u_{1}^{n})}{\!q^{n(3+T_{2}+T_{3})}\!}
\end{eqnarray}
and $P(\CalF_{1}\!\cap\CalG_{14})=q^{-n}P(\CalF_{1}\!\cap\CalG_{12})$. We note that the above probabilities differ from a conventional information theory analysis wherein the $V_{2},V_{3}-$codebooks are picked IID $p_{V_{2}},p_{V_{3}}$.  The next step involves substituting these probabilities in \eqref{Eqn:Rx1ErrorEventChars-12}-\eqref{Eqn:Rx1ErrorEventChars-14} and evaluating upper bounds on the same.
\begin{prop}
\label{Prop:Rx1Errors}
For all $\eta>0$, there exists $N_{\eta} \in \naturals$ such that $\forall n \geq N_{\eta}$, we have $\Expectation\{T_{1k}\mathds{1}_{\CalE_{\ulinem}} \} \leq \exp\{-n\eta\}$ for $k=2,3,4$ if
bounds \eqref{Eqn:ChnlCodingBnd1}, \eqref{Eqn:ChnlCodingBnd2} and \eqref{Eqn:ChnlCodingBnd3} hold wrt state \eqref{Eqn:TheoremState}.
\end{prop}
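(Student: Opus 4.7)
My plan is to substitute the bound \eqref{Eqn:ProbabilityofEevents2} together with the expressions \eqref{Eqn:Rx1ErrorEventAssociatedProbs} into \eqref{Eqn:Rx1ErrorEventChars-12}--\eqref{Eqn:Rx1ErrorEventChars-14} and reduce each expectation to a product of (i) an exponential pre-factor arising from enumerating codewords, bins and uniform shifts, and (ii) a trace term that I will bound with the standard quantum typicality toolkit. The three cases differ only in which of $\hatu_{1}^{n}$ and $\hatw^{n}$ is statistically independent of the chosen codewords, and share a common analytic skeleton adapted from \cite[Proof of Thm.~2]{201206TIT_FawHaySavSenWil}.

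For $T_{12}$, I would first combine the enumerations over $b_{1}, a_{2}, a_{3}$ and $(\hatm_{1}, \hatb_{1})$ with $P(\CalF_{1}\cap\CalG_{12})$ and observe that the list-threshold factor $2^{-n(\tau-\eta)}$, with $\tau$ as in \eqref{Eqn:Threshold}, cancels the $2^{nB_{1}} q^{n(S_{2}-T_{2}+S_{3}-T_{3}-2)} \cdot 2^{nH(V_{2},V_{3}|U_{1})}$ contribution coming from the probabilities and the bin/shift enumeration, leaving a pre-factor $2^{n(R_{1}+B_{1}+\eta)}$ multiplying a normalised trace. I would then invoke the gentle-operator lemma on the joint typicality of $(U_{1}^{n},V_{2}^{n},V_{3}^{n},W^{n})$ to replace $\rho^{Y_{1}}_{X^{n}}$ by $\rho^{Y_{1}}_{u_{1}^{n},w^{n}}$ up to a vanishing trace-norm correction, apply $\pi_{\hatu_{1}^{n}} \pi_{\hatu_{1}^{n}w^{n}} \pi_{\hatu_{1}^{n}} \le \pi_{\hatu_{1}^{n}w^{n}}$, and combine with $\tr(\pi_{\hatu_{1}^{n}w^{n}}) \le 2^{n(H(Y_{1}|U_{1},W)+\eta)}$ and $\pi^{Y_{1}}\rho^{Y_{1}}_{u_{1}^{n},w^{n}}\pi^{Y_{1}} \le 2^{-n(H(Y_{1})-\eta)}\pi^{Y_{1}}$ to obtain a trace bound of $2^{-n(I(Y_{1},W;U_{1})-O(\eta))}$. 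The net exponent $R_{1}+B_{1}-I(Y_{1},W;U_{1})+O(\eta)$ is negative under \eqref{Eqn:ChnlCodingBnd1}, yielding the desired $\exp\{-n\eta\}$ decay after shrinking $\eta$.

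The analysis of $T_{13}$ is dual: now $\hatw^{n}$ plays the role of the independent uniform variable, so the enumeration over $\hatapl\neq\apl$ contributes $\max\{q^{nS_{2}},q^{nS_{3}}\}$ and the mismatch between the uniform law on $\fieldq^{n}$ and the target $p_{W}$ produces the $\log q - H(W)$ defect in \eqref{Eqn:ChnlCodingBnd2}. After the same typicality substitutions and the standard bounds on $\pi^{Y_{1}}\pi_{u_{1}^{n}}\pi_{u_{1}^{n}\hatw^{n}}\pi_{u_{1}^{n}}\pi^{Y_{1}}$, the trace contributes the exponent $-I(Y_{1},U_{1};W)$, and \eqref{Eqn:ChnlCodingBnd2} makes the overall exponent negative. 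For $T_{14}$ both $\hatu_{1}^{n}$ and $\hatw^{n}$ are independent of the chosen codewords; the extra factor $q^{-n}$ in $P(\CalF_{1}\cap\CalG_{14})=q^{-n}P(\CalF_{1}\cap\CalG_{12})$ absorbs the additional independence of $\hatw^{n}$ from $w^{n}$, and a straightforward combination of the two perturbations above yields \eqref{Eqn:ChnlCodingBnd3}.

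The hard part will be the bookkeeping. The list-threshold event $\CalE_{\ulinem}$ couples every sum in \eqref{Eqn:Rx1ErrorEventChars-12}--\eqref{Eqn:Rx1ErrorEventChars-14} to the jointly-typical structure of $(U_{1}^{n},V_{2}^{n},V_{3}^{n},W^{n})$, while the PCC structure forces the $V_{j}^{n}$ to be marginally uniform on $\fieldq^{n}$. One must carefully track how the binning exponents $T_{2},T_{3}$ and the $H(V_{2},V_{3}|U_{1})$ term inside $\tau$ cancel against the kernel $\mathds{1}_{\{w^{n}=v_{2}^{n}\oplus_{q} v_{3}^{n}\}}$ and the conditional distribution $p_{X|U_{1}V_{2}V_{3}}^{n}(\cdot|\cdot)$, so that only the operative information quantities $I(Y_{1},W;U_{1})$, $I(Y_{1},U_{1};W)$ and $\log q - H(W)$ survive. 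The conditional-independence observation labelled \textbf{Note 1} in the text is precisely what permits the uniform upper bound $2^{-n(\tau-\eta)}$ on $P(\CalF_{2}|\CalE_{\ulinem}\cap\CalF_{1}\cap\CalG_{1k})$ and thereby makes this bookkeeping tractable across all three cases.
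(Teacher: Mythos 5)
Your overall architecture---splitting each $\Expectation\{T_{1k}\mathds{1}_{\CalE_{\ulinem}}\}$ into a counting pre-factor obtained from \eqref{Eqn:ProbabilityofEevents2}, \eqref{Eqn:Rx1ErrorEventAssociatedProbs} and \textbf{Note 1}, and a trace term handled with typicality operator bounds---is the same as the paper's. However, your $T_{12}$ step has a genuine flaw: the operator inequality $\pi^{Y_{1}}\rho^{Y_{1}}_{u_{1}^{n},w^{n}}\pi^{Y_{1}} \leq 2^{-n(H(Y_{1})-\eta)}\pi^{Y_{1}}$ is false in general. The eigenvalue bound $\pi\,\sigma\,\pi \leq 2^{-n(H-\eta)}\pi$ requires $\pi$ to be the typical projector of the very state $\sigma$ being sandwiched (or of the average state, after that average has been formed, as in the legitimate use of $\pi^{Y_{1}}\rho^{\otimes n}\pi^{Y_{1}}\leq 2^{-n[H(Y_{1})-2\eta]}\pi^{Y_{1}}$ in the $T_{14}$ analysis, which is preceded by $\sum_{w^{n}}p_{W}^{n}(w^{n})\rho_{w^{n}}\leq\rho^{\otimes n}$). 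For a fixed conditional state $\rho_{u_{1}^{n}w^{n}}$ sandwiched by the unconditional projector, the relevant matrix elements are of order $2^{-nH(Y_{1}|U_{1},W)}$, not $2^{-nH(Y_{1})}$ (already visible in the commuting case). Moreover, even granting your inequality, combining it with $\tr(\pi_{\hatu_{1}^{n}w^{n}})\leq 2^{n(H(Y_{1}|U_{1},W)+\eta)}$ gives exponent $H(Y_{1})-H(Y_{1}|U_{1},W)=I(Y_{1};U_{1},W)$, not the $I(Y_{1},W;U_{1})$ you assert, so your exponent arithmetic does not close either.

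The repair is the route the paper takes: keep the pinching $\pi_{\apl}\rho_{\ulinem}^{Y_{1}}\pi_{\apl}$ already introduced in \eqref{Eqn:FirstCodingThmProof_6} (no gentle-operator replacement of $\rho_{x^{n}}$ by $\rho_{u_{1}^{n},w^{n}}$ is needed; the conditional law $p^{n}_{XU_{1}\ulineV|W}$ in the probability bound averages $\rho_{x^{n}}$ exactly to $\rho_{w^{n}}$), then use $\pi_{w^{n}}\rho_{w^{n}}\pi_{w^{n}}\leq 2^{-nH(Y_{1}|W)}\pi_{w^{n}}$, which is valid because $\pi_{w^{n}}$ is the conditional typical projector of $\rho_{w^{n}}$ itself. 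Together with $\tr(\pi_{\hatu_{1}^{n}w^{n}})\leq 2^{n(H(Y_{1}|U_{1},W)+\eta)}$ this produces the trace exponent $I(Y_{1};U_{1}|W)$, and the remaining $I(U_{1};W)$ needed to reach $I(Y_{1},W;U_{1})$ in \eqref{Eqn:ChnlCodingBnd1} comes from the classical sum $\sum_{\hatu_{1}^{n}}p_{U_{1}}^{n}(\hatu_{1}^{n})$ restricted by the typicality event $\CalA$---not from the trace alone. Your $T_{13}$ and $T_{14}$ sketches are directionally right, but note that for $T_{13}$ the essential device is the conversion $\pi_{u_{1}^{n}\hatw^{n}}\leq 2^{n[H(Y_{1}|U_{1},W)+2\eta]}\rho_{u_{1}^{n}\hatw^{n}}$ so that the $p^{n}_{W|U_{1}}$-weighted sum over $\hatw^{n}$ collapses to $\rho_{u_{1}^{n}}$, and for $T_{14}$ the unconditional $H(Y_{1})$ bound may be applied only after forming $\rho^{\otimes n}$; as written, your sketch leaves these steps unsupported.
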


\med\textit{Analysis of $T_{2},T_{3}$} : Analysis of $T_{2},T_{3}$ is identical and we state the same in terms of a generic index $j$. From \eqref{Eqn:FirstCodingThmProof_4}, we have
\begin{eqnarray}
 \label{Eqn:FirstCodingThmBoundingTj}
 T_{j} \leq \tr\!\left\{\! \left(\! I_{Y_{j}} - \theta_{a_{j}(\ulinem)}\!\right)\!\rho_{\ulinem}^{Y_{j}} \!\right\}\mathds{1}_{\CalE_{\ulinem}} \leq 2(I\!-\!T_{j0})\mathds{1}_{\CalE_{\ulinem}}\!+4T_{j1}\mathds{1}_{\CalE_{\ulinem}}
 \nonumber\\
 T_{j0} \!\define\! \tr(\!\pi^{Y_{j}}\pi_{a_{j}(\ulinem)}\pi^{Y_{j}}\rho_{\ulinem}^{Y_{j}}\!), T_{j1}\! \define\!\!\!\!\! \sum_{\hata_{j} \neq a_{j}(\ulinem)} \!\!\!\!\!\!\!\tr(\!\pi^{Y_{j}}\pi_{\hata_{j}(\ulinem)}\pi^{Y_{j}}\rho_{\ulinem}^{Y_{j}}\!).
 \nonumber
\end{eqnarray}
Bounding of $T_{j0},T_{j1}$ is similar to the proof of nested coset codes achieving capacity of a CQ channel \cite[Thm.~2]{202107ISIT_AnwPadPra3CQIC}. The change in our proof of Prop.~\ref{Prop:Rx2And3ErrorEvent} is the use of the `list threshold' event that suppresses the binning exponent.The proof of Prop.~\ref{Prop:Rx2And3ErrorEvent} is provided in Appendix \ref{AppSec:ErrorEventAtRx2And3} and this concludes our proof.
\begin{prop}
 \label{Prop:Rx2And3ErrorEvent}
 For all $\eta>0$, there exists $N_{\eta} \in \naturals$ such that $\forall n \geq N_{\eta}$, we have $\Expectation\{T_{j1}\mathds{1}_{\CalE_{\ulinem}} \} \leq \exp\{-n\eta\}$ if
bounds \eqref{Eqn:ChnlCodingBnd4} in Thm.~\ref{Thm:3CQBCCodingTheorem1} statement holds wrt state \eqref{Eqn:TheoremState}.
\end{prop}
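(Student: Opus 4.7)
The plan is to analyze $\Expectation\{T_{j1}\mathds{1}_{\CalE_{\ulinem}}\}$ by mirroring the decomposition used for the $T_{12},T_{13},T_{14}$ terms in Sec.~\ref{SubSec:ProbErrorAnalysis}, but specialised to a point-to-point setting, borrowing the nested-coset-code analysis of \cite[Thm.~2]{202107ISIT_AnwPadPra3CQIC}. The key adaptation is to thread the list-threshold indicator $\mathds{1}_{\CalE_{\ulinem}}$ through the analysis so that the binning exponent $\tau$ of \eqref{Eqn:Threshold} is suppressed, producing a compact pre-variable-elimination bound whose only channel-coding constraint is \eqref{Eqn:ChnlCodingBnd4}.

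First, I would unfold the expectation as a sum over realisations. Following the template of \eqref{Eqn:Rx1AnalysisF1}-\eqref{Eqn:Rx1AnalysisG14}, let $\CalG_j\define\{V_j^n(\hata_j)=\hatv_j^n,\hata_j\neq a_j(\ulinem)\}$ play the role of $\CalG_{1k}$, and write
\begin{eqnarray}
\Expectation\{T_{j1}\mathds{1}_{\CalE_{\ulinem}}\}\!\!\!&=&\!\!\!\sum_{\substack{b_1,a_2,a_3,\hata_j,x^n\\u_1^n,\ulinev^n,w^n,\hatv_j^n}}\!\!\!\!\!\!\!\!\tr(\pi^{Y_j}\pi_{\hatv_j^n}\pi^{Y_j}\rho_{x^n}^{Y_j})\nonumber\\
&&\times P(\CalE_{\ulinem}\cap\CalF_1\cap\CalG_j\cap\CalF_2\cap\CalF_3\cap\CalA\cap\CalB).\nonumber
\end{eqnarray}
The crucial probabilistic observation is that, since $\hata_j\neq a_j(\ulinem)$, the identity $V_j^n(\hata_j)-V_j^n(a_j(\ulinem))=(\hata_j-a_j(\ulinem))G_j$ paired with the random shift $D_j^n$ renders $\hatv_j^n$ uniform on $\fieldq^n$ and jointly independent of $V_j^n(a_j(\ulinem))$. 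This supplies an extra factor of $q^{-n}$ in $P(\CalF_1\cap\CalG_j)$ relative to $P(\CalF_1)$. Combined with $P(\CalF_2|\CalE_{\ulinem}\cap\CalF_1\cap\CalG_j)\leq 2^{-n(\tau-\eta)}$ established by the Note~1 conditional-independence argument, this yields an analogue of \eqref{Eqn:ProbabilityofEevents2} suitable for summation.

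The next step is a standard typical-projector sandwich. Using the gentle-operator lemma I would replace $\rho_{x^n}^{Y_j}$ by $\pi^{Y_j}\rho_{x^n}^{Y_j}\pi^{Y_j}$ at exponentially small cost (valid under $\CalE_{\ulinem}\cap\CalB$ since $x^n$ is typical), and then invoke $\pi^{Y_j}\rho^{Y_j}\pi^{Y_j}\preceq 2^{-n(H(Y_j)-\eta)}\pi^{Y_j}$ together with $\tr(\pi_{\hatv_j^n})\leq 2^{n(H(Y_j|V_j)+\eta)}$ for typical $\hatv_j^n$, yielding $\tr(\pi^{Y_j}\pi_{\hatv_j^n}\pi^{Y_j}\rho_{x^n}^{Y_j})\leq 2^{-n(I(V_j;Y_j)-2\eta)}$ per term. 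Summing the uniform $q^{-n}$ over the at most $2^{n(H(V_j)+\eta)}$ typical values of $\hatv_j^n$, then over the $q^{ns_j}$ possible $\hata_j$, and finally collapsing the residual sum over $(b_1,a_2,a_3,u_1^n,\ulinev^n,w^n,x^n)$ into a probability (with the $2^{-n(\tau-\eta)}$ factor absorbing into constants), leaves an exponent of $S_j\log q+H(V_j)-\log q-I(V_j;Y_j)+O(\eta)$. This is strictly negative precisely when \eqref{Eqn:ChnlCodingBnd4} holds, yielding the claimed $\exp\{-n\eta\}$ decay.

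The step I anticipate will demand the most care is verifying that the gentle-operator substitutions and the cyclic-trace manipulations commute cleanly with the classical expectations over the random codebook, in particular allowing the replacement $\rho_{x^n}^{Y_j}\rightsquigarrow\pi^{Y_j}\rho_{x^n}^{Y_j}\pi^{Y_j}$ to be performed before summing the uniform $\hatv_j^n$. This is essentially the same bookkeeping obstacle encountered in \cite[Thm.~2]{202107ISIT_AnwPadPra3CQIC}, and I would reproduce those operator manipulations while threading $\mathds{1}_{\CalE_{\ulinem}}$ through each step of the Hayashi-Nagaoka / gentle-operator chain.
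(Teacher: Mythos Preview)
Your overall plan mirrors the paper's proof in Appendix~\ref{AppSec:ErrorEventAtRx2And3}: define the events $\CalG_{j},\CalF_{1},\CalF_{2},\CalF_{3}$, use Note~1 and the list-threshold to bound $P(\CalF_{2}\mid\cdots)\le 2^{-n(\tau-\eta)}$, exploit uniformity of $V_{j}^{n}(\hata_{j})$ to get the extra $q^{-n}$, and then apply typical-projector bounds to land on the exponent $S_{j}\log q+H(V_{j})-\log q-I(V_{j};Y_{j})$. The final exponent you obtain is exactly the paper's.

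There is, however, one slip in the operator analysis. You write that the gentle-operator lemma lets you replace $\rho_{x^{n}}^{Y_{j}}$ by $\pi^{Y_{j}}\rho_{x^{n}}^{Y_{j}}\pi^{Y_{j}}$ and then \emph{invoke} $\pi^{Y_{j}}\rho^{Y_{j}}\pi^{Y_{j}}\preceq 2^{-n(H(Y_{j})-\eta)}\pi^{Y_{j}}$ to conclude a \emph{per-term} bound $\tr(\pi^{Y_{j}}\pi_{\hatv_{j}^{n}}\pi^{Y_{j}}\rho_{x^{n}}^{Y_{j}})\le 2^{-n(I(V_{j};Y_{j})-2\eta)}$. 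That inequality is for the \emph{average} state $\rho^{Y_{j}}=(\rho^{Y_{j}})^{\otimes n}$, not for an individual $\rho_{x^{n}}^{Y_{j}}$; for a fixed typical $x^{n}$ there is no such eigenvalue bound, and the per-term estimate is false in general. The paper avoids this by first summing $\sum_{u_{1}^{n},\ulinev^{n},x^{n}}p_{XU_{1}\ulineV}^{n}(\cdots)\rho_{x^{n}}^{Y_{j}}\le\rho^{\otimes n}$ \emph{inside} the trace (this is the ``collapse into a probability'' you mention, but it must happen before, not after, the eigenvalue bound), and only then using cyclicity and $\pi^{Y_{j}}\rho^{\otimes n}\pi^{Y_{j}}\le 2^{-n[H(Y_{j})-2\eta]}\pi^{Y_{j}}$. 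Relatedly, no gentle-operator step is needed at all: the sandwich $\pi^{Y_{j}}\cdots\pi^{Y_{j}}$ is already present in $T_{j1}$, so cyclicity alone suffices. Reordering those two steps fixes your argument and makes it coincide with the paper's.
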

\end{proof}

We now state the inner bound in Thm.~\ref{Thm:3CQBCCodingTheorem1} after eliminating variables $B_{1},S_{2}$ and $S_{3}$ via the variable elimination technique proposed in \cite{201108ISIT_ChaEmaZamAre}.

\begin{corollary}
 \label{Cor:3CQBCCodingTheorem1}
 A rate-cost quadruple $(\ulineR,\tau)$ is achievable if there exists a finite set $\CalU_{1}$, a finite field $\CalV_{2}=\CalV_{3}=\CalW=\fieldq$ of size $q$, a PMF $p_{XU_{1}V_{2}V_{3}}$ on $\CalX\times\CalU_{1}\times\CalV_{2}\times\CalV_{3}$ and some $l \in \{2,3\}$ for which
 \begin{eqnarray}
\label{Eqn:3CQBCRateRegion1}
R_{j}&<&\Upsilon_{j}+H(V_{j})
\\
\label{Eqn:3CQBCRateRegion2}
R_{2}+R_{3}&<&\Upsilon_{2}+\Upsilon_{3}+H(V_{2},V_{3})
\\\label{Eqn:3CQBCRateRegion3}
R_{1}&<&\min\left\{ \!\!\!\begin{array}{c} I(U_{1};Y_{1},W),I(U_{1},W;Y_{1})+I(U_{1};W)+\gamma_{12}-H(W),I(U_{1};W,Y_{1})+\gamma_{12}+\Upsilon_{l}\end{array} \!\!\right\}
\\\label{Eqn:3CQBCRateRegion4}
R_{1}+R_{\msout{l}}&<&\min\left\{\!\!\! \begin{array}{c} 
H(V_{\msout{l}})-I(U_{1};V_{\msout{l}})+I(U_{1},W;Y_{1})+I(U_{1};W)-H(W),\\ H(V_{\msout{l}})-I(U_{1};V_{\msout{l}})+I(U_{1};Y_{1},W)+\Upsilon_{\msout{l}}
\end{array}\!\!\!\right\}
\\\label{Eqn:3CQBCRateRegion5}
R_{1}+R_{\msout{l}}&<&\min\left\{\!\!\! \begin{array}{c}
H(V_{\msout{l}})-I(U_{1};V_{\msout{l}})+\gamma+I(U_{1},W;Y_{1})+I(U_{1};W)-H(W)+\Upsilon_{\msout{l}},\\
H(V_{\msout{l}})+\gamma_{12}+I(U_{1},W;Y_{1})+I(U_{1};W)-H(W) +\Upsilon_{{l}}
\end{array}\!\!\!\right\}
\\\label{Eqn:3CQBCRateRegion6}
R_{1}+R_{l}&<&\min \left\{\!\!\! \begin{array}{c}
                                  H(V_{l}|U_{1})+I(U_{1},W;Y_{1})+I(U_{1};W)-H(W), H(V_{l}|U_{1})+I(U_{1};Y_{1},W)+\Upsilon_{l}
                                 \end{array}
\!\!\!\right\}
\\\label{Eqn:3CQBCRateRegion7}
R_{1}+R_{2}+R_{3}&<&\min \left\{\!\!\! \begin{array}{c}I(U_{1},W;Y_{1})+I(U_{1};W)-H(W)+H(V_{2},V_{3}|U_{1})+\min\{\Upsilon_{2},\Upsilon_{3}\},\\ H(V_{2},V_{3}|U_{1})+I(U_{1};Y_{1},W)+\Upsilon_{2}+\Upsilon_{3},\\I(U_{1},W;Y_{1})+I(U_{1};W)-H(W) +H(V_{2},V_{3}|U_{1})+\gamma_{12} + 2\Upsilon_{l},\\H(V_{3})+H(V_{2}|U_{1})+\Upsilon_{2}+I(U_{1},W;Y_{1})+I(U_{1};W)-H(W),\\H(V_{2})+H(V_{3}|U_{1})+\Upsilon_{3}+I(U_{1},W;Y_{1})+I(U_{1};W)-H(W) \end{array}\!\!\!\right\}
\\\label{Eqn:3CQBCRateRegion8}
R_{1}+R_{2}+2R_{3}&<& 2H(V_{3})+H(V_{2})-I(U_{1};V_{3})-I(V_{2};V_{3})+I(U_{1},W;Y_{1})+I(U_{1};W)-H(W)+2\Upsilon_{3}
\\\label{Eqn:3CQBCRateRegion9}
R_{1}+2R_{2}+R_{3}&<&2H(V_{2})+H(V_{3})-I(U_{1};V_{2})+I(U_{1},W;Y_{1})+I(U_{1};W)-H(W)+2\Upsilon_{2}\\
\label{Eqn:3CQBCRateRegion10}
2R_{1}+R_{2}+R_{3}&<& 2I(U_{1},W;Y_{1})+2I(U_{1};W)-2H(W)-I(V_{2},V_{3};U_{1})-I(V_{2};V_{3})+H(V_{2})+H(V_{3})
\end{eqnarray}
where $\Upsilon_{j} \define \min\{ I(V_{j};Y_{j})-H(V_{j}),I(W;Y_{1},U_{1})-H(W) \}$, $\gamma_{12} \define \min_{\theta \in \fieldq\setminus\{0\}}H(V_{2}\oplus_{q}\theta V_{3}|U_{1})$, $\gamma \define \min_{\theta \in \fieldq\setminus\{0\}}H(V_{2}\oplus_{q}\theta V_{3})$, $\sum_{x \in \CalX}p_{X}(x)\kappa(x) \leq \tau$, where all the above information quantities are computed wrt state,
\begin{eqnarray}
\label{Eqn:CorollaryState}
\sigma^{\ulineY XU_{1}V_{2}V_{3}W} = \sum_{\substack{x,u_{1},v_{2},v_{3},w}}p_{XU_{1}V_{2}V_{3}W}(x,u_{1},v_{1},v_{2},w)\rho_{x}\otimes  \ketbra{x~u_{1}~v_{2}~v_{3}~w}\mbox{ with}\\
p_{XU_{1}V_{2}V_{3}W}\left( {x,u_{1}, v_{2},v_{3},w} \right)=p_{XU_{1}V_{2}V_{3}}\left({x,u_{1}, v_{2},v_{3}}\right)\mathds{1}_{\{\substack{ w=  v_{2}\oplus_{q} v_{3}}\}}\nonumber
\end{eqnarray}
$\forall (x,u_{1},v_{2},v_{3},w) \in \CalX\times\CalU_{1}\times \CalV_{2}\times \CalV_{3}\times \CalW$.
\end{corollary}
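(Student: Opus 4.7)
The strategy is a direct application of the variable-elimination technique of \cite{201108ISIT_ChaEmaZamAre} to remove the auxiliary non-negative rate parameters $B_{1}, S_{2}, S_{3}$ from the characterization in Thm.~\ref{Thm:3CQBCCodingTheorem1}. Since the theorem is already in ``pre-Fourier--Motzkin'' form, the task is combinatorial: organize the lower and upper bounds on each auxiliary variable, pair them up, and catalogue the resulting inequalities in $(R_{1},R_{2},R_{3})$ after pruning redundancies.

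First, I would tabulate the bounds by role. The source-coding constraints \eqref{Eqn:SrcCodingBound} and \eqref{Eqn:SrcCodingBoundAlg}, indexed over $A \subseteq \{2,3\}$ and $D \subseteq \{1\}$, are lower bounds on linear combinations of $B_{1}, S_{2}\log q, S_{3}\log q$. Using $T_{j}\log q = R_{j}$, each can be rewritten as a lower bound on $\sum_{a \in A}S_{a}\log q + B_{D}$ with $R_{A}$ and information quantities of $p_{XU_{1}V_{2}V_{3}}$ on the right-hand side. The channel-coding constraints \eqref{Eqn:ChnlCodingBnd1}--\eqref{Eqn:ChnlCodingBnd4} analogously provide upper bounds. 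Two of these, namely \eqref{Eqn:ChnlCodingBnd4} and the implication of \eqref{Eqn:ChnlCodingBnd2} that treats $V_{2}\oplus_{q}V_{3}$ as the Rx~$1$ auxiliary, give competing ceilings on $S_{j}\log q$; their minimum is precisely what the corollary hides inside $\Upsilon_{j}+H(V_{j}) = \min\{I(V_{j};Y_{j}),\,H(V_{j})+I(W;Y_{1},U_{1})-H(W)\}$.

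Second, I would handle the max-type constraints. The lower bound \eqref{Eqn:SrcCodingBoundAlg} is a disjunction---at least one of $S_{2}\log q + B_{D}$, $S_{3}\log q + B_{D}$ exceeds the right-hand side---and this is precisely where the parameter $l \in \{2,3\}$ in the corollary statement enters: $l$ labels which argument is active, so $S_{\msout{l}}\log q$ need only satisfy the weaker non-``max'' bound from \eqref{Eqn:SrcCodingBound}. The further minimum over $\theta \in \fieldq\setminus\{0\}$ inside the right-hand side of \eqref{Eqn:SrcCodingBoundAlg} is what defines $\gamma_{12}$ and $\gamma$. By contrast, the max expressions in \eqref{Eqn:ChnlCodingBnd2}, \eqref{Eqn:ChnlCodingBnd3} are conjunctions (both arguments must be bounded). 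This asymmetry---disjunctive lower bounds paired with conjunctive upper bounds---is exactly the scenario the \cite{201108ISIT_ChaEmaZamAre} technique is designed to manage without the exponential case blow-up of naive Fourier--Motzkin.

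Third, I would pair each lower bound with each feasible upper bound, writing LB $<$ UB to eliminate $B_{1},S_{2},S_{3}$ in turn, and group the resulting inequalities by the coefficient pattern on $(R_{1},R_{2},R_{3})$. This naturally yields inequalities of the forms $R_{j}$, $R_{2}+R_{3}$, $R_{1}$, $R_{1}+R_{j}$, $R_{1}+R_{2}+R_{3}$, $R_{1}+R_{2}+2R_{3}$, $R_{1}+2R_{2}+R_{3}$, and $2R_{1}+R_{2}+R_{3}$, matching \eqref{Eqn:3CQBCRateRegion1}--\eqref{Eqn:3CQBCRateRegion10}. The main obstacle is the bookkeeping: the eight $(A,D)$ combinations in \eqref{Eqn:SrcCodingBound} alone, together with the disjunction in \eqref{Eqn:SrcCodingBoundAlg} and the dual ceilings on each $S_{j}$, produce many more candidate inequalities than the ten listed, and most are dominated by constraints derived from a smaller $(A',D')\subseteq(A,D)$ or from the opposite choice of $l$. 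Verifying that the ten listed inequalities form a minimal sufficient set, and checking (e.g., against the parameter choice in Prop.~\ref{Prop:OutperformsForQuantExample}) that no feasible rate triple is inadvertently lost, is the labor-intensive portion of the argument.
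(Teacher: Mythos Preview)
Your proposal is correct and follows the same approach as the paper: the paper's own proof is a single sentence stating that the corollary follows by eliminating $B_{1},S_{2},S_{3}$ from Thm.~\ref{Thm:3CQBCCodingTheorem1} via the technique of \cite{201108ISIT_ChaEmaZamAre}. Your write-up is in fact considerably more explicit than the paper's, correctly identifying the disjunctive nature of \eqref{Eqn:SrcCodingBoundAlg} as the origin of the case parameter $l$, the conjunctive nature of the $\max$ in \eqref{Eqn:ChnlCodingBnd2}--\eqref{Eqn:ChnlCodingBnd3}, and the role of the competing ceilings on $S_{j}\log q$ in producing $\Upsilon_{j}$.
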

\begin{proof}
 Follows by eliminating variables $B_{1},S_{2},S_{3}$ in the characterization of the inner bound derived in Thm.~\ref{Thm:3CQBCCodingTheorem1} using the technique proposed in \cite{201108ISIT_ChaEmaZamAre}
\end{proof}

\subsection{Decoding Sum of Public Codewords}
\label{SubSec:DecSumOfPublicCdwrds}
The coding scheme presented in Sec.~\ref{SubSec:Rx1DecodesSumOfPvtCodebooks} does not exploit the technique of message splitting. Recall that both in the Marton's coding scheme and the Han-Kobayashi coding scheme, each Rx faciltates the other Rx to decode a part of its message. This ensures that each Rx decode only that component of the other user's signal that is interefering. Message splitting also ensures that the primary user's code is not rate limited. In this section, we split Rx $2$ and $3$'s transmission into two parts each - $U_{j}$ and $V_{j}$. Rx $1$ as before decodes $U_{1},V_{2}\oplus_{q}V_{3}$. For $j=2,3$, Rx $j$ decodes both $U_{j},V_{j}$ codebooks.

\begin{theorem}
\label{Thm:3CQBCCodingTheorem2}
  A rate-cost quadruple $(\ulineR,\tau)$ is achievable if there exists finite sets $\CalU_{1},\CalU_{2},\CalU_{3}$, a finite field $\CalV_{2}=\CalV_{3}=\CalW=\fieldq$ of size $q$, real numbers $S_{j}\geq 0,T_{j}\geq 0:j=2,3$, $B_{k},L_{k} : k \in [3]$ satisfying $R_{1}=L_{1}$, $R_{j}=L_{j}+T_{j}\log q$ for $j=2,3$ and a PMF $p_{XU_{1}V_{2}V_{3}}$ on $\CalX\times\CalU_{1}\times\CalV_{2}\times\CalV_{3}$ wrt which
 \begin{eqnarray}
 \label{Eqn:Step2SrcCodingBound}
(S_{A}-T_{A})\log q + B_{D} &>&\! \sum_{d \in D}\!\!H(U_{d})+\sum_{a\in A}\!\!H(V_{a})-H(V_{A},U_{D})+|A|\log q -\sum_{a\in A}\!\!H(V_{a})
\\
\begin{array}{c}
\max\left\{S_{2}\log q + B_{D}, S_{3}\log q + B_{D} \right\}
  \end{array}&>& \log q  +\sum_{d \in D}H(U_{d}) \label{Eqn:Step2SrcCodingBoundAlg}- \min_{\theta \in 
\fieldq\setminus\{0\}}H(V_{2}\oplus \theta V_{3},U_{D}),
\\
\label{Eqn:Step2ChnlCodingBnd1}
R_{1}+B_{1} &<& I(Y_{1},V_{2}\oplus_{q}V_{3};U_{1}), \\
\label{Eqn:Step2ChnlCodingBnd2}
\begin{array}{c}
\max\left\{S_{2}\log q, S_{3}\log q  \right\}
  \end{array}
  &<&   \left\{ 
   I(Y_{1},U_{1};V_{2}\oplus_{q}V_{3})+\log q - H(V_{2}\oplus_{q} V_{3})
   \right\},\\
   \label{Eqn:Step2ChnlCodingBnd3}
  \max\left\{
R_{1}+B_{1}+S_{2}\log q,R_{1}+B_{1}+S_{3}\log q \right\}
  &<&  I(Y_{1};V_{2}\oplus_{q}V_{3},U_{1})+I(U_{1};V_{2}\oplus_{q} V_{3})+\log q - H(V_{2}\oplus_{q} V_{3})\\
  \label{Eqn:Step2ChnlCodingBnd4}
  S_{j}\log q &<& I(Y_{j},U_{j};V_{j})+\log q - H(V_{j})\\
  \label{Eqn:Step2ChnlCodingBnd5}
  L_{j}+B_{j} &<& I(Y_{j},V_{j};U_{j})\\
  \label{Eqn:Step2ChnlCodingBnd6}
  S_{j}\log q+L_{j}+B_{j} &<& I(Y_{j};U_{j},V_{j})+I(U_{j};V_{j})+\log q - H(V_{j})
 \end{eqnarray}
 for all $A \subseteq\{2,3\}$, $D \subseteq\{1\}$, $j=2,3$, $\sum_{x \in \CalX}p_{X}(x)\kappa(x) \leq \tau$, where all the above information quantities are computed wrt state,
\begin{eqnarray}
\label{Eqn:Step2TheoremState}
\sigma^{\ulineY X\ulineU V_{2}V_{3}W} = \sum_{\substack{x,\ulineu,v_{2},v_{3},w}}p_{X\ulineU V_{2}V_{3}W}(x,\ulineu,v_{1},v_{2},w)\rho_{x}\otimes  \ketbra{x~u_{1}~u_{2}~u_{3}~v_{2}~v_{3}~w}\mbox{ with}\\
p_{X\ulineU V_{2}V_{3}W}\left( {x,u_{1},u_{2},u_{3},v_{2},v_{3},w} \right)=p_{X\ulineU V_{2}V_{3}W}\left({x,\ulineu,v_{2},v_{3}}\right)\mathds{1}_{\{\substack{ w= v_{2}\oplus_{q} v_{3}}\}}
\end{eqnarray}
$\forall (x,\ulineu,v_{2},v_{3},w) \in \CalX\times\ulineCalU\times \CalV_{2}\times \CalV_{3}\times \CalW$.
 \end{theorem}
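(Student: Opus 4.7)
The plan is to lift the coding architecture of Thm.~\ref{Thm:3CQBCCodingTheorem1} by appending, for each $j \in \{2,3\}$, an auxiliary ``public'' IID codebook $\CalC_{U_j} = (u_j^n(l_j,b_j) : l_j \in [2^{nL_j}], b_j \in [2^{nB_j}])$ built over $\CalU_j$ whose codewords are generated IID $\prod p_{U_j}$. The codebook $\CalC_{U_1}$ and the PCCs $\lambda_2,\lambda_3$ for $V_2,V_3$ are generated exactly as in Sec.~\ref{SubSec:CodeStructure}, with $\lambda_2$ a sub-coset of $\lambda_3$, yielding the sum code $\lambdapl$. The message $m_j = (l_j,\tilde{m}_j) \in [2^{nL_j}]\times[q^{nT_j}]$ for $j=2,3$ is split so that $l_j$ indexes the public codebook $\CalC_{U_j}$ and $\tilde{m}_j$ indexes the PCC $\lambda_j$. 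Encoding proceeds in Marton style: given $\ulinem$, form the list
\begin{eqnarray}
\CalL(\ulinem)\define\left\{\!\!\!\begin{array}{c}(b_1,b_2,b_3,a_2,a_3):b_k\in[2^{nB_k}],a_j\in c_j(\tilde{m}_j) \\ \ni (u_1^n(m_1,b_1),u_2^n(l_2,b_2),u_3^n(l_3,b_3),v_2^n(a_2),v_3^n(a_3))\in T_\eta(p_{\ulineU V_2 V_3})\end{array}\!\!\!\right\},\nonumber
\end{eqnarray}
pick a jointly typical quintuple uniformly from $\CalL(\ulinem)$ when $\alpha(\ulinem) = |\CalL(\ulinem)| \geq 1$, and pass it through the fusion map $f:\CalU_1^n\times\CalU_2^n\times\CalU_3^n\times\CalV_2^n\times\CalV_3^n\rightarrow\CalX^n$ that samples $X^n$ according to $p_{X|\ulineU V_2 V_3}^n$.

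For decoding, Rx $1$ uses exactly the POVM family of Sec.~\ref{SubSec:DecodingPOVMs} acting on $(U_1,W)$ with $W=V_2\oplus_q V_3$. For $j\in\{2,3\}$, Rx $j$ employs a two-codebook joint decoder: construct the analogue of $\theta_{a_j}$ in \eqref{Eqn:POVMOfDecoder2And3} but with the C-Typ-Proj.\ $\pi_{u_j^n,v_j^n}$ wrt $\rho_{u_j v_j}^{Y_j} \define \tr_{Y_1 Y_{\msout{j}}}(\sum_x p_{X|U_j V_j}(x|u_j,v_j)\rho_x^{\ulineY})$ sandwiched between the C-Typ-Proj.\ $\pi_{u_j^n}$ (wrt $\rho_{u_j}^{Y_j}$) and the U-Typ-Proj.\ $\pi^{Y_j}$, then bin-sum over $a_j\in c_j(\tilde{m}_j)$ and sum over $(l_j,b_j)$ to obtain $\theta_{m_j}$.

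The error analysis is carried out by reproducing the decomposition $\xi_{\ulinem}\leq T_0+T_1+T_2+T_3$ and the ``list threshold'' event $\CalE_{\ulinem}\define\{\alpha(\ulinem)\geq 2^{n(\tau-\eta)}\}$ of Sec.~\ref{SubSec:ProbErrorAnalysis}, now with $\tau = \sum_k B_k + \sum_{j=2,3}(S_j - T_j)\log q - 2\log q + H(\ulineU, V_2, V_3) - H(U_1)$ (the log-expected count of jointly typical quintuples per message tuple). Analysis of $T_0$ reduces to a classical covering lemma over the mixed ensemble of three IID codes and two jointly designed PCCs; running the same union-of-subsets argument as in Prop.~\ref{Prop:TheClassicalCovering} (now with $D\subseteq\{1\}$ parametrizing which $U_1$-bin is merged, and $A\subseteq\{2,3\}$ parametrizing which PCC bins are merged, while $U_2,U_3$-binning is absorbed into the joint typicality rate) yields the covering bounds \eqref{Eqn:Step2SrcCodingBound}--\eqref{Eqn:Step2SrcCodingBoundAlg}. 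Analysis of $T_1$ is \emph{identical} to Sec.~\ref{SubSec:ProbErrorAnalysis}: the only change is that the traced-out state $\rho_{\ulinem}^{Y_1}$ is averaged over the public codewords $u_2^n,u_3^n$ as well, which is absorbed into the Holevo-information terms in \eqref{Eqn:Step2ChnlCodingBnd1}--\eqref{Eqn:Step2ChnlCodingBnd3} via the same Hayashi--Nagaoka split into $T_{11},T_{12},T_{13},T_{14}$ and the probability computations \eqref{Eqn:ProbabilityofEevents1}--\eqref{Eqn:Rx1ErrorEventAssociatedProbs}.

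The main obstacle, as anticipated, lies in the analysis of $T_j$ for $j=2,3$, since Rx $j$ now jointly decodes a pair $(U_j,V_j)$ where $U_j$ comes from an unstructured IID codebook while $V_j$ comes from a PCC. Apply Hayashi--Nagaoka to the sandwiched operator $\pi^{Y_j}\pi_{u_j^n}\pi_{u_j^n,v_j^n}\pi_{u_j^n}\pi^{Y_j}$ to split the error into the correct term and three error events corresponding to (a) $\hat{l}_j\ne l_j$ or $\hat{b}_j\ne b_j$ with $\hat{a}_j$ possibly equal to $a_j(\ulinem)$, (b) $\hat{a}_j\ne a_j(\ulinem)$ with $(\hat{l}_j,\hat{b}_j)=(l_j,b_j)$, and (c) both incorrect. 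For (a) and (c) the codeword probabilities factor as in the Han--Kobayashi analysis, producing the bound \eqref{Eqn:Step2ChnlCodingBnd6} after invoking the list-threshold event to recover the binning exponent (as in Rem.~\ref{Rem:ListThresholdEvent}); case (a) alone with $\hat{a}_j=a_j$ yields \eqref{Eqn:Step2ChnlCodingBnd5}; case (b) alone yields \eqref{Eqn:Step2ChnlCodingBnd4} by the argument of Prop.~\ref{Prop:Rx2And3ErrorEvent} applied conditionally on $U_j^n$, since the PCC codewords are uniformly distributed on $\fieldq^n$ and the conditional C-Typ-Proj.\ $\pi_{u_j^n,v_j^n}$ produces an exponent $H(V_j|U_j)$ weighted by $\log q - H(V_j)$ from the uniform distribution, matching the RHS of \eqref{Eqn:Step2ChnlCodingBnd4}. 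Combining the expected-error bounds via Markov's inequality and selecting a code realization concludes the proof.
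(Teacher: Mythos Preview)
The paper does not actually prove Thm.~\ref{Thm:3CQBCCodingTheorem2}; immediately after the statement it reads ``A proof is provided in a subsequent version of this manuscript.'' There is thus nothing in the paper to compare your attempt against beyond the one-paragraph description in Sec.~\ref{SubSec:DecSumOfPublicCdwrds}, and your sketch follows that description faithfully: equip Rxs $2,3$ with an additional IID public codebook over $\CalU_{j}$, split $m_{j}=(l_{j},\tilde m_{j})$, have Rx $1$ decode $(U_{1},W)$ exactly as in Thm.~\ref{Thm:3CQBCCodingTheorem1}, and have Rx $j$ jointly decode $(U_{j},V_{j})$ via the same sandwiched-projector POVM used for Rx $1$. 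The Hayashi--Nagaoka three-way split you propose for $T_{j}$ is the natural route to \eqref{Eqn:Step2ChnlCodingBnd4}--\eqref{Eqn:Step2ChnlCodingBnd6}.

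Two points worth tightening. First, your list-threshold exponent is off: with three IID codebooks the probability that a fixed quintuple is jointly typical carries a factor $2^{-n\sum_{k}H(U_{k})}$, so the correct exponent is $\tau=\sum_{k}B_{k}+\sum_{j}(S_{j}-T_{j})\log q-2\log q-\sum_{k}H(U_{k})+H(\ulineU,V_{2},V_{3})$, i.e.\ the last two terms should collapse to $H(V_{2},V_{3}\,|\,\ulineU)$ rather than your $H(\ulineU,V_{2},V_{3})-H(U_{1})$. Second, a covering lemma over five bins would in general produce lower bounds indexed by all $D\subseteq\{1,2,3\}$, whereas \eqref{Eqn:Step2SrcCodingBound}--\eqref{Eqn:Step2SrcCodingBoundAlg} are stated only for $D\subseteq\{1\}$; your one-line claim that the $U_{2},U_{3}$ binning is ``absorbed into the joint typicality rate'' does not explain why no lower bounds on $B_{2},B_{3}$ appear. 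Either the theorem statement is missing constraints (the paper already acknowledges typos in Thm.~\ref{Thm:3CQBCCodingTheorem1} via footnotes), or a more careful staged-encoding argument is needed---in either case you should address this explicitly rather than wave it away.
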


In the previous section, we introduced an inner bound for the 3-user CQBC, where only Rx 1 decodes a bivariate function of the interference, while Rx 2 and Rx 3 decodes only their respective codewords. To analyze the decoding error at Rx 1, we employed Fawzi et.~al.’s simultaneous decoding technique \cite{201206TIT_FawHaySavSenWil}. This technique devised by Fawzi et.~al.,  can be used to decode at most two codewords simultaneously. In the next section, we characterize our second and third inner bounds. In deriving these new inner bounds, we leverage Sen's \cite{202103SAD_Sen} recent technique of tilting smoothing and augmentation to design and analyse a simultaneous decoding POVM. To emphasize, the coding strategy devised in the next section enables all three receivers to decode a bivariate function of the interference, peel the same off and decode their respective messages.

\section{Simultaneous Decoding of Coset Codes}
\label{Sec:SimultDecOfCosetCds}

Through Exs.~\ref{Ex:GenAdd3CQBCExample}, \ref{Ex:RotatedExampleWithAdd}, it is evident that decoding is in general more efficient than Tx precoding and (ii) interference on a $3-$CQBC is in general a bivariate function of the interferer signals, Exs.~~\ref{Ex:GenAdd3CQBCExample}, \ref{Ex:RotatedExampleWithAdd} illustrate that by employing jointly designed coset codes and Rxs decoding the bivariate components directly, interference is more efficiently managed. A general coding scheme must facilitate each Rx to efficiently decode \textit{both} univariate and bivariate components of its two interfering signals. Unstructured IID codes and jointly coset codes are efficient at decoding the former and latter components respectively. A general coding strategy must therefore incorporate conventional unstructured IID and coset codebooks to enable each Rx decode both univatiate and bivariate functions of the other Rx's signals.

We are led to the following general coding strategy. See Fig.~\ref{Fig:3CQBCMapOfRVs}. Each Rx $j$'s message is split into six parts $m_{j}=(m_{j}^{W},m_{ji}^{Q},m_{jk}^{Q},m_{ji}^{U},m_{jk}^{U},m_{j}^{V})$ where $i,j,k \in [3]$ are distinct indices i.e., $\{ i,j,k\}=[3]$. $m_{j}^{W},m_{ji}^{Q},m_{jk}^{Q}$ and $m_{j}^{V}$ are encoded using unstructured ID codes built on $\CalW,\CalQ_{ji},\CalQ_{jk},\CalV_{j}$ respectively. $m_{ji}^{U}$ and $m_{jk}^{U}$ are encoded via coset codes built on $\CalU_{ji}=\CalF_{\upsilon_{i}},\CalU_{jk}=\CalF_{\upsilon_{k}}$ respectively. In addition to decoding all these six parts, Rx $j$ also decodes $m_{ij}^{T},m_{kj}^{T}$ and the sum $u_{ij}^{n}(m_{ij}^{U}) \oplus u_{kj}^{n}(m_{kj}^{U})$ of the codewords corresponding to $m_{ij}^{U}$ and  $m_{kj}^{U}$ respectively. We let (i) $\ulineT_{j*}=(T_{ji},T_{jk})$, (ii) $\ulineT_{*j}=(T_{ij},T_{kj})$ and (iii) $U_{j}^{\oplus} = U_{ij}\oplus U_{kj}$ denote (i) RVs encoding semi-private parts $m_{ji}^{T},m_{jk}^{T}$ of Rx $j$, (ii) semi-private parts of Rx $i$,$k$  decoded by Rx $j$ and (iii) bivariate component decoded by Rx $j$ respectively. We present the analysis of general strategy in two steps. In contrast to Sec.~\ref{Sec:AchRegionI} where we designed coding strategies using PCCs, we will use Nested Coset Codes (NCCs) in this section. We define the same below.
\begin{definition}
\label{Defn:NestedCosetCode}
An $(n,k,l,g_{I},g_{o/I}, b^n)$ nested coset code (NCC) over $\CalF_{\Prime}$  consists of (i) generators matrices $ g_{I} \in \CalF_{\Prime}^{k\times n}$ and $ g_{o/I} \in \CalF_{\Prime}^{l \times n}$, 
(ii) bias vector $ b^n \in \CalF_{\Prime}^n$, and (iii) an encoder map $e : \CalF_{\Prime}^l \rightarrow \CalF_{\Prime}^k$. We let $u^n(a,m)= a g_{I}\oplus m g_{o/I} \oplus b^n$, for $(a,m) \in \CalF_{\Prime}^k \times \CalF_{\Prime}^l$ denote elements in its range space.
\end{definition}
\begin{figure}[H]
 \centering
\includegraphics[width=5in]{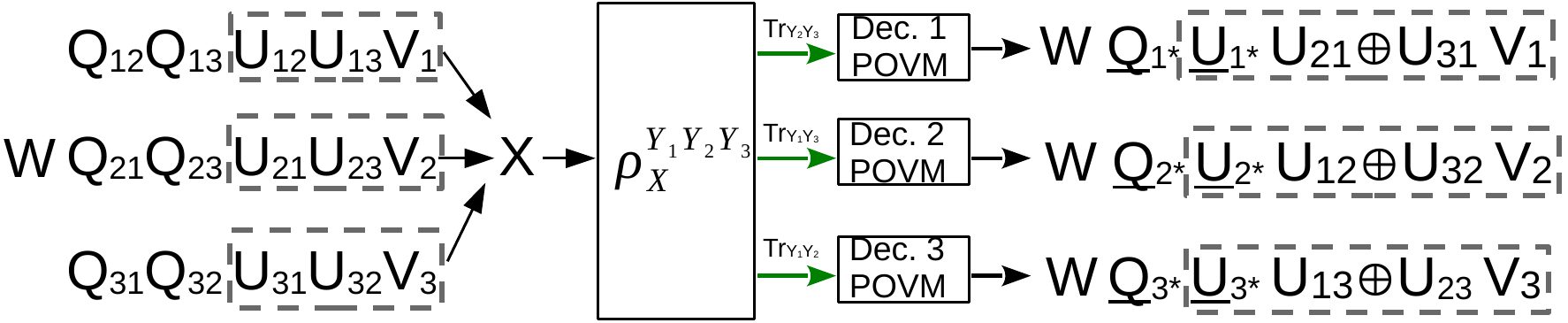}
    \caption{Depiction of all RVs in the full blown coding strategy. In Sec.~\ref{SubSec:StepICodingTheorem} (Step I) only RVs in the gre dashed box are non-trivial, with the rest trivial.}
    \label{Fig:3CQBCMapOfRVs}
\end{figure}

\med\textbf{Notation}: For prime $\Prime \in \integers$, $\CalF_{\Prime}$ will denote finite field of size $\Prime \in \integers$ with $\oplus$ denoting field addition in $\CalF_{\Prime}$ (i.e. mod$-\Prime$). \textcolor{black}{In any context, when used in conjunction }$i,j,k$ will denote distinct indices in $[3]$, hence $\{i,j,k\}=[3]$. $\dulineU \define (\ulineU_{1*},\ulineU_{2*},\ulineU_{3*})=(\ulineU_{*1},\ulineU_{*2},\ulineU_{*3})$, $\dulineT \define (\ulineT_{1*},\ulineT_{2*},\ulineT_{3*})$
\begin{figure}[H]
   \centering
\includegraphics[width=3in]{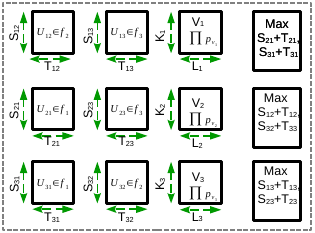}
   \vspace{-0.1in}
    \caption{The $9$ codebooks on the left are used by Tx. The $3$ rightmost cosets are obtained by adding corresp.~cosets. Rx $k$ decodes into $4$ codes in row $k$.}
    \label{Fig:3CQBCStep1SimultDecoding}
        \vspace{-0.1in}
\end{figure}
\subsection{Step I : Simultaneous decoding of Bivariate Components}
\label{SubSec:StepICodingTheorem}
In this first step, we activate only the `bivariate' $\dulineU \define (\ulineU_{1*},\ulineU_{2*},\ulineU_{3*})$ and private parts $\ulineV=V_{1},V_{2},V_{3}$ (Fig.~\ref{Fig:3CQBCMapOfRVs}) are non-trivial, with the rest $\ulineW=\phi,\ulineQ = (\ulineQ_{k*}: k \in [3])=\phi$ trivial (Fig.~\ref{Fig:3CQBCMapOfRVs}). $\ulineV$ being unstructured IID and $\dulineU$ coded via coset codes, Thm.~\ref{Thm:3CQBCStepIInnerBound} possesses all non-trivial elements and meets all objectives.
\begin{theorem}
\label{Thm:3CQBCStepIInnerBound}
Let $\hat{\alpha}_{S} \in [0,\infty)^{4}$ be the set of all rate-cost quadruples $(R_{1},R_{2},R_{3},\tau) \in [0,\infty)^{4}$ for which there exists (i) finite sets $\CalV_{j}: j \in [3]$, (ii) finite fields $\SemiPrivateRVSet_{ij}=\CalF_{\upsilon_{j}}$ for each $ij \in \Xi\define \{12,13,21,23,31,32\}$, (iii) a PMF $p_{\dulineU~\!\!\ulineV X}=p_{\ulineU_{1*}\ulineU_{2*}\ulineU_{3*}V_{1}V_{2}V_{3} X}$ on $\dulineCalU \times \ulineCalV\times\CalX$, (iv) nonnegative numbers $S_{ij},T_{ij}:ij \in \left\{12,13,21,23,31,32 \right\}, K_{j},L_{j}:j\in \left\{ 1,2,3\right\}$, such that
$R_{1}=T_{12}\log \upsilon_{2}+T_{13}\log \upsilon_{3}+L_{1},
R_{2}=T_{21}\log \upsilon_{1}+T_{23}\log \upsilon_{3}+L_{2},
R_{3}=T_{31}\log _{1}+T_{32}\log \upsilon_{2}+L_{3}$, $\Expectation\left\{ \kappa(X) \right\} \leq \tau$,
\begin{eqnarray}
\label{Eqn:ManyToManySourceCodingBounds}
&& S_{A}+M_{B}+K_{C} \stackrel{\boldsymbol{\alpha}}{>}\Theta(A,B,C),\mbox{ where } 
\end{eqnarray}
\begin{eqnarray}
\Theta(A,B,C)\define \!\! \max_{\substack{(\theta_{j}:j \in B) \\\in \underset{j \in B}{\prod} \fieldpij}}  \left\{
\!\!\! \begin{array}{c}\sum_{a \in A}\log |\mathcal{U}_{a}| + \sum_{j \in B}\log \upsilon_{j} +\sum_{c \in C} H(V_{c}) - H(U_{A},U_{ji}\oplus \theta_{j}U_{jk}:j \in B,V_{C})\end{array}\!\!\!
 \right\}
\nonumber
\end{eqnarray}
for all $A \subseteq \left\{12,13,21,23,31,32\right\}, B \subseteq \left\{ 1,2,3 \right\}, C \subseteq \left\{ 1,2,3 \right\}$, that satisfy $A \cap A(B) = \phi$, where $A({B}) = \cup_{j \in B}\{ ji,jk\}$, $U_{A} = (U_{jk}:jk \in A)$, $V_{C}=(V_{c}:c \in C)$, $S_{A} = \sum _{jk \in A}S_{jk}, M_{B}\define \sum_{j \in B} \max\{ S_{ij}+T_{ij},S_{kj}+T_{kj}\}, K_{C} = \sum_{c \in C}K_{c}$, and
\begin{eqnarray}
\label{Eqn:CQBCChannelCodingStep1Bounds2}
S_{\mathcal{A}_{j}}+T_{\mathcal{A}_{j}} \leq \sum_{a \in \mathcal{A}_{j}}\!\!\log |\mathcal{U}_{a}| - H(U_{\mathcal{A}_{j}}|U_{\mathcal{A}_{j}^{c}},U_{ij}\oplus U_{kj},V_{j},Y_{j})
\\\label{Eqn:CQBCChannelCodingStep1Bounds3}
S_{\mathcal{A}_{j}}+T_{\mathcal{A}_{j}}+S_{ij}+T_{ij} \leq \sum_{a \in \mathcal{A}_{j}}\log |\mathcal{U}_{a}| + \log \upsilon_{j} - H(U_{\mathcal{A}_{j}},U_{ij}\oplus
U_{kj}|U_{\mathcal{A}_{j}^{c}},V_{j},Y_{j}) \\
\label{Eqn:CQBCChannelCodingStep1Bounds4}
S_{\mathcal{A}_{j}}+T_{\mathcal{A}_{j}}+S_{kj}+T_{kj} \leq \sum_{a \in \mathcal{A}_{j}}\log |\mathcal{U}_{a}| + \log \upsilon_{j}
- H(U_{\mathcal{A}_{j}},U_{ij}\oplus
U_{kj}|U_{\mathcal{A}_{j}^{c}},V_{j},Y_{j}) \\
\label{Eqn:CQBCChannelCodingStep1Bounds5}
S_{\mathcal{A}_{j}}+T_{\mathcal{A}_{j}}+K_{j}+L_{j} \leq \sum_{a \in \mathcal{A}_{j}}\log |\mathcal{U}_{a}|+H(V_{j})-H(U_{\mathcal{A}_{j}},V_{j}|U_{\mathcal{A}_{j}^{c}},U_{ij}\oplus
U_{kj},Y_{j}) \\
\label{Eqn:CQBCChannelCodingStep1Bounds6}
S_{\mathcal{A}_{j}}+T_{\mathcal{A}_{j}}+K_{j}+L_{j}+S_{ij}+T_{ij} \leq \sum_{a \in \mathcal{A}_{j}}\log |\mathcal{U}_{a}| + \log \upsilon_{j} +H(V_{j}) -
H(U_{\mathcal{A}_{j}},V_{j},U_{ij}\oplus U_{kj}|U_{\mathcal{A}_{j}^{c}},Y_{j}) \\
\label{Eqn:CQBCChannelCodingStep1Bounds7}
S_{\mathcal{A}_{j}}+T_{\mathcal{A}_{j}}+K_{j}+L_{j}+S_{kj}+T_{kj} \leq \sum_{a \in \mathcal{A}_{j}}\!\!\log |\mathcal{U}_{a}| + \log  \upsilon_{j} +H(V_{j}) - H(U_{\mathcal{A}_{j}},V_{j},U_{ij}\oplus
U_{kj}|U_{\mathcal{A}_{j}^{c}},Y_{j}), 
\end{eqnarray}
for every $\mathcal{A}_{j} \subseteq \left\{ ji,jk\right\}$ with
distinct indices $i,j,k$ in $\left\{ 1,2,3 \right\}$, where
$S_{\mathcal{A}_{j}} \define \sum_{a \in \mathcal{A}_{j}}S_{a},
T_{\mathcal{A}_{j}} \define \sum_{a \in \mathcal{A}_{j}}T_{a},
U_{\mathcal{A}_{j}} = (U_{a}:a \in \mathcal{A}_{j})$ and all the information quantities are evaluated wrt state 
\begin{eqnarray}
 \label{Eqn:StageITestChnl}
 \Psi^{\dulineU\!\!~\ulineU^{\oplus}\!\!~\ulineV X\!\!~\ulineY}\define \!\!\!\!\!\!
 \sum_{\substack{\dulineu, \dulinev,x\\u_{1}^{\oplus},u_{2}^{\oplus},u_{3}^{\oplus} }}\!\!\!\!\!p_{\dulineU\ulineV X}(\ulineu_{1*},\ulineu_{2*},\ulineu_{3*},\ulinev,x)\mathds{1}_{\left\{\substack{u_{ji}\oplus u_{jk}\\=u_{j}^{\oplus}:j\in[3]}\right\}}\!\!\ketbra{\ulineu_{1*}~\! \ulineu_{2*}~\! \ulineu_{3*}~\! u_{1}^{\oplus}~\! u_{2}^{\oplus}~\! u_{3}^{\oplus}~\! \ulinev~\! x} \!\otimes\! \rho_{x}. \nonumber
\end{eqnarray}
Let $\alpha_{S}$ denote the convex closure of $\hat{\alpha}_{S}$. Then $\alpha_{S} \subseteq \ScrC(\tau)$ is an achievable rate region.

\end{theorem}
\begin{remark}
 \label{Rem:Step1CodingThmRemarks}
 Inner bound $\alpha_{S}$ (i) subsumes \cite[Thm.~1]{202202arXiv_Pad3CQBC, 202206ISIT_Pad}, (ii) is the CQ analogue of \cite[Thm.~7]{201603TIT_PadSahPra}, (iii) does \textit{not} include a time-sharing RV and \textit{includes} the `dont's care' inequalities \cite{200702ITA_KobHan,200807TIT_ChoMotGarElg} - \eqref{Eqn:CQBCChannelCodingStep1Bounds3}, \eqref{Eqn:CQBCChannelCodingStep1Bounds4} with $A_{j}=\phi$. Thus, $\alpha_{S}$ can be enlarged.
\end{remark}

\med\textit{Discussion of the bounds}: We \textit{explain} how (i) each bound arises and (ii) their role in driving error probability down. All the source coding bounds are captured through lower bound \eqref{Eqn:ManyToManySourceCodingBounds}$\boldsymbol{\alpha}$ by choosing different $A,B,C$. To understand these bounds, lets begin with simple example. Suppose we have $K$ codebooks $C_{k}=(G_{k}^{n}(m_{k}):m_{k} \in 2^{nR_{k}}):k \in [K]$ with all of them mutually independent and picked IID in the standard fashion with distribution $G_{k}^{n}(m_{k})\sim q_{G_{k}}^{n}$ for $k \in [K]$. What must the bounds satify so that we can find \textit{one} among the $2^{n(R_{1}+\cdots +R_{K})}$ $K-$tuples of codewords to be jointly typical wrt a joint PMF $r_{\ulineG}=r_{G_{1}G_{2}\cdots G_{K}}$. Denoting $q_{\ulineG}= \prod_{k=1}^{K}q_{G_{k}}$, we know from standard info.~theory that if $\sum_{l \in S}R_{l} > D(r_{G_{S}}||\otimes_{l\in S}r_{G_{l}})+D(\otimes_{l\in S}r_{G_{l}}||\otimes_{l\in S}q_{G_{l}})$ for all $S\subseteq [K]$, then\footnote{As is standard, here $r_{G_{S}}\!$ isthe marginal of $r_{\ulineG}$ over the RVs indexed in S.} via the standard second moment method \cite{198101TIT_GamMeu} we can find the desired $K-$tuple. The bounds in \eqref{Eqn:ManyToManySourceCodingBounds}$\boldsymbol{\alpha}$ are just these. We summarize here the detailed explanation provided in \cite{202202arXiv_Pad3CQBC}. We have $9$ codebooks (Fig.~\ref{Fig:3CQBCStep1SimultDecoding}), however we should also consider sum of $U_{ji}$ and $U_{jk}$ codebooks, therefore 12 codebooks, hence $K=12$. $r_{G_{1}\cdots G_{6}}=p_{\dulineU}$ and $q_{U_{ji}}=\frac{1}{|\CalU_{ji}||}$ is uniform. Indeed, owing to pairwise independence and uniform distribution of codewords in the random coset code \cite{BkNIT_PraPadShi}. This explains the the first term within the $\max$ defining $\Theta(\cdots)$. $r_{G_{7}G_{8}G_{9}}=p_{\ulineV}$ and the $V-$codewords have been picked independently $\prod p_{V_{j}}$. This explains the third term in definition of $\Theta(\cdots)$. Owing to the coset structure, it turns out that \cite[Bnds.~(45), (46)]{201804TIT_PadPra} the $U_{ij}$-codeword plus $\theta_{j}$ times $U_{kj}$-codeword must be found in the sum of the $U_{ij}, U_{kj}$ cosets (Fig.~\ref{Fig:3CQBCStep1SimultDecoding} rightmost codes). These vectors being uniformly distributed $\sim \frac{1}{\upsilon_{j}}$, we have the second term and the presence of $U_{ji}\oplus \theta_{j}U_{jk}$ in the entropy term in the defn.~of $\Theta(\cdots)$.\footnote{The curious reader may look at the two bounds in \cite[Bnds.~(72)]{201804TIT_PadPra} with the $\max_{\theta \neq 0}$ which in fact yields \cite[Bnds.~(45), (46)]{201804TIT_PadPra}.} Due to page limit constraints, we are unable to explain the channel coding bounds. However, these are identical to the ones obtained on a $3-$user CQ interference channel ($3$-CQIC). Our companion paper \cite{202503arXiv_GouPad} explains these bounds vividly and further sketches the elements of tilting. The explanation of the channel coding bounds and in fact a full proof of Thm.~\ref{Thm:3CQBCStepIInnerBound} is provided in \cite{202202arXiv_Pad3CQBC}.
\vspace{-0.1in}
\med\textbf{Note} : We have chosen to discuss the source coding bounds in this article since the source coding bounds are different for a broadcast channel as against to an interference channel wherein the source coding bounds are trivial. The channel coding bounds are identical to both and hence provided in the IC paper.
\begin{proof}
Throughout this detailed sketch of proof, $i,j,k$ will represent distinct indices in $[3]$, i.e., $\{i,j,k\} = [3]$ and we let $\dbrackthree \define \{12,13,21,23,31,32\}$. Consider (i) $\SemiPrivateRVSet_{ji} = \CalF_{\Prime_{i}}$ for $ji \in \llbracket 3 \rrbracket$ , (ii) PMF $p_{\dulineU \ulineV X}=p_{\ulineU_{1*}\ulineU_{2*}\ulineU_{3*} V_1 V_2 V_3 X}$ on $\dulineCalU\times\ulineCalV\times\CalX$ as described in the hypothesis. Let $(R_1,R_2,R_3)$ be a rate triple for which there exists non negative numbers $S_{ji},T_{ji},K_{j},L_{j}$ for $ji \in \llbracket 3 \rrbracket$, $j \in [3]$ such that $\Expectation\{\kappa(X)\}\leq \tau$, $R_{j}=T_{ji}+T_{jk}+L_{j}$
and the bounds in (\ref{Eqn:CQBCChannelCodingStep1Bounds2})-(\ref{Eqn:CQBCChannelCodingStep1Bounds7}) hold. We begin our proof by describing the code structure, encoding and decoding.\\

\noindent\textit{Code structure:} Message $m_j$ intended for Rx $j$ splits $\ulinem_j=(m_{ji},m_{jk},m_{jj})$ into three parts - two semi-private parts $m_{ji},m_{jk}$ and one private part $m_{jj}$, see Fig. ~\ref{Fig:FigCodeStructureCQBC}. The semi-private parts $m_{ji},m_{jk}$ are encoded using coset codes built over $\CalU_{ji},\CalU_{jk}$ respectively, and $m_{jj}$ is encoded using a conventional IID random code $\CalC_j$ built over $\CalV_{j}$. RVs $\ulineU_{j*}\define (U_{ji},U_{jk})$ represent information of Tx $j$ encoded in the semi-private parts and RV $V_{j}$ represents the information of Tx $j$ encoded in the private part $m_{jj}$. Note $\ulineU_{j*},V_{j} \in \CalF_{\Prime_{i}} \times \CalF_{\Prime_{k}} \times \CalV_j$, where $\CalV_{j}$ is an arbitrary finite set and $\CalF_{\Prime_{i}}$ is the finite field of size $\Prime_{i}$. Specifically, let $\CalC_{j} =\{v_{j}^{n}(b_{j},m_{jj}) : 1 \leq b_{j} \leq 2^{nK_{j}}, 1 \leq m_{jj}\leq 2^{nL_{j}} \}$ be built on $\CalV_{j}$ with the marginal $p_{V_{j}}^{n}$. For $i \in [3] \setminus \{j\}$, the message $m_{ji}$ is encoded via NCC denoted as $(n,r_{ji},t_{ji},g^{ji}_{I},g^{ji}_{o/I}, b_{ji}^n)$ over $\CalF_{\Prime_{i}}$.
\begin{figure}[H]
\centering
\includegraphics[width=4in]{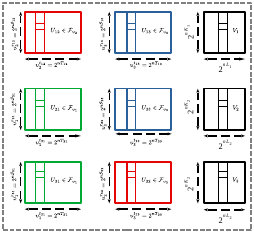}
   \vspace{-0.1in}
    \caption{$9$ codes employed in the proof of Thm.~\ref{Thm:3CQBCStepIInnerBound}. The $U_{ji} : ji \in \dbrackthree$ are coset codes built over finite fields. Codes with the same color are built over the same finite field, and the smaller of the two is a sub-coset of the larger. The black codes are built over auxiliary finite sets $\CalV_{j}$ using the conventional IID random code structure. Row $j$ depicts the codes of Tx, Rx $j$. Rx $j$, in addition to decoding into codes depicted in row $j$ also decodes $U_{ij}\oplus U_{kj}$.}
    \label{Fig:FigCodeStructureCQBC}
        \vspace{-0.1in}
\end{figure} 
Let $c(m_{ji}^{t_{ji}})$ be the coset corresponding to message $m_{ji}^{t_{ji}}$ defined as 
\[
c(m_{ji}^{t_{ji}})= \{ u_{ji}^n(a^{r_{ji}},m_{ji}^{t_{ji}}) : a^{r_{ji}} \in  \CalF_{\Prime_{i}}^{r_{ji}} \} \]
where,  $ u_{ji}^n(a^{r_{ji}},m_{ji}^{t_{ji}})= a^{r_{ji}} g^{ji}_{I}+ m_{ji}^{t_{ji}} g^{ji}_{o/I} + b_{ji}^n$
denote the generic code in the coset $c(m_{ji}^{t_{ji}})$. Henceforth, denoting $\underaccent{\tilde}{m} \define(\ulinem_{1},\ulinem_{2},\ulinem_{3})$ , let
\begin{eqnarray}
    \CalD(\underaccent{\tilde}{m})\!\!=\!\left\{ \left( \!\!\!\begin{array}{c}( u_{ji}^{n} (a^{r_{ji}},m_{ji}^{t_{ji}})\! :\! ji \in \dbrackthree ),\\(v_{j}^{n}(b_{j},m_{jj})\! :\! j \in [3])\end{array}\!\!\!\right) \!:\!  \left( \!\!\!\begin{array}{c}( u_{ji}^{n} (a^{r_{ji}},m_{ji}^{t_{ji}})\! :\! ji \in \dbrackthree ),\\(v_{j}^{n}(b_{j},m_{jj})\! :\! j \in [3])\end{array}\!\!\!\right)  \in T_{\delta}^n(p_{\dulineU \ulineV}) \right\} \nonumber 
\end{eqnarray}
be the list of the codewords that are jointly typical according to $p_{\dulineU \ulineV}$.\\ 
\noindent \textit{Encoding Rule:} Note that every element of $\CalD(\underaccent{\tilde}{m})$ is a collection of $9$ codewords chosen from the respective $9$ codebooks. For every message $\underaccent{\tilde}{m}=(\ulinem_1,\ulinem_2,\ulinem_3)$, the encoder selects one of the possible 
$9-$codeword collection from $\CalD(\underaccent{\tilde}{m})$. This selected collection of codewords, denoted $\left(( u_{ji}^{n}(a_{ji}^{*},m_{ji}^{t_{ji}}): ji \in \dbrackthree ),(v_{j}^{n}(b_{j}^{*},m_{jj}) : j \in [3])\right) $ is referred to as the \textit{chosen collection} of codewords. If $\CalD(\underaccent{\tilde}{m})$ is empty, a default collection $\left(( u_{ji}^{n}(a_{ji}^{*},m_{ji}^{t_{ji}}): ji \in \dbrackthree ),(v_{j}^{n}(b_{j}^{*},m_{jj}) : j \in [3])\right) $ from the codebooks is used instead and referred to as the \textit{chosen collection}. The symbols input on the channel is then obtained by evaluating the function $f$ letterwise on the \textit{chosen collection}
 \[
 x^n(\underaccent{\tilde}{m}) := f^n\left(\begin{array}{c}(  u_{ji}^{n}(a_{ji}^{*},m_{ji}^{t_{ji}}): ji \in \dbrackthree  ),(v_{j}^{n}(b_{j}^{*},m_{jj}) : j \in [3])\end{array} \right).
 \] 
The analysis of this encoder error event is identical to the encoder error event analyzed in \cite{201804TIT_PadPra}. Specifically, in \cite[App.~A, Proof of Lemma 8]{201804TIT_PadPra} we employ a standard second moment method \cite{BkNIT_PraPadShi}, \cite[App.~A]{201710TIT_PadPra} \cite[Lemma 8.1 in App.~8A]{BkNITElGamalKim_2011} \cite[App.~E]{2024MMTIT_Pad} to upper bound the error event therein. An identical series of steps, i.e. employing the same second moment method, it is straight forward to derive an exponentially decaying bound on the probability of $\CalD(\underaccent{\tilde}{m})$ being empty if
\begin{eqnarray}
\label{Eqn:ManyToManySourceCodingBounds}
&&
S_{A}+M_{B}+K_{C} \stackrel{\boldsymbol{\alpha}}{>}\Theta(A,B,C) \nonumber \mbox{ where } \nonumber
\\
&& \Theta(A,B,C)\define \!\!\!\max_{\!\!\!\substack{(\theta_{j}:j \in B) \in \underset{j \in B}{\prod} \fieldpij}}  \left\{ 
 \begin{array}{c} \!\!\!\!\sum_{a \in A}\!\log |\mathcal{U}_{a}|\!\! +\!\! \sum_{j \in B}\!\log \upsilon_{j}\!\! +\!\!\sum_{c \in C} H(V_{c})\!\! -\!\! H(U_{A},U_{ji}\!\oplus \!\theta_{j}U_{jk}:j \!\in \! B,V_{C})\!\!\!\end{array}
\! \right\}
\nonumber
\end{eqnarray}
We now proceed to the decoding rule.\\
\noindent \textit{Decoding POVM:} Rx \( j \) decodes a bivariate component \( U_{ij}(m_{ij}) \oplus U_{kj}(m_{kj}) \) in addition to the codewords \( U_{ji}(m_{ji}), U_{jk}(m_{jk}), V_j(m_{jj}) \) corresponding to its message. Specifically, Rx \( j \) decodes the quartet  
\[
(U_{ji}(m_{ji}), U_{jk}(m_{jk}), V_j(m_{jj}), U_{ij}(m_{ij}) \oplus U_{kj}(m_{kj}))
\]  
using a simultaneous decoding POVM. In particular, the construction of our simultaneous decoding POVM adopts the \textit{tilting, smoothing, and augmentation} approach of Sen \cite{202103SAD_Sen}. Towards describing this, we begin by characterizing the effective CQ MAC state, 

\begin{eqnarray}
\label{Eqn:EffectCQMACOrigNotation} 
\xi^{\ulineU_{j*}V_{j}U_{j}^{\oplus}Y_{j}} &=& \sum_{\ulineu_{j*},v_{j},u_{j}^{\oplus}} p_{\ulineU_{j*}V_{j}}(\ulineu_{j*},v_{j}) p_{U_{j}^{\oplus}}(u_{j}^{\oplus}) \ketbra{\ulineu_{j*}~v_{j}~u_{j}^{\oplus}} \otimes \xi_{\ulineu_{j*}v_{j}u_{j}^{\oplus}} 
\label{decoding}.
\end{eqnarray}
Rx \( j \)'s error analysis is essentially the error analysis of the above MAC, where the Rx attempts to decode \( U_{ji}, U_{jk}, V_{j} \), and \( U_{ij} \oplus U_{kj} \). We may, therefore, restrict our attention to the effective five-transmitter MAC specified via \eqref{Eqn:EffectCQMACOrigNotation}. To reduce clutter and simplify notation, we rename $Z_{j1}\define U_{ji}$
,$Z_{j2} \define U_{jk}$, $Z_{j3} \define V_{j}$,$Z_{j4} \define U_{ij}\oplus U_{kj}$
,$ \ulineZ_{j} \define Z_{j1}, Z_{j2}, Z_{j3} , Z_{j4}$, and a PMF
 $p_{\ulineZ_{j}}(\ulinez_{j})= \sum_{u_{ij}} \sum_{u_{kj}} p_{U_{ji} U_{jk} V_j U_{ij} U_{kj}} (z_{j1}, z_{j2},z_{j3}, u_{ij},u_{kj}) \mathds{1} \{z_{j4}= u_{ij} \oplus u_{kj}\}
 $. Furthermore, to simplify notation, we drop the subscript $j$ too. The construction of the decoder POVM and the analysis below must be carried out identically for each of the three decoders $j=1,2,3$. Through the rest of our proof, we restrict attention to analyzing the error probability of the above five transmitter MAC on which the decoder attempts to decode $z_1, z_2, z_3$ and $z_4$.
Towards that end, we relabel the state in (\ref{decoding}) as 
\begin{eqnarray}
 \xi^{\ulineZ Y} \define \sum_{z_{1},z_{2},z_{3},z_{4}} p_{\ulineZ}(\ulinez)\ketbra{z_{1}~z_{2}~z_{3}~z_{4}}\otimes \xi_{z_{1}z_{2}z_{3}z_{4}}. \nonumber 
\end{eqnarray}
\begin{figure}
\centering
\includegraphics[width=5in]{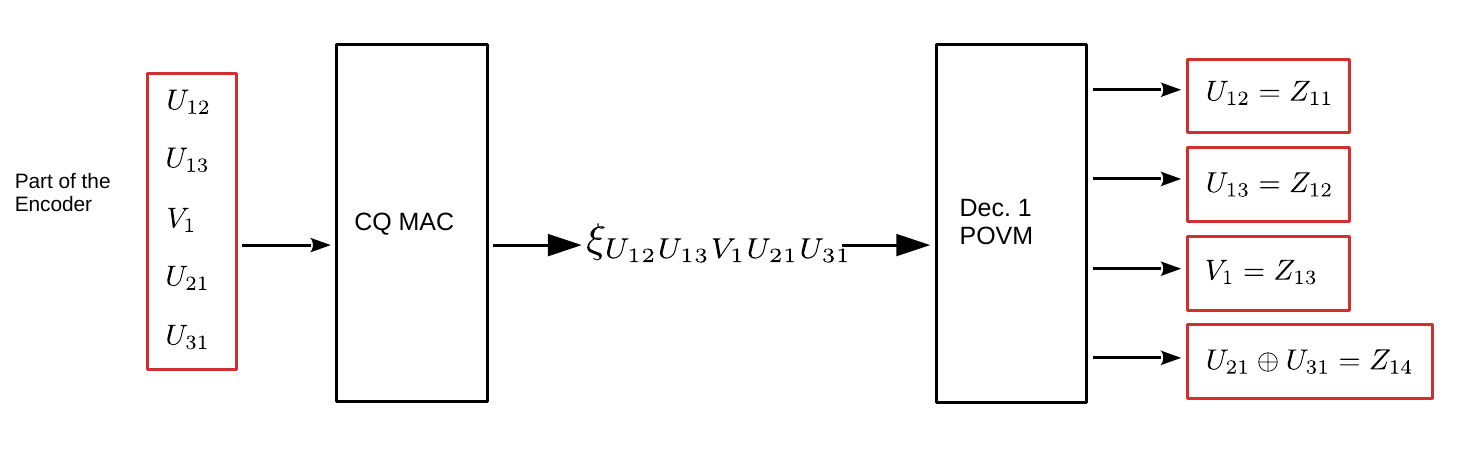}
   \vspace{-0.1in}
    \caption{Communication over the effective CQMAC : The five RVs or codebooks on the left form only a sub-collection of RVs or codes handled by the encoder.}
    \label{Fig:Fig3CQICMAC}
        \vspace{-0.1in}
\end{figure} 
Consider four auxiliary finite sets  $\CalA_{i}$ for $i \in [4]$ along with corresponding four auxiliary Hilbert spaces ${\CalA_{i}}$ of dimension $|\CalA_{i}|$ for each $i \in [4]$.\footnote{$\CalA_{i}$ denotes both the finite set and the corresponding auxiliary Hilbert space. The specific reference will be clear from context.} Define the enlarged space $\CalH_{Y}^{e}$ as follows 
\[
 \CalH_{Y}^{e} \define \CalH_{Y_{G}}
\oplus \bigoplus_{S \subseteq  [4] \setminus [4]} \left(\CalH_{Y_{G}}
 \otimes \CalA_{S} \right)\mbox{ where }\CalH_{Y_{G}} =\CalH_{Y} \otimes \complex^{2}, \CalA_{S} = \bigotimes_{s \in S} \CalA_{s}
\] 
and the symbol $ \oplus $ denotes orthogonal direct sum of space. Note that $\dim(\CalA_{S}) = \prod_{s \in S}\dim(\CalA_{s})$. For $ S \subseteq [4]$, let  $\ket{a_{S}}=\otimes_{s \in S} \ket{a_s}$ be a computational basis vector of  ${\CalA_{S}}$, $Z_{S}=(Z_{s} : s \in S ),$ and $ 0 \leq \eta^n \leq \frac{1}{10}$. In the sequel, we let $\boldsymbol{\CalH_{Y}}= \CalH_{Y}^{\otimes n}$,         $\boldsymbol{\CalH_{Y_G}}=\CalH_{Y_G}^{\otimes n}$,
$\boldsymbol{\CalH_{Y}^{e}}=(\CalH_{Y}^{\otimes n})^{e}$ and $\boldsymbol{\CalA_{S}}= \CalA_{S}^{\otimes n}$. We define the tilting map $\CalT^{S}_{ \ulinea_{S}^{n}, \eta^{n}} : \boldsymbol{\CalH_{Y_G}}\rightarrow
\boldsymbol{\CalH_{Y_G}}  \otimes   \boldsymbol{\CalA_{S}}
$ as 
\begin{eqnarray}
\CalT^{S}_{ \ulinea_{S}^{n}, \eta^{n}}(\ket{h}) = \frac{1}{\sqrt{\Omega(S,\eta)} } (\ket{h} +\eta^{n |S| } \ket{h} \otimes  \ket{a_{S}^{n}})
\nonumber
\end{eqnarray}
where $\Omega(S,\eta) = 1+ \eta^{2n |S|}$. And, $   \CalT_{ \ulinea^{n}, \eta^{n}} : 
\boldsymbol{\CalH_{Y_G}}  \rightarrow \boldsymbol{\CalH_{Y}^{e}}
$ as
\begin{eqnarray}
\CalT_{ \ulinea^{n}, \eta^{n}}(\ket{h}) = \frac{1}{\sqrt{\Omega(\eta)} } (\ket{h} + \sum_{S \subseteq [4]  \setminus [4]} \eta^{n |S| } \ket{h} \otimes \ket{a_{S}^{n}}), \nonumber 
\end{eqnarray}
where, $\Omega(\eta) = 1+16 \eta^{2n}+ 36 \eta^{4n}+ 16 \eta^{6n}$. Let  
 \begin{eqnarray}
    \theta^{ \otimes n}_{\ulinez^n\ulinea^n} = \CalT_{ \ulinea^{n}, \eta^n} \left\{\xi^{\otimes n}_{\ulinez^n}\otimes \ketbra{0} \right\}  \nonumber \label{Eqn:Thetiltedstate}
 \end{eqnarray} 
be the state tilted with components in all appended auxiliary spaces. Here, the map $\CalT_{ \ulinea^{n}, \eta^{n}}$ acts on a mixed state by acting on each pure state in mixture individually. In Appendix \ref{AppSec:ClosenessOfStates}, we show that, 
\begin{eqnarray}
\label{Eqn:ClosenessOfTiltedState}
     \norm{\theta_{\ulinez^{n}, \ulinea^{n}}^{\otimes n} - \xi_{\ulinez^{n}}^{\otimes n} \otimes \ket{0}\bra{0} }_{1} \leq 90 \eta^{n}. 
\end{eqnarray}
Next, we construct the decoding POVM. For partition $S,S^{c}$ of [4] when $S \neq \phi$,  
\begin{eqnarray}
   G^{S}_{\ulinez^{n}}\define \Pi_{z_{S^{c}}^{n}} \Pi_{\ulinez^{n}}\Pi_{z_{S^{c}}^{n}}, \nonumber 
\end{eqnarray}
where $\Pi_{\ulinez^{n}}$ is the conditional typical projector (C-Typ-Proj) of $\otimes_{t=1}^{n}\xi_{z_{1t}z_{2t}z_{3t}z_{4t}}$ 
and $\Pi_{z_{S}^{n} }$ is C-Typ-Proj of  $\otimes_{t=1}^{n}\xi_{z_{st}:s \in S}$.
By Gelfand-Naimark’s thm.~\cite[Thm.~3.7]{BkHolevo_2019}, there exists a projector  $\olineG^{S}_{\ulinez^{n}}\in \CalP(\boldsymbol{\CalH_{Y_G}} )$ that yields identical measurement statistic as  $G^{S}_{\ulinez^{n}} $on states in $\boldsymbol{\CalH_{Y}} $. Let $\olineB_{\ulinez^{n}}^{S}=I_{\boldsymbol{\CalH_{Y_G}}}-\olineG^{S}_{\ulinez^{n}}$  be the complement projector. Now, consider
\begin{eqnarray}
 \beta^{S}_{\ulinez^{n},\ulinea^{n}} =  \CalT_{\ulinea^{n}_{S},\eta^n}^{S} ( \olineB_{\ulinez^{n}}^{S})\mbox{, for }\: S \neq [4] \mbox{ and } \beta^{[4]}_{\ulinez^{n},\ulinea^{n}} =  \olineB_{\ulinez^{n}}^{[4]}, \label{Eqn:Tiltingofprojectors}
\end{eqnarray}
where $\beta^{S}_{\ulinez^{n},a_{S}^{n}} $
represent the tilted projector along direction $\ulinea_{S}^{n}$.
Next, we define  $\beta^{*}_{\ulinez^{n},\ulinea^{n}}$ as the projector in $\boldsymbol{\CalH_{Y}^{e}}$ whose support is the union of the supports of $\beta^{S}_{\ulinez^{n},a_{S}^{n}}$ for all $S \subseteq [4]$, $S \neq \phi$.
Let $\Pi_{\boldsymbol{\CalH_{Y_G}}}$, be the orthogonal projector in $\boldsymbol{\CalH_{Y}^{e}}$ onto $\boldsymbol{\CalH_{Y_G}}$ . We define the POVM element $\mu_{\ulinem}$ as
\begin{eqnarray}
\mu_{\ulinem} \define \left(\sum_{\uline{\widehat{m}}} \gamma^{*}_{(\ulinez^{n},\ulinea^{n})(\widehat{\ulinem})}\right)^{-\frac{1}{2}}\!\!\!\!\!\!
\gamma^{*}_{(\ulinez^{n},\ulinea^{n})(\ulinem)} \left(\sum_{\uline{\widehat{m}}} \gamma^{*}_{(\ulinez^{n},\ulinea^{n})(\widehat{\ulinem})}\right)^{-\frac{1}{2}}\!\!\!\!\!\!\!, \mbox{ where }\label{POVM}
\gamma^{*}_{\ulinez^{n},\ulinea^{n}} = (I_{\boldsymbol{\CalH_{Y}^{e}}}-  \beta^{*}_{\ulinez^{n},\ulinea^{n}}) \Pi_{\boldsymbol{\CalH_{Y_G}} }(I_{\boldsymbol{\CalH_{Y}^{e}}}-  \beta^{*}_{\ulinez^{n},\ulinea^{n}}).
\end{eqnarray}
Having stated the decoding POVM, we now return to the error analysis of the CQBC. Our notation is summarized in Table \ref{Table:SummaryNotation} for ease of reference.\\
\begin{table}[H]
\huge
\begin{center}\resizebox{1.09\textwidth}{!}{
\begin{tabular}{|c|c|c|c|}
\hline
{ Symbol} & Description & Symbol & Description \\\hline
 $\underaccent{\tilde}{m}$ &$(\ulinem_{1},\ulinem_{2},\ulinem_{3})$ & $\ulinem_{j}$&$(m_{ji},m_{jk},m_{jj})$\\
 \hline
 $\dulineCalU$& $\ulineCalU_{1*} \times \ulineCalU_{2*} \times \ulineCalU_{3*}$ & $\ulineCalU_{j*}$& $\CalU_{ji} \times \CalU_{jk} $ \\  \hline
 $\ulineCalV$ & $\CalV_1 \times \CalV_2 \times \CalV_3$ &  $U_{j}^{\oplus} $ & $U_{ij}\oplus U_{kj}$ \\
        \hline
        $ g^{ji}_{I} \in \mathcal{F}_{\Prime_{i}}^{r_{ji}\times n}$, 
        $ g^{ji}_{o/I} \in \mathcal{F}_{\Prime_i}^{t_{ji} \times n}$  
        & Generator matrices of user $(i,k) : (i,k) \neq j$  
        & $c(m_{ji}^{t_{ji}})$  
        & $\{ u_{ji}^n(a^{r_{ji}},m_{ji}^{t_{ji}}) : a^{r_{ji}} \in  \mathcal{F}_{\Prime_i}^{r_{ji}} \}$ \\ 
        \hline
        $ u_{ji}^n(a^{r_{ji}},m_{ji}^{t_{ji}})= a^{r_{ji}} g^{ji}_{I}+ m_{ji}^{t_{ji}} g^{ji}_{o/I} + b_{ji}^n$  
        & A generic codeword in bin/coset indexed by message $m_{ji}^{t_{ji}}$
        & $ \CalD(\underaccent{\tilde}{m})$  
        & Codeword list jointly typical wrt $p_{\dulineU \ulineV}$ \\ 
        \hline 
$\left( \!\!\!\begin{array}{c}( u_{ji}^{n} (a^{r_{ji}},m_{ji}^{t_{ji}})\! :\! ji \in \dbrackthree ),\\(v_{j}^{n}(b_{j},m_{jj})\! :\! j \in [3])\end{array}\!\!\!\right)$
        & Codewords chosen from $ \CalD(\underaccent{\tilde}{m})$ for message $\underaccent{\tilde}{m}$
        &        $f : \dulineCalU\times \uline{\CalV} \rightarrow \CalX$  
        & Mapping function \\ 
        \hline
        $p_{\dulineU \uline{V} X} \triangleq p_{\uline{U}_{1*} \uline{U}_{2*} \uline{U}_{3*} V_1 V_2 V_3 X}$  
        & Chosen test channel  &
         $x^n(\underaccent{\tilde}{m})$  
&$f^n\left(\begin{array}{c}(  u_{ji}^{n}(a_{ji}^{*},m_{ji}^{t_{ji}}): ji \in \dbrackthree  ),(v_{j}^{n}(b_{j}^{*},m_{jj}) : j \in [3])\end{array} \right)$ 
\\ \hline
       $\xi_{\uline{u}_{j*} v_{j} u_{j}^{\oplus}}$  
        & $\sum_{(\Bar{u}_{j*},\Bar{v}_{j}, x)} p_{\Bar{U}_{j*} \Bar{V}_{j} X  | \uline{U}_{j*} V_{j} U_{j}^{\oplus}} (\Bar{u}_{j*},\Bar{v}_{j}, x  | \uline{u}_{j*}, v_{j},u_{j}^{\oplus}) \operatorname{Tr}_{\mathcal{Y}_i \mathcal{Y}_k }(\rho_{x})$
        &$( \mu_{\underaccent{\tilde}{m}} : \underaccent{\tilde}{m} \in \CalM)$  
        & Decoding POVM as defined in (\ref{POVM}) 
\\\hline
        $\CalH_{Y_{G}}$ &$\CalH_{Y} \otimes \complex^{2}$&
 $\CalH_{Y}^{e}$ & $ \CalH_{Y_{G}}
\oplus \bigoplus_{S \subseteq  [4] \setminus [4]} \left( \CalH_{Y_{G}}
 \otimes \CalA_{S}\right) $
\\ \hline
 $\boldsymbol{\CalH_{Y}}$& $\CalH_{Y}^{\otimes n}$ &
         $\boldsymbol{\CalH_{Y_G}}$& $\CalH_{Y_G}^{\otimes n}$
\\  \hline
        $\boldsymbol{\CalH_{Y}^{e}}$& $(\CalH_{Y}^{\otimes n})^{e}$&
        $\boldsymbol{\CalA_{S}}$ &
        $\CalA_{S}^{\otimes n}$ \\ 
        \hline
        $\CalT^{S}_{ \ulinea_{S}^{n}, \eta^{n}}$& Map that tilts  along direction $ \ulinea_{S}^n$ &
        $\CalT_{ \ulinea^{n}, \eta^{n}}$ & Map that tilts into all subspaces\\
        \hline
\end{tabular}}
\end{center}
    \caption{Description of elements that constitute the coding scheme}
     \label{Table:SummaryNotation}
\end{table}
\noindent \textit{Error Analysis:} We now analyze the error probability for the 3-user CQBC. 
First, we begin with a simple union bound to separate the error analysis of the three users. From the Appendix \ref{AppSec:SimpleVerifications}, we have 
\begin{eqnarray}
\label{seperation of the error analysis of the three users}
     I \otimes I \otimes I - \mu_{\ulinem_{1}} \otimes \mu_{\ulinem_{2}} \otimes \mu_{\ulinem_3} \leq  (I-\mu_{\uline{m}_1}) \otimes I \otimes I   + I \otimes (I - \mu_{\uline{m}_2}  ) \otimes I +  I \otimes I \otimes (I - \mu_{\uline{m}_3}  ). \nonumber
\end{eqnarray}
From this, we have 
\begin{eqnarray}
&&P(e^n,\underaccent{\tilde}{\lambda}^{n}) = \frac{1}{\mathcal{M}} \sum_{\underaccent{\tilde}{m} \in \CalM}\tr\left((I-\mu_{\underaccent{\tilde}m} )\rho_{e(\underaccent{\tilde}m)}^{\otimes n} \right) \nonumber \\ 
& = & \frac{1}{\mathcal{M}} \sum_{\underaccent{\tilde}{m} \in \CalM } \tr\left((I-\mu_{\ulinem_1} \otimes \mu_{\ulinem_2} \otimes \mu_{\ulinem_3} )\rho_{e(\underaccent{\tilde}m)}^{\otimes n} \right) \nonumber \\ 
 &\leq& \frac{1}{\mathcal{M}} \sum_{\underaccent{\tilde}{m} \in \CalM}  \tr\left(((I -\mu_{\ulinem_1}) \otimes I \otimes I   + I \otimes (I - \mu_{\ulinem_2} )\otimes I +  I \otimes I \otimes (I - \mu_{\ulinem_3} ))\rho_{e(\underaccent{\tilde}m)}^{\otimes n} \right) \nonumber \\
&=& \! \! \! \! \! \! \frac{1}{\mathcal{M}} \sum_{\underaccent{\tilde}{m} \in \CalM}  \! \left[ \tr\left( ((I -\mu_{\ulinem_1}  ) \otimes \! I \otimes \! I) \rho_{e(\underaccent{\tilde}m)}^{\otimes n} \right) \!  + \! \tr\left((I \otimes \! (I-\mu_{\ulinem_2}) \otimes \! I ) \rho_{e(\underaccent{\tilde}m)}^{\otimes n} \right) \! +\!  \tr\left(( I \otimes \! I \otimes \! (I - \mu_{\ulinem_3})) \rho_{e(\underaccent{\tilde}m)}^{\otimes n} \right) \right] \nonumber \\
&=& \frac{1}{\mathcal{M}} \sum_{\underaccent{\tilde}{m} \in \CalM}  \left[\tr\left( (I - \mu_{\ulinem_1}  ) \tr_{Y_2,Y_3}\{\rho_{e(\underaccent{\tilde}m)}^{\otimes n}\}\right) 
\!\!+\! \tr\left( (I -\mu_{\ulinem_2}) \tr_{Y_1,Y_3}\{\rho_{e(\underaccent{\tilde}m)}^{\otimes n} \}\right) \!\!+\! \tr\left((I - \mu_{\ulinem_3}) \tr_{Y_1,Y_2}\{\rho_{e(\underaccent{\tilde}m)}^{\otimes n}\} \right) \right] . \nonumber
\end{eqnarray}
 The above three terms being identical, henceforth, we analyze just one of them, i.e., the first decoder's error event. To make the notation shorter, we relabel $ \uline{U}_{\bar{1}*} = (\ulineU_{2*},\ulineU_{3*}), V_{\bar{1}} =(V_2,V_3), \uline{u}_{\bar{1}*}^n = (\ulineu_{2*}^n,\ulineu_{3*}^n), v_{\bar{1}}^n =(v_2^n,v_3^n), \uline{m}_{\bar{1}}=(\ulinem_2,\ulinem_3)$. Observe
\begin{eqnarray}
&&\frac{1}{\mathcal{M}} \sum_{\underaccent{\tilde}{m} \in \CalM}  \tr\left( (I - \mu_{\ulinem_1}  ) \tr_{Y_2,Y_3}\{\rho_{e(\underaccent{\tilde}m)}^{\otimes n}\}\right) = \frac{1}{\mathcal{M}} \sum_{\underaccent{\tilde}{m} \in \CalM} 
\sum_{(\uline{u}_{\bar{1}*}^n, v_{\bar{1}}^n, x^n)}
 \tr\left( (I - \mu_{\ulinem_1}  ) \tr_{Y_2,Y_3}\{\rho_{x^n}^{\otimes n}\}\right) \nonumber \\ &&
\mathds{1}_{\Big\{ X^n(\underaccent{\tilde}{m}) = x^n, \uline{U}_{\bar{1}*}^n( \uline{m}_{\bar{1}})=\uline{u}_{\bar{1}*}^n,V_{\bar{1}}^n( \uline{m}_{\bar{1}}) =v_{\bar{1}}^n \Big\}}.\nonumber 
\end{eqnarray}
We evaluate the conditional expectation given $\{
\ulineU_{1*}^n=\ulineu_{1*}^n , V_{1}^n=v_{1}^n , U_{21}^n\oplus U_{31}^n= u_{21}^n \oplus u_{31}^n \}$
\begin{eqnarray}
&&\frac{1}{\mathcal{M}} \sum_{\underaccent{\tilde}{m} \in \CalM}
\sum_{(\uline{u}_{\bar{1}*}^n, v_{\bar{1}}^n, x^n)}
 p^{n}_{X, \uline{U}_{\bar{1}*}^n, V_{\bar{1}}^n | \ulineU_{1*}V_{1} U_{21}\oplus U_{31} }
 \left( x^n,\uline{u}_{\bar{1}*}^n, v_{\bar{1}}^n,|\ulineu_{1*}^n, v_{1}^n,u_{21}^n \oplus u_{31}^n \right)  \tr\left( (I - \mu_{\ulinem_1}  )
\tr_{Y_2,Y_3}\{\rho_{x}^{\otimes n}\}\right) 
\nonumber \\
 &=& \! \! \! \frac{1}{\mathcal{M}} \sum_{\underaccent{\tilde}{m} \in \CalM}  \! \tr\left(\! \!(I - \mu_{\ulinem_1}) \! \left[ \sum_{(\uline{u}_{\bar{1}*}^n, v_{\bar{1}}^n, x^n)} \! \! \! p^{n}_{X, \uline{U}_{\bar{1}*}^n, V_{\bar{1}}^n | \ulineU_{1*}V_{1} U_{21}\oplus U_{31} } \left( x^n,\uline{u}_{\bar{1}*}^n, v_{\bar{1}}^n,|\ulineu_{1*}^n, v_{1}^n,u_{21}^n \oplus u_{31}^n \right)  \tr_{Y_2,Y_3}\{\rho_{x}^{\otimes n}\} \right] \right) 
\nonumber \\
&=&  \! \! \! \frac{1}{\mathcal{M}_{1}} \sum_{\ulinem_{1} \in \mathcal{M}_{1}} \tr((I -\mu_{\ulinem_{1}} ) \xi_{\ulinez_{1}^n(\ulinem_{1})}^{\otimes n}), \nonumber 
\end{eqnarray}
this term corresponds to the probability of error of the CQMAC, which we shall analyze now. Using the Hayashi-Nagaoka inequality \cite{200307TIT_HayNag}, we have
\begin{eqnarray}
   && \frac{1}{\mathcal{M}_{1}} \sum_{\ulinem_{1} \in \mathcal{M}_{1}} \tr((I -\mu_{\ulinem_{1}} ) \xi_{\ulinez_{1}^n(\ulinem_{1})}^{\otimes n})  
    =  \frac{1}{\mathcal{M}_{1}} \sum_{\ulinem_{1} \in \mathcal{M}_{1}} \tr((I -\mu_{\ulinem_{1}} ) \xi_{\ulinez_{1}^n(\ulinem_{1})}^{\otimes n} \otimes \ketbra{0}) 
 \nonumber  \\
 &\leq & \! \frac{1}{\mathcal{M}_{1}} \! \sum_{\ulinem_{1} \in \mathcal{M}_{1}} \! \tr((I -\mu_{\ulinem_{1}} ) \theta_{(\ulinez_{1}^{n}, \ulinea_{1}^{n})(\ulinem_1)}^{\otimes n}) \! +\frac{1}{\mathcal{M}_{1}} \!\sum_{\ulinem_{1} \in \mathcal{M}_{1}} \!
 \norm{\theta_{(\ulinez_{1}^{n}, \ulinea_{1}^{n})(\ulinem_1)}^{\otimes n} - \xi_{\ulinez_{1}^{n}(\ulinem_1)}^{\otimes n} \otimes \ket{0}\bra{0} }_{1} \!.
 \nonumber \\
&\leq &  \frac{2}{\mathcal{M}_{1}} \sum_{\ulinem_{1} \in \mathcal{M}_{1}} 
\tr\left( (I -\gamma^{*}_{(\ulinez_{1}^{n},\ulinea_{1}^{n})(\ulinem_1)} ) 
\theta_{(\ulinez_{1}^{n}, \ulinea_{1}^{n})(\ulinem_1)}^{\otimes n} \right) 
+ \frac{4}{\mathcal{M}_{1}} \sum_{\ulinem_{1} \in \mathcal{M}_{1}} 
\sum_{\uline{\tilde{m}}_1 \neq \ulinem_1} 
\tr\left( \gamma^{*}_{(\ulinez_{1}^{n},\ulinea_{1}^{n})(\uline{\tilde{m}}_1)}  
\theta_{(\ulinez_{1}^{n}, \ulinea_{1}^{n})(\ulinem_1)}^{\otimes n} \right) 
\nonumber \\
&+& \frac{1}{\mathcal{M}_{1}} \sum_{\ulinem_{1} \in \mathcal{M}_{1}} 
\norm{\theta_{(\ulinez_{1}^{n}, \ulinea_{1}^{n})(\ulinem_1)}^{\otimes n} - 
\xi_{\ulinez_{1}^{n}(\ulinem_1)}^{\otimes n} \otimes \ket{0}\bra{0} }_{1} 
\nonumber \end{eqnarray}
 \begin{eqnarray}
&\leq & \! \!\!\! \frac{1}{\mathcal{M}_{1}} \sum_{\ulinem_{1} \in \mathcal{M}_{1}} 
\norm{\theta_{(\ulinez_{1}^{n}, \ulinea_{1}^{n})(\ulinem_1)}^{\otimes n} - 
\xi_{\ulinez_{1}^{n}(\ulinem_1)}^{\otimes n} \otimes \ket{0}\bra{0} }_{1}
+\frac{2}{\mathcal{M}_{1}} \!\!\! \sum_{\ulinem_{1} \in \mathcal{M}_{1}} \!\!\!\tr\left( (I -\gamma^{*}_{(\ulinez_{1}^{n},\ulinea_{1}^{n})(\ulinem_1)} ) 
\theta_{(\ulinez_{1}^{n}, \ulinea_{1}^{n})(\ulinem_1)}^{\otimes n} \right)
\nonumber \\
\label{Eqn:TheFinalThreeTermsForEff5TxMAC}
&+& \!\!\! \!\!\! \frac{4}{\mathcal{M}_{1}} \!\! \sum_{\ulinem_{1} \in \mathcal{M}_{1}}  
\sum_{S \subseteq [4]} \sum_{(\tilde{m}_{1s} : s \in S)} \!\!\!\!\!\!
\tr\left( \gamma^{*}_{(z_{1S}^n,a_{1S}^n)(\tilde{m}_{1S}),(z_{1S^{c}}^n,a_{1S^{c}})(m_{1S^{c}})} 
\theta_{(\ulinez_{1}^{n}, \ulinea_{1}^{n})(\ulinem_1)}^{\otimes n} \right).
\end{eqnarray}
As is standard, we derive an upper bound on the error probability by averaging over an ensemble of codes. Let us now specify the distribution of the random codebook with respect to which we average the error probability. Specifically, each of the generator matrices is picked independently and uniformly from their respective range space. In other words, all the entries of every generator matrices are picked independently and uniformly from their respective finite fields. Furthermore, the bias vector is also picked independently and uniformly from their respective range space. The codewords of the conventional IID random $\CalV_j$ codebook are mutually independent and identically distributed wrt to PMF $ \prod p_{V_j}$. We emphasize, that the coset codes built over $\CalU_{ji}$, $\CalU_{ki}$ intersect. In other words the smaller of these two codes is a sub-coset of the larger, this is accomplished by ensuring that the rows of the generator matrix of the larger code contains all the rows of the generator matrix of the smaller code. For specific details, the reader is referred to the distribution of the random code used to prove \cite[Thm.~7 in Sec.~IV.D]{201804TIT_PadPra} or its simpler version \cite[Thm.~6 in Sec.~IV.C]{201804TIT_PadPra}. Having described the distribution of the random code, we now continue with the analysis of the terms in \eqref{Eqn:TheFinalThreeTermsForEff5TxMAC}.
In regard to the first two terms on the RHS of \eqref{Eqn:TheFinalThreeTermsForEff5TxMAC}, we have
\begin{eqnarray}
\frac{1}{\mathcal{M}_{1}} \sum_{\ulinem_{1} \in \mathcal{M}_{1}} \norm{\theta_{(\ulinez_{1}^{n}, \ulinea_{1}^{n})(\ulinem_1)}^{\otimes n} - \xi_{\ulinez_{1}^{n}(\ulinem_1)}^{\otimes n} \otimes \ket{0}\bra{0} }_{1} &\leq&90 \sqrt{\epsilon} + \epsilon \nonumber \\
\frac{2}{\mathcal{M}_{1}} \sum_{\ulinem_{1} \in \mathcal{M}_{1}} 
\tr\left( (I -\gamma^{*}_{(\ulinez_{1}^{n},\ulinea_{1}^{n})(\ulinem_1)} ) 
\theta_{(\ulinez_{1}^{n}, \ulinea_{1}^{n})(\ulinem_1)}^{\otimes n} \right) 
&\leq& 10808 \sqrt{\epsilon},\nonumber 
\end{eqnarray}
which are proven in Appendix \ref{AppSec:bounds}. We are left only with the third term in the RHS of \eqref{Eqn:TheFinalThreeTermsForEff5TxMAC}. We elaborate now on the analysis of this third term on the RHS of \eqref{Eqn:TheFinalThreeTermsForEff5TxMAC}, and in doing so we leverage the \textit{smoothing} and \textit{augmentation} \cite{202103SAD_Sen} property of the tilting maps we have defined. As is standard, this will be broken into fifteen terms - four singleton errors, six pair errors, four triple errors and one quadruple error. The analysis of the quadruple error is slightly different from the analysis of the rest fourteen terms. To analyze the quadruple error, we first prove the difference between the average of the original state and the average of the tilted state\footnote{When we say `average of $\cdots$' we mean the expectation of the tilted state over the random codebook.} \eqref{Eqn:Thetiltedstate} is small. This is done analogous to the steps used in \cite{202103SAD_Sen} to derive an upper bound on the $\mathbb{L}_{\infty}$-norm of $N_{\delta}$ in \cite{202103SAD_Sen}. Having done this, the rest of the analysis of the quadruple error is straight forward since the corresponding projector has not been tilted and we are left with it operating on the average of the original state.

Having indicated analysis of the quadruple error term, we are left with fourteen terms. The analysis of each of these fourteen terms is essentially identical. To analyze each of the terms, our first step is similar to above. Specifically, we first prove the difference between the partially averaged original state and the partially averaged tilted state\eqref{Eqn:Thetiltedstate} is small in the $\mathbb{L}_{\infty}$-norm. This is done analogous to the steps used in \cite{202103SAD_Sen} to derive an upper bound on the $\mathbb{L}_{\infty}$-norm of $N_{X;l_{x}\delta},N_{Y;l_{y}\delta}$ in \cite{202103SAD_Sen}. Following this, we are left with both the state and the projector being tilted. Extracting out the corresponding isometry, as done in \cite[First column of Pg.~20]{202103SAD_Sen}, we can reduce this term analysis to the standard analysis involving corresponding typicality projectors acting on the original state. Having mentioned the similarities in our analysis, we now highlight the new elements owing to coset codes.
\begin{remark}
The use of coset codes results in codewords being uniformly distributed. In particular, the codewords are \textit{not} automatically typical wrt the desired distribution. This forces us to certain new analysis steps beyond those mentioned above, which maybe viewed as being standard extension of \cite{202103SAD_Sen}. Specifically, the codewords being uniformly distributed results in an enlargement of the bins. This enlargement of the bins will result in a more careful analysis of the above terms beyond the steps mentioned above. These additional steps will be fleshed out in detail in due course in an enlarged version of this manuscript. Essentially, we incorporate the same sequence of steps as employed in \cite[Appendix H, I]{2024MMTIT_Pad} to derive upper bounds on $\zeta_{4}(\ulinem)$, $\zeta_{5}(\ulinem)$ therein. In particular, the use of a `list error event' with $L_{j}>1$ assists us in our pursuit. See \cite[Discussion related to $L_{j}$ found between (10) and (11)]{2024MMTIT_Pad} for more details regarding these sequence of steps.
\end{remark}

Collating all of these steps, an exponentially decaying upper bound on the third term in the RHS of \eqref{Eqn:TheFinalThreeTermsForEff5TxMAC} can be proven if

\begin{eqnarray}
\label{Eqn:3CQICStep1ChnlBnd1}
S_{A_{j}} < \sum_{a \in A_{j}} \log |\mathcal{U}_{a}| - H(U_{A_{j}}|U_{A_{j}^{c}},U_{j}^{\oplus},V_{j},Y_{j}),
\\
S_{A_{j}}+S_{ij} < \sum_{a \in A_{j}} \log |\mathcal{U}_{a}| + \log \Prime_{j} 
\label{Eqn:Bnd2}
- H(U_{A_{j}},U_{j}^{\oplus}|U_{A_{j}^{c}},V_{j},Y_{j}),  \\
\label{Eqn:Bnd3}
S_{A_{j}}+S_{kj} < \sum_{a \in A_{j}}\log |\mathcal{U}_{a}| + \log \Prime_{j} 
- H(U_{A_{j}},U_{j}^{\oplus}|U_{A_{j}^{c}},V_{j},Y_{j}),\\
S_{A_{j}} +K_{j}+L_{j} <  \sum_{a \in A_{j}} \log |\mathcal{U}_{a}|+H(V_{j})
\label{Eqn:Bnd4}
- H(U_{A_{j}},V_{j}|U_{A_{j}^{c}},U_{j}^{\oplus},Y_{j}), \\
\nonumber
S_{A_{j}}+K_{j}+L_{j}+S_{ij} < \sum_{a \in A_{j}} \log |\mathcal{U}_{a}| + \log \Prime_{j}+H(V_{j})
\label{Eqn:Bnd5}
- H(U_{A_{j}},V_{j},U_{j}^{\oplus} |U_{A_{j}^{c}},Y_{j}), \mbox{ and }\\
S_{A_{j}}+K_{j}+L_{j}+S_{kj} < \sum_{a \in A_{j}}\log |\mathcal{U}_{a}| +\log \Prime_{j}+H(V_{j})
\label{Eqn:Bnd6}
- H(U_{A_{j}},V_{j},U_{j}^{\oplus}|U_{A_{j}^{c}},Y_{j}).
\end{eqnarray}
This completes our detailed sketch of proof. Additional steps filling in for some of the standard arguments above will be provided in an enlarged version of this manuscript.
\end{proof}
\subsection{Step II: Enlarging $\alpha_{S}$ via Unstructured IID codes to $\alpha_{US}$}
\label{Sec:Step2}
The inner bound proven in Thm.~\ref{Thm:3CQBCStepIInnerBound} only enables Rxs to decode a bivariate component of the interference. As Marton's common codebook \cite{197905TIT_Mar} strategy proves, there is import to decoding a univariate part of the other user's codewords. In other words, by combining both unstructured codebook based strategy and coset code strategy, we can derive an inner bound that subsumes all current known inner bounds for the three user CQBC. The following inner bound is obtained by combining the conventional unstructured IID code based strategy due to Marton \cite{197905TIT_Mar} and the above coset code based strategy (Thm.~\ref{Thm:3CQBCStepIInnerBound}). \textcolor{black}{A proof of this inner bound is involved, but essentially is a generalization of the proof provided for Thm.~\ref{Thm:3CQBCStepIInnerBound}. We indicate steps in the proof in an enlarged version of this article. In regards to the code structure, refer to Fig.~\ref{Fig:3CQBCStepIICodeStructure}.}
 \begin{figure}[H]
    \centering
    \includegraphics[width=4.5in]{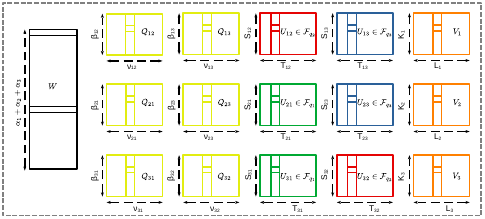}
    \caption{$16$ codes employed in the proof of Thm.~\ref{Thm:Step2}. Row $j$ depicts the codes of Rx $j$. The $U_{ji} : ji \in \dbrackthree$ are coset codes built over finite fields. Codes with the same color are built over the same finite field and the smaller of the two is a sub-coset of the larger. The black, yellow and the orange codes are built over auxiliary finite sets  $\CalW$, $\CalQ_{ji} : ji \in \dbrackthree$ and $ \CalV_{j} : j \in [3] $ respectively using the conventional IID random code structure.}
    \label{Fig:3CQBCStepIICodeStructure}
\end{figure}
 \med\textbf{Notation}: As before, $i,j,k \in [3]$ denote distinct indices, i.e., $\{i,j,k\} = [3]$. We let $\dulineQ \define (\ulineQ_{1*},\ulineQ_{2*},\ulineQ_{3*})=(\ulineQ_{*1},\ulineQ_{*1},\ulineQ_{*1})$, where $\ulineQ_{j*}=(Q_{ji},Q_{jk})$ and similarly $\dulineCalQ=\times_{j=1}^{3}\ulineCalQ_{j*}$, $\dulineq = (\ulineq_{1*},\ulineq_{2*},\ulineq_{3*})$, where $\ulineq_{j*}=(q_{ji},q_{jk})$ and analogously the rest of the objects.
 \begin{theorem}
 \label{Thm:Step2}
Let $\hat{\alpha}_{US} \in [0,\infty)^{4}$ be the set of all rate-cost quadruples $(R_{1},R_{2},R_{3},\tau) \in [0,\infty)^{4}$ for which there exists (i) finite sets $\CalW,\CalQ_{\xi},\CalV_{j}$ for $\xi \in \Xi\define \{12,13,21,23,31,32\}$ and $j \in [3]$, (ii) finite fields $\SemiPrivateRVSet_{ij}=\CalF_{\upsilon_{j}}$ for each $ij \in \Xi$, (ii) a PMF $p_{W\dulineQ\dulineU~\!\!\ulineV X}=p_{W\ulineQ_{1*}\ulineQ_{2*}\ulineQ_{3*}\ulineU_{1*}\ulineU_{2*}\ulineU_{3*}V_{1}V_{2}V_{3} X}$ on $\dulineCalU \times \dulineQ \times \ulineCalV\times\CalX$ (iii) nonnegative numbers $\alpha,\alpha_{j},\beta_{ij},\nu_{ij},S_{ij},T_{ij}:ij \in \Xi, K_{j},L_{j}:j\in \left\{ 1,2,3\right\}$, such that $\alpha > \alpha_{1},\alpha_{2},\alpha_{3}$,
$R_{1}=\alpha_{1}+\nu_{12}+\nu_{13}+T_{12}\log \upsilon_{2}+T_{13}\log \upsilon_{3}+L_{1},
R_{2}=\alpha_{2}+\nu_{21}+\nu_{23}+T_{21}\log \upsilon_{1}+T_{23}\log \upsilon_{3}+L_{2},
R_{3}=\alpha_{3}+\nu_{31}+\nu_{32}+T_{31}\log \upsilon_{1}+T_{32}\log \upsilon_{2}+L_{3}$, $\Expectation\left\{ \kappa(X) \right\} \leq \tau$, $S_{A}+M_{B}+K_{C}+\beta_{D} >\Theta(A,B,C,D)$ where $\Theta(A,B,C,D)\define $
\begin{eqnarray}
 \!\!\!\max_{\substack{(\theta_{j}:j \in B) \in \underset{j \in B}{\prod} \fieldpij}}\!\! \left\{\!\!\!\!
 \begin{array}{c}\sum_{a \in A}\log |\mathcal{U}_{a}| + \sum_{j \in B}\log \upsilon_{j}+\sum_{c \in C} H(V_{c}|W)+\sum_{d \in D}H(Q_{D}|W) - \nonumber \\H(U_{A},U_{ji}\oplus \theta_{j}U_{jk}\!:j \in B,V_{C}|W)\end{array}
 \!\!\!\right\}\!\!\!\!\!
 \nonumber
\end{eqnarray}
for all $A,D \subseteq \left\{12,13,21,23,31,32\right\}, B \subseteq \left\{ 1,2,3 \right\}, C \subseteq \left\{ 1,2,3 \right\}$, that satisfy $A \cap A(B) = \phi$, where $A({B}) = \cup_{j \in B}\{ ji,jk\}$, $U_{A} = (U_{jk}:jk \in A)$, $V_{C}=(V_{c}:c \in C)$, $Q_{D}=(Q_{d}:d \in D)$, $S_{A} = \sum _{jk \in A}S_{jk}, M_{B}\define \sum_{j \in B} \max\{ S_{ij}+T_{ij},S_{kj}+T_{kj}\}, K_{C} = \sum_{c \in C}K_{c}$, $\beta_{D} = \sum _{jk \in D}\beta_{jk}$ and 
\begin{eqnarray}
\label{Eqn:Chnlbnd1}
S_{\mathcal{A}_{j}} + T_{\mathcal{A}_{j}} + \beta_{\CalC_{j}} + \nu_{\CalC_{j}} &\leq& \sum_{a \in \mathcal{A}_{j}} \log |\mathcal{U}_{a}| + H(Q_{\CalC_{j}}|Q_{\CalC_{j}^{C}},W) \nonumber \\ && -H(U_{\mathcal{A}_{j}},Q_{\CalC_{j}}|U_{\mathcal{A}_{j}^{c}},Q_{\CalC_{j}^{C}},U_{ij} \oplus U_{kj}, V_{j}, Y_{j}, W), \nonumber 
\end{eqnarray}
\begin{eqnarray}
\label{Eqn:Chnlbnd2}
S_{\mathcal{A}_{j}} + T_{\mathcal{A}_{j}} + \beta_{\CalC_{j}} + \nu_{\CalC_{j}} + S_{ij} + T_{ij} &\leq& \sum_{a \in \mathcal{A}_{j}} \log |\mathcal{U}_{a}| + \log \upsilon_{j} + H(Q_{\CalC_{j}}|Q_{\CalC_{j}^{C}},W) \nonumber \\ &&- H(U_{\mathcal{A}_{j}}, Q_{\CalC_{j}}, U_{ij} \oplus U_{kj}|U_{\mathcal{A}_{j}^{c}},Q_{\CalC_{j}^{C}},V_{j},Y_{j},W), \nonumber 
\\
\label{Eqn:Chnlbnd4}
S_{\mathcal{A}_{j}}+T_{\mathcal{A}_{j}}+\beta_{\CalC_{j}}+\nu_{\CalC_{j}}+S_{kj}+T_{kj} &\!\leq\!& \sum_{a \in \mathcal{A}_{j}}\!\!\log |\mathcal{U}_{a}| + \log \upsilon_{j}
+H(Q_{\CalC_{j}}|Q_{\CalC_{j}^{C}},W) \nonumber \\
&&- H(U_{\mathcal{A}_{j}}\!,\!Q_{\CalC_{j}},\!U_{ij}\!\oplus\!
U_{kj}|U_{\mathcal{A}_{j}^{c}},Q_{\CalC_{j}^{C}},V_{j},Y_{j},W), \nonumber 
\\
\label{Eqn:Chnlbnd6}
 S_{\mathcal{A}_{j}} +T_{\mathcal{A}_{j}} +\beta_{\CalC_{j}} + \nu_{\CalC_{j}} + K_{j} +\L_{j} &\leq& \sum_{a \in \mathcal{A}_{j}} \log |\mathcal{U}_{a}|+ H(V_{j}|W) + H(Q_{\CalC_{j}}|Q_{\CalC_{j}^{C}},W) \nonumber \\ && -H(U_{\mathcal{A}_{j}}, Q_{\CalC_{j}},\!V_{j}|U_{\mathcal{A}_{j}^{c}},Q_{\CalC_{j}^{C}},U_{ij} \oplus
U_{kj},Y_{j},W), \nonumber 
\\
\label{Eqn:Chnlbnd8}
S_{\mathcal{A}_{j}} + T_{\mathcal{A}_{j}} +\beta_{\CalC_{j}} +\nu_{\CalC_{j}} +K_{j}+L_{j}+S_{ij}+T_{ij} &\leq& \sum_{a \in \mathcal{A}_{j}} \log |\mathcal{U}_{a}| + \log \upsilon_{j} +H(V_{j}|W) +H(Q_{\CalC_{j}}|Q_{\CalC_{j}^{C}},W) \nonumber
\\
\label{Eqn:Chnlbnd9}
&& -H(U_{\mathcal{A}_{j}}, Q_{\CalC_{j}}, V_{j},U_{ij}\oplus U_{kj}|U_{\mathcal{A}_{j}^{c}}, Q_{\CalC_{j}^{C}}, Y_{j},W)
\nonumber
\\
\label{Eqn:Chnlbnd10}
S_{\mathcal{A}_{j}} + T_{\mathcal{A}_{j}} + \beta_{\CalC_{j}} + \nu_{\CalC_{j}} + K_{j}+L_{j}+S_{kj}+T_{kj} &\leq& \sum_{a \in \mathcal{A}_{j}}\log |\mathcal{U}_{a}| + \log  \upsilon_{j}+H(V_{j}|W) +H(Q_{\CalC_{j}}|Q_{\CalC_{j}^{C}},W)
\nonumber\\ 
\label{Eqn:Chnlbnd11}
&& - H(U_{\mathcal{A}_{j}}, Q_{\CalC_{j}},V_{j},U_{ij}\oplus
U_{kj}|U_{\mathcal{A}_{j}^{c}},Q_{\CalC_{j}^{C}},Y_{j},W), \nonumber 
\\
\label{Eqn:Chnlbnd12}
\alpha+\olineS+\olineT+K_{j}+L_{j}+\overline{\beta}+\overline{\nu} &\leq& \log(\upsilon_{i}\upsilon_{k})+H(Q_{ji},Q_{jk}|W)+H(V_{j}|W)\nonumber \\ && -H(\ulineU_{j*},\ulineQ_{j*},W,V_{j}|Y_{j},U_{ij}\oplus U_{kj}), \nonumber 
\\
\label{Eqn:Chnlbnd12}
\alpha+ \olineS + \olineT + K_{j} + L_{j} + \overline{\beta}+ \overline{\nu}+ S_{ij} + K_{ij} & \leq & \log(\upsilon_{i}\upsilon_{k})+H(Q_{ji},Q_{jk}|W)+H(V_{j}|W) \nonumber \\ && -H(\ulineU_{j*}, \ulineQ_{j*}, W, V_{j}, U_{ij}\oplus U_{kj}|Y_{j}), \nonumber 
\\
\label{Eqn:Chnlbnd12}
\alpha+ \olineS + \olineT + K_{j} + L_{j} + \overline{\beta} + \overline{\nu} + S_{ij} + K_{ij} & \leq &  \log(\upsilon_{i}\upsilon_{k})+H(Q_{ji}, Q_{jk}|W)+H(V_{j}|W) \nonumber \\ && -H(\ulineU_{j*}, \ulineQ_{j*}, W, V_{j}, U_{ij}\oplus U_{kj}|Y_{j}), \nonumber 
\end{eqnarray}
 for every $\mathcal{A}_{j} \subseteq \left\{ ji,jk\right\},\CalC_{j} \subseteq \left\{ ji,jk\right\}$ with
distinct indices $i,j,k$ in $\left\{ 1,2,3 \right\}$, where
$S_{\mathcal{A}_{j}} \define \sum_{a \in \mathcal{A}_{j}}S_{a}$,
$T_{\mathcal{A}_{j}} \define \sum_{a \in \mathcal{A}_{j}}T_{a}$, $\beta_{\CalC_{j}} = \sum _{jk \in \CalC_{j}}\beta_{jk}$, $\nu_{\CalC_{j}} = \sum _{jk \in \CalC_{j}}\nu_{jk}$,
$U_{\mathcal{A}_{j}} = (U_{a}:a \in \mathcal{A}_{j})$, $Q_{\mathcal{C}_{j}} = (Q_{c}:c \in \mathcal{C}_{j})$ with the notation $\olineS = S_{ji}+S_{jk}\olineT = T_{ji}+T_{jk}, \overline{\beta} = \beta_{ji}+\beta_{jk},\overline{\nu}= \nu_{ji}+\nu_{jk}$ and all the information quantities are evaluated wrt state 
\begin{eqnarray}
\label{Eqn:StageIITestChnl}
 \Phi^{W\dulineQ\!\!~\dulineU\!\!~\ulineU^{\oplus}\!\!~\ulineV X\!\!~\ulineY}\define \!\!\!\!\!\!\!
\sum_{\substack{w,\dulineq,\dulineu,\dulinev,x\\u_{1}^{\oplus},u_{2}^{\oplus},u_{3}^{\oplus} }}\!\!\!\!\!\!p_{W\dulineQ~\!\!\dulineU~\!\!\ulineV X}(w,\dulineq,\dulineu,\ulinev,x)\mathds{1}_{\left\{\substack{u_{ji}\oplus u_{jk}\\=u_{j}^{\oplus}:j\in[3]}\right\}}\!\ketbra{w~\dulineq~\dulineu~  u_{1}^{\oplus}~ u_{2}^{\oplus}~ u_{3}^{\oplus}~ \ulinev~ x} \!\otimes\! \rho_{x}\nonumber
\end{eqnarray}
Let $\alpha_{US}$ denote the convex closure of $\hat{\alpha}_{US}$. Then $\alpha_{US} \subseteq \ScrC(\tau)$ is an achievable rate region.
\end{theorem}
We shall provide a description of the code structure, encoding and decoding in an enlarged version of this manuscript. Specifically, Fig.~\ref{Fig:3CQBCStepIICodeStructure} depicts the code structure with the rate parameters as specified in the theorem statement. We shall explain details of this coding scheme and sketch elements of the proof in enlarged version of this manuscript.
\appendices
\section{Error Events at Rx $1$}
\label{AppSec:Rx1ErrorEvents}
\begin{prop}
 \label{Prop:BoundingT11}
 For every $\eta > 0$, there exists an $N_{\eta} \in \naturals$ such that for all $n \geq N_{\eta}$, we have $\Expectation\{ T_{11}\mathds{1}_{\CalE_{\ulinem}}\} \leq \exp\{ -n\eta\}$.
\end{prop}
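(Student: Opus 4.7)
The plan is to show that $T_{11}$ is exponentially close to $1$ on the list-threshold event $\CalE_{\ulinem}$. Taken at face value, the stated inequality $\Expectation\{T_{11}\mathds{1}_{\CalE_{\ulinem}}\}\leq\exp\{-n\eta\}$ cannot hold, because $T_{11}=\tr(\lambda_{m_1,b_1}^{\apl}\pi_{\apl}\rho_{\ulinem}^{Y_1}\pi_{\apl})$ is the ``correct decoding'' quantity that enters the Hayashi--Nagaoka bound $\overline{T}_{11}\leq 2(I-T_{11})\mathds{1}_{\CalE_{\ulinem}}+4(T_{12}+T_{13}+T_{14})\mathds{1}_{\CalE_{\ulinem}}$ in the combination $I-T_{11}$. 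I therefore read the claim as the equivalent assertion $\Expectation\{(I-T_{11})\mathds{1}_{\CalE_{\ulinem}}\}\leq\exp\{-n\eta\}$, which is the bound that actually drives the error analysis, and will prove this form.

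On $\CalE_{\ulinem}$ the encoder has selected $(b_1(\ulinem),a_2(\ulinem),a_3(\ulinem))$ with $(U_1^n(m_1,b_1),V_2^n(a_2),V_3^n(a_3))\in T_\eta^n(p_{U_1V_2V_3})$. Since $W^n(\apl)=V_2^n(a_2)\oplus_q V_3^n(a_3)$ is deterministic given the latter pair, the quadruple $(U_1^n,V_2^n,V_3^n,W^n)$ is jointly typical with respect to $p_{U_1V_2V_3W}$ of \eqref{Eqn:TheoremState}. Given these codewords, $X^n(\ulinem)$ is drawn i.i.d.\ from $p_{X|U_1V_2V_3}^n(\cdot\mid u_1^n,v_2^n,v_3^n)$, so by the conditional law of large numbers the tuple $(X^n,U_1^n,V_2^n,V_3^n,W^n)$ is jointly typical under $p_{XU_1V_2V_3W}$ with conditional probability at least $1-\exp\{-n\eta\}$, placing $\rho_{\ulinem}^{Y_1}$ in the regime where typicality projectors act nearly as identity.

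The argument proper is a three-step chain of the gentle measurement lemma. First, sandwich $\rho_{\ulinem}^{Y_1}$ by the U-Typ-Proj $\pi^{Y_1}$ of $\rho^{Y_1}$. Next, sandwich by $\pi_{m_1,b_1}=\pi_{u_1^n}^{Y_1}$, the C-Typ-Proj of $\rho_{u_1}^{Y_1}$, using that the state $\rho_{\ulinem}^{Y_1}$ reduces to $\rho_{u_1}^{Y_1,\otimes n}$ up to typicality slack after marginalizing the code randomness unused in $u_1^n$. Finally, sandwich by $\pi_{m_1,b_1}^{\apl}=\pi_{u_1^n,w^n}^{Y_1}$, the C-Typ-Proj of $\rho_{u_1,w}^{Y_1}$. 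At each stage, standard typicality bounds give trace expectation $\geq 1-\exp\{-n\eta\}$ on $\CalE_{\ulinem}$, hence, by the gentle operator lemma, an expected trace-norm deviation $\Expectation\{\|\pi\rho\pi-\rho\|_1\mathds{1}_{\CalE_{\ulinem}}\}\leq\exp\{-n\eta\}$. Chaining these three deviations through the triangle inequality in trace norm and the bound $\tr(A\rho)\geq\tr(A\sigma)-\|\rho-\sigma\|_1$ for $0\leq A\leq I$ yields $\Expectation\{T_{11}\mathds{1}_{\CalE_{\ulinem}}\}\geq\Pr(\CalE_{\ulinem})-\exp\{-n\eta\}$, i.e.\ $\Expectation\{(I-T_{11})\mathds{1}_{\CalE_{\ulinem}}\}\leq\exp\{-n\eta\}$.

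The main obstacle lies in the third sandwich: $\pi_{u_1^n,w^n}^{Y_1}$ is adapted to $\rho_{u_1,w}^{Y_1}$, whereas the physical state $\rho_{\ulinem}^{Y_1}$ is prepared from the full $(U_1,V_2,V_3)$, not just $(U_1,W)$. The argument closes only because of the construction in \eqref{Eqn:TheoremState}: averaging $p_X$ conditional on $(U_1,V_2,V_3)$ with $W=V_2\oplus_q V_3$ enforced deterministically recovers $\rho_{u_1,w}^{Y_1,\otimes n}$ on the typical set, so that the projector and the effective state are compatible up to typicality slack. Verifying this marginalization identity carefully, and tracking the joint randomness of the generator $G_3$, shifts $D_2^n,D_3^n$, and bin indices so that the expectations above are genuinely over the full code ensemble (and not merely conditional on the chosen codewords via \textbf{Note 1}), is the most delicate bookkeeping step.
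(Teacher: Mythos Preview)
Your reading of the statement is correct: the paper itself proves the \emph{lower} bound $\Expectation\{T_{11}\mathds{1}_{\CalE_{\ulinem}}\}\geq 1-10\sqrt{\eta}$ (equivalently an upper bound on $(I-T_{11})\mathds{1}_{\CalE_{\ulinem}}$), and your approach via repeated gentle-operator peeling of the typicality projectors is exactly the paper's.

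Two minor corrections. First, you list three projectors, but $T_{11}=\tr(\lambda_{m_1,b_1}^{\apl}\,\pi_{\apl}\rho_{\ulinem}^{Y_1}\pi_{\apl})$ carries a fourth sandwich by $\pi_{\apl}$ (the C-Typ-Proj for $\rho_w^{Y_1}$); the paper peels this one off too, leaving $\tr(\pi_{m_1,b_1}^{\apl}\rho_{\ulinem}^{Y_1})$ as the core term. Second, your ``main obstacle'' paragraph is slightly off: the physical state is $\rho_{x^n}^{Y_1}$ for the specific realized $x^n$, not an average that ``recovers $\rho_{u_1,w}^{Y_1,\otimes n}$.'' No marginalization identity is needed; the bound $\tr(\pi_{u_1^n,w^n}^{Y_1}\rho_{x^n}^{Y_1})\geq 1-\epsilon$ follows directly from pinching, since $Y_1$ depends on the input only through $X$ and $(x^n,u_1^n,w^n)$ is jointly typical on $\CalE_{\ulinem}$ (after the conditional draw of $X^n$). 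Finally, note the paper only claims $1-10\sqrt{\eta}$ rather than exponential decay; both suffice since $\eta>0$ is arbitrary.
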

\begin{proof}
We leverage the inequality $\tr(\lambda\sigma) \geq \tr(\lambda\rho) -\norm{\rho-\sigma}_{1}$ for sub-normalized states $0 \leq \rho,\sigma,\lambda \leq I$ to assert that
\begin{eqnarray}
 T_{11}\mathds{1}_{\CalE_{\ulinem}}  \geq \tr(\pi_{m_{1},b_{1}}^{\apl}\rho_{\ulinem}^{Y_{1}})\mathds{1}_{\CalE} - \norm{ \rho_{\ulinem}^{Y_{1}} - \pi_{m_{1},b_{1}}\rho_{\ulinem}^{Y_{1}}\pi_{m_{1},b_{1}} }_{1}\mathds{1}_{\CalE} \nonumber\\ - \norm{\rho_{\ulinem}^{Y_{1}} - \pi_{\apl}\rho_{\ulinem}^{Y_{1}}\pi_{\apl} }_{1}\mathds{1}_{\CalE} -\norm{\rho_{\ulinem}^{Y_{1}} - \pi^{Y_{1}}\rho_{\ulinem}^{Y_{1}}\pi^{Y_{1}} }_{1}\mathds{1}_{\CalE} 
  \nonumber
\end{eqnarray}
where we have abbreviated $\mathds{1}_{\CalE_{\ulinem}}$ as $\mathds{1}_{\CalE}$. From repeated use of pinching for non-commuting operators, we have $\min\{ \tr(\pi^{Y_{1}}\rho_{\ulinem}^{Y_{1}}), \tr(\pi_{\apl}\rho_{\ulinem}^{Y_{1}}),\tr(\pi_{m_{1},b_{1}}\rho_{\ulinem}^{Y_{1}}) \} \geq 1-6\sqrt{\eta}$ and the gentle operator lemma guarantees that $\Expectation\{T_{11}\mathds{1}_{\CalE_{\ulinem}}\} \geq 1-10\sqrt{\eta}$ for any $\eta> 0$ and sufficiently large $n$.
\end{proof}

\med\textbf{Proof of Prop.~\ref{Prop:Rx1Errors}} : We begin by analyzing $T_{12}$.
\med\textit{Analysis of $T_{12}$} : From \eqref{Eqn:ProbabilityofEevents2} and \eqref{Eqn:Rx1ErrorEventAssociatedProbs} we have
\begin{eqnarray}
\label{Eqn:MeasureRelationshipsT12}
 P(\CalE_{12}) \leq p_{XU_{1}\ulineV |W}^{n}(x^{n},u_{1}^{n},\ulinev^{n}|w^{n})p^{n}_{U_{1}}(\hatu_{1}^{n})\Phi_{12}\\
 \mbox{where }\Phi_{12} = \frac{2^{-n[(2+T_{2}+T_{3})\log q]}\mathds{1}_{\CalA \cap \CalB}}{2^{n(\tau-\eta -H(\ulineV,U_{1}|W)+H(U_{1}) )}}\nonumber
 \end{eqnarray}
and $\mathcal{A}$ and $\mathcal{B}$ are as defined in \eqref{Eqn:Rx1AnalysisG14} and three equations prior to it. The above can be derived using standard bounds on typical sequences. When we substitute \eqref{Eqn:MeasureRelationshipsT12} in a generic term of \eqref{Eqn:Rx1ErrorEventChars-12}, we obtain
\begin{eqnarray}
 \label{Eqn:Thm1Rx1ErrorT12Analysis1}
  \tr(\lambda_{\hatu_{1}^{n},w^{n}}\!\pi_{w^{n}}\rho_{x^{n}}\pi_{w^{n}} \!)p_{XU_{1}\ulineV |W}^{n}(u_{1}^{n},\ulinev^{n}|w^{n})p^{n}_{U_{1}}(\hatu_{1}^{n})\Phi_{12}.
 \end{eqnarray}
Substituting \eqref{Eqn:Thm1Rx1ErrorT12Analysis1} for a generic term in \eqref{Eqn:Rx1ErrorEventChars-12}, recognizing the terms other than $\rho_{x^{n}}$ do not depend on $u_{1}^{n},v_{2}^{n},v_{3}^{n}$, summing over the latter and recalling that all our information quantities are evaluated with respect to the state \eqref{Eqn:TheoremState}, we obtain
\begin{eqnarray}
\label{Eqn:Thm1Rx1ErrorT12Analysis2}
\lefteqn{~\Expectation\{T_{12}\mathds{1}_{\CalE_{\ulinem}} \} \leq \!\!\!\!\!\! \sum_{\substack{b_{1},a_{2},a_{3},w^{n}\\\hatm_{1},\hatb_{1},\hatu_{1}^{n} }}\!\!\!\! \!\!
 \tr(\lambda_{\hatu_{1}^{n},w^{n}}\!\pi_{w^{n}}\rho_{w^{n}}\pi_{w^{n}} \!)p_{U_{1}}^{n}(\hatu_{1}^{n})\Phi_{12}}\nonumber\\
 \label{Eqn:Thm1Rx1ErrorT12Analysis3}
 \lefteqn{~\leq \!\!\!\!\!\! \sum_{\substack{b_{1},a_{2},a_{3},w^{n}\\\hatm_{1},\hatb_{1},\hatu_{1}^{n} }}\!\!\!\! \!\!
 \tr(\lambda_{\hatu_{1}^{n},w^{n}}\!\pi_{w^{n}}\!)p_{U_{1}}^{n}(\hatu_{1}^{n})2^{-nH(Y_{1}|W)}\Phi_{12}}\\
 \label{Eqn:Thm1Rx1ErrorT12Analysis4}
 &&\!\!\!\!\!\!\!\!\!= \!\!\!\!\!\!\! \sum_{\substack{b_{1},a_{2},a_{3},w^{n}\\\hatm_{1},\hatb_{1},\hatu_{1}^{n} }}\!\!\!\! \!\!\!\!
 \tr(\pi^{Y_{1}}\pi_{\hatu_{1}^{n}} \pi_{ \hatu_{1}^{n}w^{n}}\pi_{\hatu_{1}^{n}} \pi^{Y_{1}}\!\pi_{w^{n}}\!)p_{U_{1}}^{n}(\hatu_{1}^{n})2^{-nH(Y_{1}|W)}\Phi_{12}
 \\
 \label{Eqn:Thm1Rx1ErrorT12Analysis5}
 &&\!\!\!\!\!\!\!\!\!= \!\!\!\!\!\!\! \sum_{\substack{b_{1},a_{2},a_{3},w^{n}\\\hatm_{1},\hatb_{1},\hatu_{1}^{n} }}\!\!\!\! \!\!\!\!
 \tr( \pi_{ \hatu_{1}^{n}w^{n}}\pi_{\hatu_{1}^{n}} \pi^{Y_{1}}\!\pi_{w^{n}}\!\pi^{Y_{1}}\pi_{\hatu_{1}^{n}})p_{U_{1}}^{n}(\hatu_{1}^{n})2^{-nH(Y_{1}|W)}\Phi_{12}
 \\
 \label{Eqn:Thm1Rx1ErrorT12Analysis6}
 \lefteqn{~\leq  \!\!\!\!\!\! \sum_{\substack{b_{1},a_{2},a_{3},w^{n}\\\hatm_{1},\hatb_{1},\hatu_{1}^{n} }}\!\!\!\! \!\!\!\!
 \tr( \pi_{ \hatu_{1}^{n}w^{n}}I)p_{U_{1}}^{n}(\hatu_{1}^{n})2^{-nH(Y_{1}|W)}\Phi_{12} }
 \\
 \label{Eqn:Thm1Rx1ErrorT12Analysis7}
  &&\!\!\!\!\!\!\!\!\!\leq  \!\!\!\!\!\! \sum_{\substack{b_{1},a_{2},a_{3},w^{n}\\\hatm_{1},\hatb_{1},\hatu_{1}^{n} }}\!\!\!\! \!\!\!\!
 p_{U_{1}}^{n}\!(\hatu_{1}^{n})2^{-n[I(Y_{1};U_{1}|W)-2\eta]}\Phi_{12} \leq 2^{-n[I(Y_{1},W;U_{1})-4\eta-R_{1}-B_{1} ]}
 \end{eqnarray}
where (a) we have used the operator inequality $\pi_{w^{n}}\rho_{w^{n}}\pi_{w^{n}} \leq 2^{-nH(Y_{1}|W)}\pi_{w^{n}}$ which holds since $w^{n}\in T_{\eta}^{n}(p_{V_{2}\oplus_{q}V_{3}})=T_{\eta}^{n}(p_{W})$ in \eqref{Eqn:Thm1Rx1ErrorT12Analysis3}, (b) \eqref{Eqn:Thm1Rx1ErrorT12Analysis4} follows by substitution for $\lambda_{\hatu_{1}^{n},w^{n}}$ from the definition prior to \eqref{Eqn:Rx1DecPOVMs}, (c) \eqref{Eqn:Thm1Rx1ErrorT12Analysis5} follows from cyclicity of the trace, (d) \eqref{Eqn:Thm1Rx1ErrorT12Analysis6} follows from the operator inequality $\pi_{\hatu_{1}^{n}} \pi^{Y_{1}}\!\pi_{w^{n}}\!\pi^{Y_{1}}\pi_{\hatu_{1}^{n}} \leq I$, (e) \eqref{Eqn:Thm1Rx1ErrorT12Analysis7} follows from the typicality bound $\tr( \pi_{ \hatu_{1}^{n}w^{n}}I) \leq 2^{nH(Y_{1}|U_{1},W)}$ which holds for $(u_{1}^{n},w^{n}) \in T_{\eta}^{n}(p_{U_{1}W})$.
\med\textit{Analysis of $T_{13}$} : From \eqref{Eqn:ProbabilityofEevents2} and \eqref{Eqn:Rx1ErrorEventAssociatedProbs} we have
\begin{eqnarray}
 \label{Eqn:MeasureRelationshipsE13}
 P(\CalE_{13}) \leq p_{X\ulineV |U_{1}W}^{n}(x^{n},\ulinev^{n}|u_{1}^{n},w^{n})p^{n}_{W|U_{1}}(\hatw^{n}|u_{1}^{n})\Phi_{13}
 \mbox{where }\Phi_{13}=\frac{2^{-n[(3+T_{2}+T_{3})\log q+H(U_{1})]}\mathds{1}_{\CalA \cap \CalB}}{2^{n(\tau-\eta -H(\ulineV|U_{1},W)-H(W|U_{1}) )}}
\end{eqnarray}
and $\mathcal{A}$ and $\mathcal{B}$ are as defined in \eqref{Eqn:Rx1AnalysisG14} and \eqref{Eqn:Rx1AnalysisF1}. The above can be derived using standard bounds on typical sequences. Substituting the upper bound \eqref{Eqn:MeasureRelationshipsE13} in \eqref{Eqn:Rx1ErrorEventChars-13}, we have
\begin{eqnarray}
 \Expectation\{T_{13}\mathds{1}_{\CalE_{\ulinem}}\} &\leq& \sum_{\substack{b_{1},a_{2},\hata_{\oplus}\\a_{3},u_{1}^{n},w^{n}}}\!\!\tr\!\left(\! \sum_{\hatw^{n}}p^{n}_{W|U_{1}}(\hatw^{n}|u_{1}^{n})\pi^{Y_{1}}\pi_{u_{1}^{n}} \pi_{ u_{1}^{n}\hatw^{n}}
  \label{Eqn:Thm1Rx1ErrorT13Analysis2}
 \pi_{u_{1}^{n}} \pi^{Y_{1}}\pi_{w^{n}}\!\!\!\!\!\!\sum_{x^{n},v_{2}^{n},v_{3}^{n}}\!\!\!\!\!p_{X\ulineV |U_{1}W}^{n}(x^{n},\ulinev^{n}|u_{1}^{n},w^{n})\rho_{x^{n}}\pi_{w^{n}} \!\!\right)\!\!\Phi_{13}\nonumber\\
 &\leq& 2^{n[H(Y_{1}|U_{1},W)+2\eta]}\!\!\!\!\!\! \sum_{\substack{b_{1},a_{2},\hata_{\oplus}\\a_{3},u_{1}^{n},w^{n}}}\!\!\!\!\tr\!\left(\! \sum_{\hatw^{n}}p^{n}_{W|U_{1}}(\hatw^{n}|u_{1}^{n})\pi^{Y_{1}}\pi_{u_{1}^{n}} \rho_{ u_{1}^{n}\hatw^{n}}
  \label{Eqn:Thm1Rx1ErrorT13Analysis4}
 \pi_{u_{1}^{n}} \pi^{Y_{1}}\pi_{w^{n}}\rho_{u_{1}^{n}w^{n}}\pi_{w^{n}} \!\right)\Phi_{13}
 \\
 \label{Eqn:Thm1Rx1ErrorT13Analysis5}
 &\leq& 2^{n[H(Y_{1}|U_{1},W)+2\eta]}\!\!\!\!\!\! \sum_{\substack{b_{1},a_{2},\hata_{\oplus}\\a_{3},u_{1}^{n},w^{n}}}\!\!\!\!\!\tr\!\left(\pi^{Y_{1}}\pi_{u_{1}^{n}} \rho_{ u_{1}^{n}}\pi_{u_{1}^{n}} \pi^{Y_{1}}\pi_{w^{n}}\rho_{u_{1}^{n}w^{n}}\pi_{w^{n}} \!\right)\Phi_{13}\nonumber\\
  \label{Eqn:Thm1Rx1ErrorT13Analysis6}
 &\leq& 2^{-n[I(Y_{1};W|U_{1})-4\eta]}\!\!\!\!\!\! \sum_{\substack{b_{1},a_{2},\hata_{\oplus}\\a_{3},u_{1}^{n},w^{n}}}\!\!\!\!\!\tr\!\left(\pi^{Y_{1}}\pi_{u_{1}^{n}}  \pi^{Y_{1}}\pi_{w^{n}}\rho_{u_{1}^{n}w^{n}}\pi_{w^{n}} \!\right)\Phi_{13}\\
  \label{Eqn:Thm1Rx1ErrorT13Analysis7}
 &\leq& 2^{-n[I(Y_{1};W|U_{1})-4\eta]}\!\!\!\!\!\! \sum_{\substack{b_{1},a_{2},\hata_{\oplus}\\a_{3},u_{1}^{n},w^{n}}}\!\!\!\!\!\tr\!\left(I_{Y_{1}}^{\otimes n}\pi_{w^{n}}\rho_{u_{1}^{n}w^{n}}\pi_{w^{n}} \!\right)\Phi_{13}\\
\label{Eqn:Thm1Rx1ErrorT13Analysis8}
 &\leq& 2^{-n[I(Y_{1};W|U_{1})-4\eta]}\!\!\!\!\!\! \sum_{\substack{b_{1},a_{2},\hata_{\oplus}\\a_{3},u_{1}^{n},w^{n}}}\!\!\!\!\!\tr\!\left(\rho_{u_{1}^{n}w^{n}} \!\right)\Phi_{13}
 \\
 \label{Eqn:Thm1Rx1ErrorT13Analysis9}
 &\leq& 2^{-n[I(Y_{1},U_{1};W)+\log q-H(W) - \max\{S_{2}\log q,S_{3}\log q\}-6\eta]}
 \end{eqnarray}
 where (a) \eqref{Eqn:Thm1Rx1ErrorT13Analysis4} follows from the fact that for $(u_{1}^{n},\hatw^{n}) \in T_{\eta}^{n}(p_{U_{1}W})$, we have 
 \begin{eqnarray}
 \pi_{u_{1}^{n}\hatw^{n}} \leq 2^{n[H(Y_{1}|U_{1},W)+2\eta]}\pi_{u_{1}^{n}\hatw^{n}} \rho_{u_{1}^{n}\hatw^{n}} \pi_{u_{1}^{n}\hatw^{n}}  = 2^{n[H(Y_{1}|U_{1},W)+2\eta]}\sqrt{\rho_{u_{1}^{n}\hatw^{n}}}\pi_{u_{1}^{n}\hatw^{n}} \sqrt{\rho_{u_{1}^{n}\hatw^{n}}}  \nonumber\\
 \leq 2^{n[H(Y_{1}|U_{1},W)+2\eta]}\sqrt{\rho_{u_{1}^{n}\hatw^{n}}} \sqrt{\rho_{u_{1}^{n}\hatw^{n}}} = 2^{n[H(Y_{1}|U_{1},W)+2\eta]}\rho_{u_{1}^{n}\hatw^{n}}
 \nonumber
 \end{eqnarray}
 which is based on the analysis of the second term in \cite[Eqn.~\(21\)]{201206TIT_FawHaySavSenWil}
 (b) \eqref{Eqn:Thm1Rx1ErrorT13Analysis6} follows from $2^{-n[H(Y_{1}|U_{1})-2\eta]}\pi_{u^{n}} \leq \pi_{u_{1}^{n}}\rho_{u_{1}^{n}}\pi_{u_{1}^{n}}$ for $u_{1}^{n} \in T_{\eta}^{n}(p_{U_{1}})$, (c) \eqref{Eqn:Thm1Rx1ErrorT13Analysis7} follows from the operator inequality $\pi^{Y_{1}}\pi_{u_{1}^{n}}  \pi^{Y_{1}} \leq I_{Y_{1}^{\otimes n}}$, (d) \eqref{Eqn:Thm1Rx1ErrorT13Analysis8} follows from cyclicity of the tracem $\pi_{w^{n}}\pi_{w^{n}}=\pi_{w^{n}} \leq I_{Y_{1}}^{\otimes n}$, and (e) the last inequality \eqref{Eqn:Thm1Rx1ErrorT13Analysis9} follows from substituting $\Phi_{13}$ from \eqref{Eqn:MeasureRelationshipsE13}.
\med\textit{Analysis of $T_{14}$} : From \eqref{Eqn:ProbabilityofEevents2} and \eqref{Eqn:Rx1ErrorEventAssociatedProbs} we have
\begin{eqnarray}
\label{Eqn:MeasureRelationships14}
 P(\CalE_{14}) \leq p_{XU_{1}\ulineV|W}^{n}(x^{n},u_{1}^{n},\ulinev^{n}|w^{n})p^{n}_{W}(w^{n})\Phi_{14}
 \mbox{ where }\Phi_{14} = \frac{2^{-n[(3+T_{2}+T_{3})\log q]}\mathds{1}_{\CalA \cap \CalB}}{2^{n(\tau-\eta -H(\ulineV,U_{1})+2H(U_{1}) )}}
 \end{eqnarray}
 and $\mathcal{A}$ and $\mathcal{B}$ are as defined in \eqref{Eqn:Rx1AnalysisG14} and three equations prior to it. The above can be derived using standard bounds on typical sequences. When we substitute \eqref{Eqn:MeasureRelationships14} in a generic term of \eqref{Eqn:Rx1ErrorEventChars-14}, we obtain
\begin{eqnarray}
 \label{Eqn:Thm1Rx1ErrorT14Analysis1}
 \Expectation\{T_{14}\mathds{1}_{\CalE_{\ulinem}}\} &\leq&\!\!\!\!\!\! \sum_{\substack{b_{1},a_{2},\hata_{\oplus},\hatm_{1},\hatb_{1}\\a_{3},\hatu_{1}^{n},\hatw^{n},w^{n}}}\!\!\!\!\!\!p_{W}^{n}(w^{n})\tr\!\left(\! \pi^{Y_{1}}\pi_{\hatu_{1}^{n}} \pi_{ \hatu_{1}^{n}\hatw^{n}}\pi_{\hatu_{1}^{n}} \pi^{Y_{1}}\pi_{w^{n}}
  \label{Eqn:Thm1Rx1ErrorT13Analysis2}
 \sum_{x^{n},v_{2}^{n},v_{3}^{n},u_{1}^{n}}\!\!\!\!\!\!\!p_{X\ulineV U_{1}|W}^{n}(x^{n},\ulinev^{n},u_{1}^{n}|w^{n})\rho_{x^{n}}\pi_{w^{n}} \!\!\right)\!\!\Phi_{14}\nonumber\\
\label{Eqn:Thm1Rx1ErrorT14Analysis3}
 &\leq&\!\!\!\!\!\!\!\!\!\!\!\! \sum_{\substack{b_{1},a_{2},\hata_{\oplus},\hatm_{1},\hatb_{1}\\a_{3},\hatu_{1}^{n},\hatw^{n},w^{n}}}\!\!\!\!\!\!\!\!\!\!\!p_{W}^{n}(w^{n})\tr\!\left(\! \pi^{Y_{1}}\pi_{\hatu_{1}^{n}} \pi_{ \hatu_{1}^{n}\hatw^{n}}\pi_{\hatu_{1}^{n}} \pi^{Y_{1}}\pi_{w^{n}}\rho_{w^{n}}\pi_{w^{n}}\right)\!\Phi_{14}\nonumber\\
 \label{Eqn:Thm1Rx1ErrorT14Analysis4}
 &\leq&\!\!\!\!\!\!\!\!\!\!\! \sum_{\substack{b_{1},a_{2},\hata_{\oplus},\hatm_{1},\hatb_{1}\\a_{3},\hatu_{1}^{n},\hatw^{n},w^{n}}}\!\!\!\!\!\!\!\!\!p_{W}^{n}(w^{n})\tr\!\left(\! \pi^{Y_{1}}\pi_{\hatu_{1}^{n}} \pi_{ \hatu_{1}^{n}\hatw^{n}}\pi_{\hatu_{1}^{n}} \pi^{Y_{1}}\rho_{w^{n}}\right)\!\Phi_{14}\\
 \label{Eqn:Thm1Rx1ErrorT14Analysis5}
 &\leq&\!\!\!\! \sum_{\substack{b_{1},a_{2},\hata_{\oplus},\hatm_{1}\\\hatb_{1},a_{3},\hatu_{1}^{n},\hatw^{n}}}\!\!\!\!\!\tr\!\left(\! \pi^{Y_{1}}\pi_{\hatu_{1}^{n}} \pi_{ \hatu_{1}^{n}\hatw^{n}}\pi_{\hatu_{1}^{n}} \pi^{Y_{1}}\rho^{\otimes n}\right)\!\Phi_{14}\\
 \label{Eqn:Thm1Rx1ErrorT14Analysis6}
 &=& \sum_{\substack{b_{1},a_{2},\hata_{\oplus},\hatm_{1}\\\hatb_{1},a_{3},\hatu_{1}^{n},\hatw^{n}}}\!\!\!\!\!\tr\!\left(\! \pi_{\hatu_{1}^{n}} \pi_{ \hatu_{1}^{n}\hatw^{n}}\pi_{\hatu_{1}^{n}} \pi^{Y_{1}}\rho^{\otimes n}\pi^{Y_{1}}\right)\!\Phi_{14}\nonumber\\
 \label{Eqn:Thm1Rx1ErrorT14Analysis7}
 &\leq& 2^{-nH(Y_{1})}\sum_{\substack{b_{1},a_{2},\hata_{\oplus},\hatm_{1}\\\hatb_{1},a_{3},\hatu_{1}^{n},\hatw^{n}}}\!\!\!\!\!\tr\!\left(\! \pi_{\hatu_{1}^{n}} \pi_{ \hatu_{1}^{n}\hatw^{n}}\pi_{\hatu_{1}^{n}} \pi^{Y_{1}}\right)\!\Phi_{14}\\
 \label{Eqn:Thm1Rx1ErrorT14Analysis8}
 &\leq& 2^{-n[H(Y_{1})-2\eta]}\sum_{\substack{b_{1},a_{2},\hata_{\oplus},\hatm_{1}\\\hatb_{1},a_{3},\hatu_{1}^{n},\hatw^{n}}}\!\!\!\!\!\tr\!\left(\! \pi_{\hatu_{1}^{n}} \pi_{ \hatu_{1}^{n}\hatw^{n}}\pi_{\hatu_{1}^{n}}\right)\!\Phi_{14}\nonumber\\
 \label{Eqn:Thm1Rx1ErrorT14Analysis9}
 &\leq& 2^{-n[H(Y_{1})-H(Y_{1}|U_{1},W)-4\eta]}\Phi_{14}\!\!\sum_{\substack{b_{1},a_{2},\hata_{\oplus}\\\hatm_{1}\hatb_{1},a_{3}}}\!1
  \nonumber\\
  \label{Eqn:Thm1Rx1ErrorT14Analysis10}
  &\leq& 2^{-n\left[{I(Y_{1};U_{1},W)+I(U_{1};W)+\log q  -H(W)-R_{1}-B_{1}-\max\{S_{2},S_{3}\}\log q }-6\eta\right]}
 \end{eqnarray}
 where (a) \eqref{Eqn:Thm1Rx1ErrorT14Analysis4} follows since $\pi_{w^{n}}\rho_{w^{n}}\pi_{w^{n}} =\sqrt{\rho_{w^{n}}}\pi_{w^{n}}\pi_{w^{n}}\sqrt{\rho_{w^{n}}} = \sqrt{\rho_{w^{n}}}\pi_{w^{n}}\sqrt{\rho_{w^{n}}} \leq \sqrt{\rho_{w^{n}}}I\sqrt{\rho_{w^{n}}} = \rho_{w^{n}}$, (b) \eqref{Eqn:Thm1Rx1ErrorT14Analysis5} follows from $\sum_{w^{n} \in T_{\eta}^{n}(p_{W})}p_{W}^{n}(w^{n})\rho_{w^{n}} \leq \rho^{\otimes n}$, (c) \eqref{Eqn:Thm1Rx1ErrorT14Analysis7} follows from  $\pi^{Y_{1}}\rho^{\otimes n}\pi^{Y_{1}} \leq 2^{-n[H(Y_{1})-2\eta] }\pi^{Y_{1}}$ and \eqref{Eqn:Thm1Rx1ErrorT14Analysis10} follows from substituting $\Phi_{14}$ from \eqref{Eqn:MeasureRelationships14}.
\section{Error Event at Receivers $2$ and $3$}
\label{AppSec:ErrorEventAtRx2And3}
\med\textbf{Proof of Prop.~\ref{Prop:Rx2And3ErrorEvent}} : Towards deriving an upper bound on $T_{j1} : j =2,3$, we define events analogous to those defined in \eqref{Eqn:Rx1AnalysisG14}. Let
\begin{eqnarray}
 \label{Eqn:Thm1Rx2ErrorEvents1}
 \tilde{\CalF}_{1}\! \define\! \left\{\!\!\! \!
 \begin{array}{c} 
 V_{j}^{n}(a_{j})=v_{j}^{n}, \iota_{j}(a_{j}) = m_{j}:j= 2,3,U_{1}(m_{1},b_{1}) = u_{1}^{n} \end{array} \!\!\!\!\right\}, \tilde{\CalB}\! \define\! {\left\{ \!{(u_{1}^{n},\ulinev^{n}) \in T_{\eta}(p_{U_{1}\ulineV }\!) }\! \right\}}
 \nonumber\\
  {\CalF}_{2} \define\left\{ \!\!\! \! \begin{array}{c}B_{1}(\ulinem)=b_{1},A_{j}(\ulinem)= a_{j}  \end{array}\!\!\! : \!\!\begin{array}{c}j=2,3 \end{array}\!\!\!\right\},~ \CalG_{j} = \left\{ \!\!\!\! \begin{array}{c}V_{j}^{n}(\hata_{j})=\hatv_{j}^{n}, \hata_{j} \neq a_{j}\end{array} \! \right\} ,~
 \CalF_{3}\!\define \!\{X^{n}(\ulinem) = x^{n} \!\},~~\tilde{\CalA}\define \{\hatv_{j}^{n} \in T_{\eta}(p_{V_{j}})\}
 \nonumber
\end{eqnarray}
where $\tilde{\CalF}_{1}\cap \CalF_{2}\cap\CalF_{3}$ specifies the realization for the chosen codewords and $\CalG_{j}$ specify the realization of an incorrect codeword in regards to $T_{j}$. Let $\CalE_{j} \define \tilde{\CalF}_{1}\cap\CalG_{j}\cap\CalE_{\ulinem} \cap \CalF_{2}\cap\CalF_{3}\cap\tilde{\CalA}\cap\tilde{\CalB}$. With this, we have
\begin{eqnarray}
 \label{Eqn:Thm1Rx2ErrorTerm}
&&\!\!\!\!\! \!\!\!\!\!\!\!\!\mathbb{E}\{T_{j1}\mathds{1}_{\CalE_{\ulinem}}\} = \!\!\!\!\!\! \sum_{\substack{b_{1},a_{2},a_{3}\\u_{1}^{n},v_{2}^{n},v_{3}^{n}}}\sum_{\substack{\hata_{j}^{n},x^{n}\\\hatv_{j}^{n}}}\!\!
 \tr(\pi^{Y_{j}}\pi_{\hatv_{j}^{n}}\pi^{Y_{j}}\rho^{Y_{j}}_{x^{n}} \!)P(\CalE_{j}),
 \\
 \label{Eqn:Thm1Rx2ErrorTermProb1}
  P(\CalE_{j}) &=&  P(\CalE_{\ulinem}\cap\tilde{\CalF}_{1}\cap\CalG_{j})P(\CalF_{2}|\CalE_{\ulinem}\cap\tilde{\CalF}_{1}\cap\CalG_{j})\nonumber
  \label{Eqn:Thm1Rx2ErrorTermProb2}
  \times P(\CalF_{3}|\CalF_{2}\cap\CalE_{\ulinem}\cap\tilde{\CalF}_{1}\cap\CalG_{j})
 \nonumber\\
 \label{Eqn:Thm1Rx2ErrorTermProb3}
 &\leq& p^{n}_{XU_{1}\ulineV}(x^{n},u_{1}^{n},\ulinev^{n})\Phi_{j}\mbox{ where }
 \Phi_{j}\define \frac{2^{-n[H(U_{1})-H(U_{1},\ulineV)]}\cdot \mathds{1}_{\tilde{\CalA}\cap\tilde{\CalB}}}{2^{n(\tau-\eta)}2^{n(3+T_{2}+T_{3})\log q} }.
   \end{eqnarray}
The above upper bound is obtained analogous to our analysis in \eqref{Eqn:ProbabilityofEevents2}, \eqref{Eqn:Rx1ErrorEventAssociatedProbs}, \eqref{Eqn:MeasureRelationshipsE13}, \eqref{Eqn:MeasureRelationships14}. As done in our analysis of $\mathbb{E}\{T_{1k}\mathds{1}_{\CalE_{\ulinem}}\}$ for $k=2,3$ earlier, we substitute the upper bound \eqref{Eqn:Thm1Rx2ErrorTermProb3} in \eqref{Eqn:Thm1Rx2ErrorTerm} to obtain
\begin{eqnarray}
 \label{Eqn:Thm1Rx2MainErrorEventAnalysis1}
 \mathbb{E}\{T_{j1}\mathds{1}_{\CalE_{\ulinem}}\} \leq \!\!\!\!\! \sum_{\substack{b_{1},a_{2},a_{3}\\\hata_{j}^{n}\hatv_{j}^{n}}}\!
 \!\!\!\!\tr(\!\!\pi^{Y_{j}}\pi_{\hatv_{j}^{n}}\pi^{Y_{j}}\!\!\!\!\!\!\!\sum_{\substack{u_{1}^{n},x^{n},\ulinev^{n}}}\!\!\!\!\!\!p^{n}_{XU_{1}\ulineV}(x^{n},u_{1}^{n},\ulinev^{n})\rho^{Y_{j}}_{x^{n}} \!\!)\Phi_{j}\!\!\!\!\!\!\!\!\!\!\!\!\!\!\!\!\!
 \nonumber\\
 \label{Eqn:Thm1Rx2MainErrorEventAnalysis2}
 \leq \!\!\!\!\! \sum_{\substack{b_{1},a_{2},a_{3}\\\hata_{j}^{n}\hatv_{j}^{n}}}\!
 \!\!\!\!\tr(\pi^{Y_{j}}\pi_{\hatv_{j}^{n}}\pi^{Y_{j}}\rho^{\otimes n} )\Phi_{j} =\!\!\!\!\! \sum_{\substack{b_{1},a_{2},a_{3}\\\hata_{j}^{n}\hatv_{j}^{n}}}\!\!\!\!
 \tr(\pi_{\hatv_{j}^{n}}\pi^{Y_{j}}\rho^{\otimes n}\pi^{Y_{j}} )\Phi_{j}
 \nonumber\\
 \label{Eqn:Thm1Rx2MainErrorEventAnalysis3}
 =2^{-n[H(Y_{j})-2\eta]}\!\!\!\!\! \sum_{\substack{b_{1},a_{2},a_{3}\\\hata_{j}^{n}\hatv_{j}^{n}}}\!\!\!\!
 \tr(\pi_{\hatv_{j}^{n}}\pi^{Y_{j}} )\Phi_{j} \leq 2^{-n[I(Y_{j};V_{j})-4\eta]}\!\!\!\!\! \sum_{\substack{b_{1},a_{2},a_{3}\\\hata_{j}^{n}\hatv_{j}^{n}}}\!\!\!\!
 \Phi_{j}\\
 \label{Eqn:Thm1Rx2MainErrorEventAnalysis4}
 \leq 2^{-n[I(Y_{j};V_{j})+\log q-H(V_{j})-S_{j}\log q -6\eta]}
 \end{eqnarray}
where (a) \eqref{Eqn:Thm1Rx2MainErrorEventAnalysis3} follows from the operator inequality $\pi^{Y_{j}}\rho^{\otimes n}\pi^{Y_{j}} \leq 2^{-n[H(Y_{j})-2\eta]}\pi^{Y_{j}}$ and $\tr(\pi_{\hatv_{j}^{n}}\pi^{Y_{j}} ) \leq \tr(\pi_{\hatv_{j}^{n}} ) \leq 2^{n[H(Y_{j}|V_{j})+2\eta]}$ and (b) the last inequality \eqref{Eqn:Thm1Rx2MainErrorEventAnalysis4} follows from substituting $\Phi_{j}$ in \eqref{Eqn:Thm1Rx2ErrorTermProb3}.
\section{Verification of \eqref{Eqn:ClosenessOfTiltedState}}
\label{AppSec:ClosenessOfStates}
To verify
\begin{eqnarray}
     \norm{\theta_{\ulinez^{n}, \ulinea^{n}}^{\otimes n} - \xi_{\ulinez^{n}}^{\otimes n} \otimes \ket{0}\bra{0} }_{1} \leq 90 \eta^{n}, \nonumber 
\end{eqnarray}
it is sufficient to show that
\begin{eqnarray}
     \norm{ \CalT_{ \ulinea^{n}, \eta^{n}} ( \ket{h}\bra{h}) - \ket{h}\bra{h} }_{1} \leq 90 \eta^{n}.
     \nonumber 
\end{eqnarray}
Towards the same, note that
\begin{eqnarray}
    &&\CalT_{ \ulinea^{n}, \eta^{n}} ( \ket{h}\bra{h}) = \frac{1}{\Omega(\eta)}  
    \left(\ket{h} + \sum_{S \subseteq [4]\setminus [4]} \eta^{n |S| } \ket{h} \otimes \ket{a_{s}^{n}} \right)  \times \left(\bra{h} + \sum_{\Tilde{S} \subseteq [4]\setminus [4]} \eta^{n |\Tilde{S}| } \bra{h} \otimes \bra{a_{\Tilde{S}}^{n}} \right) \nonumber 
    \\&=&  \! \! \frac{1}{\Omega(\eta)} \ket{h} \bra{h} \! \!+\! \! \frac{1}{\Omega(\eta)} \ket{h}  \left( \sum_{\Tilde{S} \subseteq [4] \setminus [4]} \! \! \!  \eta^{n |\Tilde{S}| } \bra{h} \otimes \bra{a_{\Tilde{S}}^{n}} \right) \! \!  + \! \! \frac{1}{\Omega(\eta)} \left( \sum_{S \subseteq [4]\setminus [4]} \! \! \!  \eta^{n |S| } \ket{h} \otimes \ket{a_{s}^{n}} \right)  \bra{h} \nonumber \\
    &&+ \frac{1}{\Omega(\eta)} \! \! \left( \sum_{S \subseteq [4] \setminus [4]} \! \! \! \eta^{n |S| } \ket{h} \otimes \ket{a_{s}^{n}} \right)\! \! \!  \left( \sum_{\Tilde{S} \subseteq [4] \setminus [4]} \! \! \! \eta^{n |\Tilde{S}| } \bra{h} \otimes \bra{a_{\Tilde{S}}^{n}} \right).\nonumber
\end{eqnarray}
Next, note that
\begin{eqnarray}
\left\| \CalT_{ \ulinea^{n}, \eta^{n}} ( \ket{h}\bra{h}) - \ket{h}\bra{h} \right\|_{1}  \leq \frac{1-\Omega(\eta)}{\Omega(\eta)} \left\|\ket{h} \bra{h} \right\|_{1}   + \frac{1}{\Omega(\eta)} \left\| \ket{h} \left( \sum_{\Tilde{S} \subseteq [4]\setminus [4]} \eta^{n |\Tilde{S}| } \bra{h} \otimes \bra{a_{\Tilde{S}}^{n}} \right) \right\|_{1} +~~~~~~~~~~~~~~~\nonumber\\
\!\!\!\!\!\!\!\!\!\!\!\!\frac{1}{\Omega(\eta)} \left\| \left( \sum_{S \subseteq [4]\setminus [4]} \eta^{n |S| } \ket{h} \otimes \ket{a_{s}^{n}} \right)  \!\!\bra{h} \right\|_{1} \! \! \!+\! \frac{1}{\Omega(\eta)} \bigg\| \left( \sum_{S \subseteq [4]\setminus [4]} \!\!\!\! \eta^{n |S| } \ket{h} \otimes \ket{a_{s}^{n}} \right)  \!\!\!\left( \sum_{\Tilde{S} \subseteq [4]\setminus [4]} \!\!\!\! \eta^{n |\Tilde{S}| } \bra{h} \otimes \bra{a_{\Tilde{S}}^{n}} \right) \bigg\|_{1} \label{eq:line4}.
\end{eqnarray}
In the following, we analyze each term on the RHS of \eqref{eq:line4} separately. Towards the same, note that
\begin{eqnarray}
    \frac{1-\Omega(\eta)}{\Omega(\eta)} \left\|\ket{h} \bra{h} \right\|_{1} = \frac{16 \eta^{2n} + 36 \eta^{4n} + 16 \eta^{6n}}{\Omega(\eta)}, \nonumber
   \end{eqnarray}
\begin{eqnarray}
    \left\| \ket{h} \left( \sum_{\Tilde{S} \subseteq [4] \setminus [4]} \eta^{n |\Tilde{S}|} \bra{h} \otimes \bra{a_{j\Tilde{S}}^{n}} \right) \right\|_{1} \! \! \!\!\!\! \!&=& \! \!\! \! \! \tr \left( \sqrt{
    \left( \sum_{S \subseteq [4] \setminus [4]} \eta^{n |S|} \ket{h} \otimes \ket{a_{jS}^{n}} \right) \bra{h} \ket{h} \left( \sum_{\Tilde{S} \subseteq [4] \setminus [4]} \eta^{n |\Tilde{S}|} \bra{h} \otimes \bra{a_{j\Tilde{S}}^{n}} \right)
} \right) \nonumber \\
\! \! \!\!\!\! \!&=& \! \!\! \! \! \tr \left( \sqrt{
    \left( \sum_{S \subseteq [4] \setminus [4]} \eta^{n |S|} \ket{h} \otimes \ket{a_{jS}^{n}} \right)  
    \left( \sum_{\Tilde{S} \subseteq [4] \setminus [4]} \eta^{n |\Tilde{S}|} \bra{h} \otimes \bra{a_{j\Tilde{S}}^{n}} \right)
} \right) \nonumber \\
\! \! \!\!\!\! \!&=& \! \!\! \! \!\sqrt{ \left( \sum_{S \subseteq [4] \setminus [4]} \eta^{n |S|} \bra{h} \otimes \bra{a_{jS}^{n}} \right)  
    \left( \sum_{\Tilde{S} \subseteq [4] \setminus [4]} \eta^{n |\Tilde{S}|} \ket{h} \otimes \ket{a_{j\Tilde{S}}^{n}} \right) } \nonumber  \\
\! \! \!\!\!\! \!&=& \! \!\! \! \! \sqrt{ \sum_{S \subseteq [4] \setminus [4]} \sum_{\Tilde{S} \subseteq [4] \setminus [4]}
    \eta^{n |S|} \eta^{n |\Tilde{S}|} \bra{h} \ket{h} \otimes \bra{a_{jS}^{n}} \ket{a_{j\Tilde{S}}^{n}} } \nonumber \\
\! \! \!\!\!\! \!&=& \! \!\! \! \!\sqrt{ \sum_{S \subseteq [4] \setminus [4]} \eta^{2n |S|}} \nonumber = \sqrt{ 4 \eta^{2n} + 6 \eta^{4n} + 4 \eta^{6n}}, \nonumber 
\end{eqnarray}
\begin{eqnarray}
     \left\| \left( \sum_{S \subseteq [4]\setminus [4]} \eta^{n |S| } \ket{h} \otimes \ket{a_{S}^{n}} \right)  \bra{h} \right\|_{1} \! \! \!\!\!\! \!&=& \! \!\! \! \! \tr \left( \sqrt{ \ket{h}   \left( \sum_{\Tilde{S} \subseteq [4]\setminus [4]} \eta^{n |\Tilde{S}| } \bra{h} \otimes \bra{a_{S}^{n}} \right) \left( \sum_{S \subseteq [4]\setminus [4]} \eta^{n |S| } \ket{h} \otimes \ket{a_{S}^{n}} \right) \bra{h}} \right) \nonumber  \\
\! \! \!\!\!\! \!&=& \! \!\! \! \!\sqrt{4 \eta^{2n} + 6 \eta^{4n} + 4 \eta^{6n}},\mbox{ and} 
\nonumber   
\end{eqnarray}
\begin{eqnarray}
            \bigg\| \left( \sum_{S \subseteq [4] \setminus [4]} \! \! \eta^{n |S|} \ket{h} \otimes \ket{a_{S}^{n}} \right) \! \!
         \left( \sum_{\Tilde{S} \subseteq [4] \setminus [4]} \! \! \eta^{n |\Tilde{S}|} \bra{h} \otimes \bra{a_{\Tilde{S}}^{n}} \right) \bigg\|_{1} \!\!
      \! \! \!\! \! \!&=& \! \!\! \! \! \left( \sum_{S \subseteq [4] \setminus [4]} \! \! \!\eta^{n |S|} \bra{h} \otimes \bra{a_{S}^{n}} \! \! \right) \!\! \left( \sum_{\Tilde{S} \subseteq [4] \setminus [4]} \! \! \! \eta^{n |\Tilde{S}|} \ket{h} \otimes \ket{a_{\Tilde{S}}^{n}} \! \! \right) \nonumber  \\
\! \! \!\!\!\! \!&=& \! \!\! \! \!4 \eta^{2n} + 6 \eta^{4n} + 4 \eta^{6n}. \nonumber
\end{eqnarray}
Hence,
\begin{equation}
    \begin{aligned}
        \left\| \CalT_{ \ulinea^{n}, \eta^{n}} ( \ket{h}\bra{h}) - \ket{h}\bra{h} \right\|_{1} 
        &\leq \frac{16 \eta^{2n} + 36 \eta^{4n} + 16 \eta^{6n}}{\Omega(\eta)}  + 2 \left(\frac{\sqrt{4 \eta^{2n} + 6 \eta^{4n} + 4 \eta^{6n}}}{\Omega(\eta)}\right)  + \frac{4 \eta^{2n} + 6 \eta^{4n} + 4 \eta^{6n}}{\Omega(\eta)} \\
        &= \frac{20 \eta^{2n} + 42 \eta^{4n} + 20 \eta^{6n}}{\Omega(\eta)} + \frac{2\sqrt{4 \eta^{2n} + 6 \eta^{4n} + 4 \eta^{6n}}}{\Omega(\eta)} \\
        & \leq \frac{82 \eta^{2n}+ 2 \sqrt{14} \eta^{n}}{\Omega(\eta)} \\
        &\leq 90 \eta^{n} .\nonumber 
    \end{aligned}
\end{equation}
\section{Upper bound on first two terms of \eqref{Eqn:TheFinalThreeTermsForEff5TxMAC}}
\label{AppSec:bounds}
We begin by proving 
\begin{eqnarray}
    \frac{1}{\mathcal{M}_{1}} \sum_{\ulinem_{1} \in \mathcal{M}_{1}} 
\norm{\theta_{(\ulinez_{1}^{n}, \ulinea_{1}^{n})(\ulinem_1)}^{\otimes n} - 
\xi_{\ulinez_{1}^{n}(\ulinem_1)}^{\otimes n} \otimes \ket{0}\bra{0} }_{1} 
\leq 90 \sqrt{\epsilon} + \epsilon. \nonumber 
\end{eqnarray}
We have, 
\begin{eqnarray}
\frac{1}{\mathcal{M}_{1}} \! \! \sum_{\ulinem_{1} \in \mathcal{M}_{1}} \! \! \!
\norm{\theta_{(\ulinez_{1}^{n}, \ulinea_{1}^{n})(\ulinem_1)}^{\otimes n} \! \! - 
\xi_{\ulinez_{1}^{n}(\ulinem_1)}^{\otimes n} \otimes \ket{0}\bra{0} }_{1} 
\! \! = \! \! \frac{1}{\mathcal{M}_{1}} \! \! \! \sum_{\ulinem_{1} \in \mathcal{M}_{1}} \sum_{\ulinez_{1}^n} \sum_{\ulinea_{1}^n}
\norm{\theta_{(\ulinez_{1}^{n}, \ulinea_{1}^{n})(\ulinem_1)}^{\otimes n} \! \! - 
\xi_{\ulinez_{1}^{n}(\ulinem_1)}^{\otimes n} \otimes \ket{0}\bra{0} }_{1} \nonumber \\ \mathds{1}
 \{\ulinez_{1}^{n}(\ulinem_1)=\ulinez_{1}^{n}, \ulinea_{1}^{n}(\ulinem_{1}) = \ulinea_{1}^{n}\},  \nonumber 
\end{eqnarray}
by evaluating the expectation

\begin{eqnarray}
   && \frac{1}{\mathcal{M}_{1}} \frac{1}{|\uline{\CalA_{1}|}^n} \sum_{\ulinem_{1} \in \mathcal{M}_{1}} \sum_{\ulinez_{1}^n} \sum_{\ulinea_{1}^n} p_{\ulineZ_{1}}^n (\ulinez_{1}^{n})
\norm{\theta_{(\ulinez_{1}^{n}, \ulinea_{1}^{n})}^{\otimes n} - 
\xi_{\ulinez_{1}^{n}}^{\otimes n} \otimes \ket{0}\bra{0} }_{1}  \nonumber \\
&=& \frac{1}{\mathcal{M}_{1}} \frac{1}{|\uline{\CalA_{1}|}^n} \sum_{\ulinem_{1} \in \mathcal{M}_{1}} \sum_{\ulinez_{1}^n  \in T^{n}_{\delta}(p_{\ulineZ_{1}} )} \sum_{\ulinea_{1}^n} p_{\ulineZ_{1}}^n (\ulinez_{1}^{n})
\norm{\theta_{(\ulinez_{1}^{n}, \ulinea_{1}^{n})}^{\otimes n} - 
\xi_{\ulinez_{1}^{n}}^{\otimes n} \otimes \ket{0}\bra{0} }_{1}
\nonumber \\
&+& \frac{1}{\mathcal{M}_{1}} \frac{1}{|\uline{\CalA_{1}|}^n} \sum_{\ulinem_{1} \in \mathcal{M}_{1}} \sum_{\ulinez_{1}^n  \notin T^{n}_{\delta}(p_{\ulineZ_{1}} )} \sum_{\ulinea_{1}^n} p_{\ulineZ_{1}}^n (\ulinez_{1}^{n})
\norm{\theta_{(\ulinez_{1}^{n}, \ulinea_{1}^{n})}^{\otimes n} - 
\xi_{\ulinez_{1}^{n}}^{\otimes n} \otimes \ket{0}\bra{0} }_{1}
\nonumber \\
& \leq & 90 \eta^n + \epsilon. \nonumber 
\end{eqnarray}
For $\eta ^n = \sqrt{\epsilon}$,
$\frac{1}{\mathcal{M}_{1}} \sum_{\ulinem_{1} \in \mathcal{M}_{1}} \norm{\theta_{(\ulinez_{1}^{n}, \ulinea_{1}^{n})(\ulinem_1)}^{\otimes n} - \xi_{\ulinez_{1}^{n}(\ulinem_1)}^{\otimes n} \otimes \ket{0}\bra{0} }_{1}  \leq 90 \sqrt{\epsilon} + \epsilon$.
Now, we proceed to prove 
\begin{eqnarray}
      \frac{2}{\mathcal{M}_{1}} \sum_{\ulinem_{1} \in \mathcal{M}_{1}} 
\tr\left( (I -\gamma^{*}_{(\ulinez_{1}^{n},\ulinea_{1}^{n})(\ulinem_1)} ) 
\theta_{(\ulinez_{1}^{n}, \ulinea_{1}^{n})(\ulinem_1)}^{\otimes n} \right) 
\leq 10808 \sqrt{\epsilon}.
\nonumber
\end{eqnarray}
We have, 
\begin{eqnarray}
&&E \left\{ \frac{2}{\mathcal{M}_{1}} \!\! \sum_{\ulinem_{1} \in \mathcal{M}_{1}} \!\!\!
\tr\left( (I -\gamma^{*}_{(\ulinez_{1}^{n},\ulinea_{1}^{n})(\ulinem_1)} ) 
\theta_{(\ulinez_{1}^{n}, \ulinea_{1}^{n})(\ulinem_1)}^{\otimes n} \right)\ \right\} \nonumber \\
&=& E \left\{\frac{2}{\mathcal{M}_{1}} \sum_{\ulinem_{1} \in \mathcal{M}_{1}}  \!\sum_{\ulinez_{1}^n} \sum_{\ulinea_{1}^n} \tr\left( (I  \gamma^{*}_{(\ulinez_{1}^{n},\ulinea_{1}^{n})} ) 
\theta_{(\ulinez_{1}^{n}, \ulinea_{1}^{n})}^{\otimes n} \right) \mathds{1}\{\ulinez_{1}^{n}(\ulinem_1)=\ulinez_{1}^{n}, \ulinea_{1}^{n}(\ulinem_{1}) = \ulinea_{1}^{n}\} \right\} \nonumber \\
 &=& \frac{2}{\mathcal{M}_{1}} \frac{1}{|\uline{\CalA_{1}|}^n} \sum_{\ulinem_{1} \in \mathcal{M}_{1}}  \sum_{\ulinez_{1}^n} \sum_{\ulinea_{1}^n} p_{\ulineZ_{1}}^n (\ulinez_{1}^{n})
    \tr\left( (I -\gamma^{*}_{(\ulinez_{1}^{n},\ulinea_{1}^{n})} ) 
\theta_{(\ulinez_{1}^{n}, \ulinea_{1}^{n})}^{\otimes n} \right)  \nonumber 
\\
&=& \frac{2}{\mathcal{M}_{1}} \frac{1}{|\uline{\CalA_{1}|}^n} \sum_{\ulinem_{1} \in \mathcal{M}_{1}}  \sum_{\ulinez_{1}^n} \sum_{\ulinea_{1}^n} p_{\ulineZ_{1}}^n (\ulinez_{1}^{n})
\left[\tr\left(\theta_{(\ulinez_{1}^{n}, \ulinea_{1}^{n})}^{\otimes n} \right) -
    \tr\left( \gamma^{*}_{(\ulinez_{1}^{n},\ulinea_{1}^{n})}  
\theta_{(\ulinez_{1}^{n}, \ulinea_{1}^{n})}^{\otimes n} \right)\right]  \nonumber \\
&=& 
\frac{2}{\mathcal{M}_{1}} \frac{1}{|\uline{\CalA_{1}|}^n} \sum_{\ulinem_{1} \in \mathcal{M}_{1}}  \sum_{\ulinez_{1}^n} \sum_{\ulinea_{1}^n} p_{\ulineZ_{1}}^n (\ulinez_{1}^{n})\left[\tr\left(\theta_{(\ulinez_{1}^{n}, \ulinea_{1}^{n})}^{\otimes n} \right) -\tr\left( (I_{\boldsymbol{\CalH_{Y_{1}}^{e}}}-  \beta^{*}_{\ulinez_1^{n},\ulinea_1^{n}}) \Pi_{\boldsymbol{\CalH_{Y_{1G}}} }(I_{\boldsymbol{\CalH_{Y_1}^{e}}}-  \beta^{*}_{\ulinez_1^{n},\ulinea_1^{n}})
\theta_{(\ulinez_{1}^{n}, \ulinea_{1}^{n})}^{\otimes n} \right)\right]  \nonumber \\
\!\!\!\!\!&=&\!\!\!\!\! \frac{2}{\mathcal{M}_{1}} \frac{1}{|\uline{\CalA_{1}|}^n} \!\!\sum_{\ulinem_{1} \in \mathcal{M}_{1}}  \sum_{\ulinez_{1}^n} \sum_{\ulinea_{1}^n} p_{\ulineZ_{1}}^n (\ulinez_{1}^{n}) \left[\tr\left(\theta_{(\ulinez_{1}^{n}, \ulinea_{1}^{n})}^{\otimes n} \right) \!\! -\!\!\tr\left(\Pi_{\boldsymbol{\CalH_{Y_{1G}}} }(I_{\boldsymbol{\CalH_{Y_1}^{e}}} \! \! \! - \beta^{*}_{\ulinez_1^{n},\ulinea_1^{n}})  \theta_{(\ulinez_{1}^{n}, \ulinea_{1}^{n})}^{\otimes n}(I_{\boldsymbol{\CalH_{Y_1}^{e}}} \! \! \!- \beta^{*}_{\ulinez_1^{n},\ulinea_1^{n}})  \Pi_{\boldsymbol{\CalH_{Y_{1G}}} }\right)\right] \nonumber\\
&\leq& \frac{8}{\mathcal{M}_{1}} \frac{1}{|\uline{\CalA_{1}|}^n}\sum_{\ulinem_{1} \in \mathcal{M}_{1}}  \sum_{\ulinez_{1}^n} \sum_{\ulinea_{1}^n} p_{\ulineZ_{1}}^n (\ulinez_{1}^{n}) \bigg[\tr\left(  (I -\Pi_{\boldsymbol{\CalH_{Y_{1G}}} }) \theta_{(\ulinez_{1}^{n}, \ulinea_{1}^{n})}  \right) + \tr\left(  \beta^{*}_{\ulinez_{1}^{n},\ulinea_{1}^{n}}   \theta_{(\ulinez_{1}^{n}, \ulinea_{1}^{n})}  \right) \bigg] \label{eq:noncommutativebound} \\
&=& \frac{8}{\mathcal{M}_{1}} \frac{1}{|\uline{\CalA_{1}|}^n} \sum_{\ulinem_{1} \in \mathcal{M}_{1}}  \sum_{\ulinez_{1}^n} \sum_{\ulinea_{1}^n} p_{\ulineZ_{1}}^n (\ulinez_{1}^{n}) \bigg[\tr\left(  (I - \Pi_{\boldsymbol{\CalH_{Y_{1G}}} }) \theta_{(\ulinez_{1}^{n}, \ulinea_{1}^{n})}  \right) + \tr\left( \beta^{*}_{\ulinez_{1}^{n},\ulinea_{1}^{n}} \theta_{(\ulinez_{1}^{n}, \ulinea_{1}^{n})}  \right)\bigg] \nonumber  \end{eqnarray}
\begin{eqnarray}
&=& \frac{8}{\mathcal{M}_{1}} \frac{1}{|\uline{\CalA_{1}|}^n} \sum_{\ulinem_{1} \in \mathcal{M}_{1}} \sum_{\ulinez_{1}^n} \sum_{\ulinea_{1}^n} p_{\ulineZ_{1}}^n (\ulinez_{1}^{n}) \tr\left(  (I - \Pi_{\boldsymbol{\CalH_{Y_{1G}}} } + \beta^{*}_{\ulinez_{1}^{n},\ulinea_{1}^{n}}) \theta_{(\ulinez_{1}^{n}, \ulinea_{1}^{n})}  \right) \nonumber 
\\ & \leq & \frac{8}{\mathcal{M}_{1}} \frac{1}{|\uline{\CalA_{1}|}^n} \sum_{\ulinem_{1} \in \mathcal{M}_{1}} \sum_{\ulinez_{1}^n} \sum_{\ulinea_{1}^n} p_{\ulineZ_{1}}^n (\ulinez_{1}^{n}) \tr\left(  (I -\Pi_{\boldsymbol{\CalH_{Y_{1G}}} } + \beta^{*}_{\ulinez_{1}^{n},\ulinea_{1}^{n}}) \xi_{\ulinez_{j}^{n}}^{\otimes n} \otimes \ket{0}\bra{0} \right) \nonumber \\
&+& \frac{8}{\mathcal{M}_{1}} \frac{1}{|\uline{\CalA_{1}|}^n} \sum_{\ulinem_{1} \in \mathcal{M}_{1}}  \sum_{\ulinez_{1}^n} \sum_{\ulinea_{1}^n} p_{\ulineZ_{1}}^n (\ulinez_{1}^{n}) \norm{\theta_{\ulinez_{j}^{n}, \ulinea_{j}^{n}}^{\otimes n} - \xi_{\ulinez_{j}^{n}}^{\otimes n} \otimes \ket{0}\bra{0} }_{1}. \label{eq:line5}
\end{eqnarray}
Where inequality \eqref{eq:noncommutativebound} follows from the non-commutative bound \cite[Fact 3]{202103SAD_Sen}. In the following, we analyze each term on the RHS of \eqref{eq:line5} separately. The following steps essentially mimic those employed by Sen in establishing \cite[Bound in (7)]{202103SAD_Sen} - an upper bound on the first term therein. Specifically, by [Corollary 1]\cite{202103SAD_Sen} and Gelfand-Naimark thm.~\cite[Thm.~3.7]{BkHolevo_2019}, we have
\begin{eqnarray}
&&\frac{8}{\mathcal{M}_{1}} \frac{1}{|\uline{\CalA_{1}|}^n} \sum_{\ulinem_{1} \in \mathcal{M}_{1}} \sum_{\ulinez_{1}^n} \sum_{\ulinea_{1}^n} p_{\ulineZ_{1}}^n (\ulinez_{1}^{n}) \tr\left(  (I -\Pi_{\boldsymbol{\CalH_{Y_{1G}}} } + \beta^{*}_{\ulinez_{1}^{n},\ulinea_{1}^{n}}) \xi_{\ulinez_{j}^{n}}^{\otimes n} \otimes \ket{0}\bra{0} \right) \nonumber \\ 
&=& \frac{8}{\mathcal{M}_{1}} \frac{1}{|\uline{\CalA_{1}|}^n} 
\sum_{\ulinem_{1} \in \mathcal{M}_{1}} \sum_{\ulinez_{1}^n} \sum_{\ulinea_{1}^n} 
p_{\ulineZ_{1}}^n (\ulinez_{1}^{n}) \left[ \tr\left(  (I - \Pi_{\boldsymbol{\CalH_{Y_{1G}}} }) \xi_{\ulinez_{1}^{n}}^{\otimes n} \otimes \ket{0}\bra{0}   \right) + \tr\left(  \beta^{*}_{\ulinez_{1}^{n},\ulinea_{1}^{n}} \xi_{\ulinez_{1}^{n}}^{\otimes n} \otimes \ket{0}\bra{0}   \right) \right]\nonumber \\
&=&\frac{8}{\mathcal{M}_{1}} \frac{1}{|\uline{\CalA_{1}|}^n} \sum_{\ulinem_{1} \in \mathcal{M}_{1}} \sum_{\ulinez_{1}^n} \sum_{\ulinea_{1}^n}  p_{\ulineZ_{1}}^n (\ulinez_{1}^{n}) \tr\left(  \beta^{*}_{\ulinez_{1}^{n},\ulinea_{1}^{n}} \xi_{\ulinez_{1}^{n}}^{\otimes n} \otimes \ket{0}\bra{0}   \right) \nonumber \\&\leq& \frac{336}{\mathcal{M}_{1}} \frac{1}{\eta^n} \frac{1}{|\uline{\CalA_{1}|}^n} \sum_{\ulinem_{1} \in \mathcal{M}_{1}}  \sum_{\ulinez_{1}^n} \sum_{\ulinea_{1}^n} 
p_{\ulineZ_{1}}^n (\ulinez_{1}^{n}) \sum_{S \subseteq [4]}  
\tr\left( \olineB_{\ulinez_{1}^{n}}^{S} \xi_{\ulinez_{1}^{n}}^{\otimes n} \otimes \ket{0}\bra{0}   \right) \nonumber \\
&=& \frac{336}{\mathcal{M}_{1}} \frac{1}{\eta^n} \frac{1}{|\uline{\CalA_{1}|}^n} 
\sum_{\ulinem_{1} \in \mathcal{M}_{1}} \sum_{\ulinez_{1}^n} \sum_{\ulinea_{1}^n} \sum_{S \subseteq [4]} p_{\ulineZ_{1}}^n (\ulinez_{1}^{n})
\left[1-\tr\left( \olineG_{\ulinez_{1}^{n}}^{S} \xi_{\ulinez_{1}^{n}}^{\otimes n} \otimes \ket{0}\bra{0}   \right)\right]\nonumber \\
&=&\frac{336}{\mathcal{M}_{1}} \frac{1}{\eta^n} \frac{1}{|\uline{\CalA_{1}|}^n} 
\sum_{\ulinem_{1} \in \mathcal{M}_{1}} \sum_{\ulinez_{1}^{n} } \sum_{\ulinea_{1}^n} \sum_{S \subseteq [4]} p_{\ulineZ_{1}}^n (\ulinez_{1}^{n}) \left[1-\tr\left( G_{\ulinez_{1}^{n}}^{S} \xi_{\ulinez_{1}^{n}}^{\otimes n}   \right)\right]\nonumber \\
&=&\frac{336}{\mathcal{M}_{1}} \frac{1}{\eta^n} \frac{1}{|\uline{\CalA_{1}|}^n} 
\sum_{\ulinem_{1} \in \mathcal{M}_{1}} \sum_{\ulinez_{1}^{n} \in T^{n}_{\delta}(p_{\ulineZ_{1}} )} \sum_{\ulinea_{1}^n} \sum_{S \subseteq [4]} p_{\ulineZ_{1}}^n (\ulinez_{1}^{n}) \left[1-\tr\left( G_{\ulinez_{1}^{n}}^{S} \xi_{\ulinez_{1}^{n}}^{\otimes n}   \right)\right]\nonumber \\
&+& \frac{336}{\mathcal{M}_{1}} \frac{1}{\eta^n} \frac{1}{|\uline{\CalA_{1}|}^n} 
\sum_{\ulinem_{1} \in \mathcal{M}_{1}} \sum_{\ulinez_{1}^{n} \notin T^{n}_{\delta}(p_{\ulineZ_{1}} )} \sum_{\ulinea_{1}^n} \sum_{S \subseteq [4]} p_{\ulineZ_{1}}^n (\ulinez_{1}^{n}) \left[1-\tr\left( G_{\ulinez_{1}^{n}}^{S} \xi_{\ulinez_{1}^{n}}^{\otimes n}   \right)\right]\nonumber \\
&=&  \frac{336}{\mathcal{M}_{1}} \frac{1}{\eta^n} \frac{1}{|\uline{\CalA_{1}|}^n} 
\sum_{\ulinem_{1} \in \mathcal{M}_{1}}  
\sum_{\ulinez_{1}^{n} \in T^{n}_{\delta}(p_{\ulineZ_{1}} )} \sum_{\ulinea_{1}^n} \sum_{S \subseteq [4]} p_{\ulineZ_{1}}^n (\ulinez_{1}^{n})
\left[1-\tr\left( G_{\ulinez_{1}^{n}}^{S} \xi_{\ulinez_{1}^{n}}^{\otimes n}   \right)\right]\nonumber \\
&+& \frac{336}{\mathcal{M}_{1}} \frac{1}{\eta^n} \frac{1}{|\uline{\CalA_{1}|}^n} 
\sum_{\ulinem_{1} \in \mathcal{M}_{1}} \sum_{\ulinez_{1}^{n} \notin T^{n}_{\delta}(p_{\ulineZ_{1}} )} \sum_{\ulinea_{1}^n} \sum_{S \subseteq [4]} p_{\ulineZ_{1}}^n (\ulinez_{1}^{n}) \nonumber \\
& \leq & \frac{5040}{\eta^n}(1- (1-\epsilon)) + \frac{5040}{\eta^n}\epsilon \nonumber \\
&=& \frac{10080}{\eta^n} \epsilon, \nonumber  \end{eqnarray}
\begin{eqnarray}
&&\frac{8}{\mathcal{M}_{1}} \frac{1}{|\uline{\CalA_{1}|}^n} \sum_{\ulinem_{1} \in \mathcal{M}_{1}} \sum_{\ulinez_{1}^n} \sum_{\ulinea_{1}^n}  p_{\ulineZ_{1}}^n (\ulinez_{1}^{n}) \norm{\theta_{\ulinez_{1}^{n}, \ulinea_{1}^{n}}^{\otimes n} - \xi_{\ulinez_{1}^{n}}^{\otimes n} \otimes \ket{0}\bra{0} }_{1} \nonumber \\
&=& \frac{8}{\mathcal{M}_{1}} \frac{1}{|\uline{\CalA_{1}|}^n} \sum_{\ulinem_{1} \in \mathcal{M}_{1}}  \sum_{\ulinez_{1}^n \in T^{n}_{\delta}(p_{\ulineZ_{1}} )} \sum_{\ulinea_{1}^n} p_{\ulineZ_{1}}^n (\ulinez_{1}^{n})
\norm{\theta_{\ulinez_{1}^{n}, \ulinea_{1}^{n}}^{\otimes n} - \xi_{\ulinez_{1}^{n}}^{\otimes n} \otimes \ket{0}\bra{0} }_{1} \nonumber \\ &+& \frac{8}{\mathcal{M}_{1}} \frac{1}{|\uline{\CalA_{1}|}^n} \sum_{\ulinem_{1} \in \mathcal{M}_{1}}  \sum_{\ulinez_{1}^n \notin T^{n}_{\delta}(p_{\ulineZ_{1}} )} \sum_{\ulinea_{1}^n} p_{\ulineZ_{1}}^n (\ulinez_{1}^{n}) \norm{\theta_{\ulinez_{1}^{n}, \ulinea_{1}^{n}}^{\otimes n} - \xi_{\ulinez_{1}^{n}}^{\otimes n} \otimes \ket{0}\bra{0} }_{1} \nonumber \\ &\leq&  720 \eta^n + 8 \epsilon. \nonumber 
\end{eqnarray}
Hence, 
\begin{eqnarray}
      \frac{2}{\mathcal{M}_{1}} \sum_{\ulinem_{1} \in \mathcal{M}_{1}} 
\tr\left( (I -\gamma^{*}_{(\ulinez_{1}^{n},\ulinea_{1}^{n})(\ulinem_1)} ) 
\theta_{(\ulinez_{1}^{n}, \ulinea_{1}^{n})(\ulinem_1)}^{\otimes n} \right) 
\leq \frac{10080 \epsilon}{ \eta^n} + 720 \eta^n + 8 \epsilon. 
\nonumber
\end{eqnarray}
For $ \eta^n = \sqrt{\epsilon}$,$\frac{2}{\mathcal{M}_{1}} \sum_{\ulinem_{1} \in \mathcal{M}_{1}} 
\tr\left( (I -\gamma^{*}_{(\ulinez_{1}^{n},\ulinea_{1}^{n})(\ulinem_1)} ) 
\theta_{(\ulinez_{1}^{n}, \ulinea_{1}^{n})(\ulinem_1)}^{\otimes n} \right) 
\leq 10808 \sqrt{\epsilon}.$

\section{Verification of Standard Inequalities}
\label{AppSec:SimpleVerifications}

\subsection{Verification of \eqref{seperation of the error analysis of the three users}}
\begin{eqnarray}
    && I \otimes I \otimes I - \mu_{\ulinem_1} \otimes \mu_{\ulinem_2} \otimes \mu_{\ulinem_3}  =  I \otimes I \otimes I- \mu_{\ulinem_1} \otimes \mu_{\ulinem_2} \otimes \mu_{\ulinem_3} - \mu_{\ulinem_1} \otimes I \otimes I + \mu_{\ulinem_1} \otimes I \otimes I \nonumber 
    \\
    &=& (I-\mu_{\ulinem_1}) \otimes I \otimes I- \mu_{\ulinem_1} \otimes \mu_{\ulinem_2} \otimes \mu_{\ulinem_3} + \mu_{\ulinem_1} \otimes I \otimes I \nonumber \\
    &=& (I-\mu_{\ulinem_1}) \otimes I \otimes I +  \mu_{\ulinem_1} \otimes (I- \mu_{\ulinem_2})  \otimes (I- \mu_{\ulinem_3}) \nonumber \\
    &\leq & (I-\mu_{\ulinem_1}) \otimes I \otimes I +  I \otimes (I- \mu_{\ulinem_2})  \otimes (I- \mu_{\ulinem_3}) \label{a} \\
    &=& (I-\mu_{\ulinem_1}) \otimes I \otimes I   +  I \otimes (I- \mu_{\ulinem_2})  \otimes I - 
I \otimes  (I- \mu_{\ulinem_2}) \otimes \mu_{\ulinem_3}
    \nonumber \\ 
    & \leq & (I-\mu_{\ulinem_1}) \otimes I \otimes I   + I \otimes (I - \mu_{\ulinem_2}  ) \otimes I - I \otimes I \otimes  \mu_{\ulinem_3} \label{b} \\
    &=& (I-\mu_{\ulinem_1}) \otimes I \otimes I   + I \otimes (I - \mu_{\ulinem_2}  ) \otimes I - I \otimes I \otimes  \mu_{\ulinem_3} 
    + I \otimes I \otimes I - I \otimes I \otimes I  
    \nonumber \\ 
    &=& (I-\mu_{\ulinem_1}) \otimes I \otimes I   + I \otimes (I - \mu_{\ulinem_2}  ) \otimes I +  I \otimes I \otimes (I - \mu_{\ulinem_3}  ) - I \nonumber \\
    & \leq & (I-\mu_{\ulinem_1}) \otimes I \otimes I   + I \otimes (I - \mu_{\ulinem_2}  ) \otimes I +  I \otimes I \otimes (I - \mu_{\ulinem_3}  ), \nonumber
\end{eqnarray}
here, inequalities (\ref{a}, \ref{b}) follow from the fact that $ \mu_{\ulinem_1} \leq I$, and $ (I -\mu_{\ulinem_2}) \leq I  $.
\newpage
\bibliographystyle{IEEEtran}
{\bibliography{CosetCdsFor3CQChnls}}

\begin{thebibliography}{10}
\providecommand{\url}[1]{#1}
\csname url@rmstyle\endcsname
\providecommand{\newblock}{\relax}
\providecommand{\bibinfo}[2]{#2}
\providecommand\BIBentrySTDinterwordspacing{\spaceskip=0pt\relax}
\providecommand\BIBentryALTinterwordstretchfactor{4}
\providecommand\BIBentryALTinterwordspacing{\spaceskip=\fontdimen2\font plus
\BIBentryALTinterwordstretchfactor\fontdimen3\font minus
  \fontdimen4\font\relax}
\providecommand\BIBforeignlanguage[2]{{%
\expandafter\ifx\csname l@#1\endcsname\relax
\typeout{** WARNING: IEEEtran.bst: No hyphenation pattern has been}%
\typeout{** loaded for the language `#1'. Using the pattern for}%
\typeout{** the default language instead.}%
\else
\language=\csname l@#1\endcsname
\fi
#2}}

\bibitem{201110TIT_YarHayDev}
J.~Yard, P.~Hayden, and I.~Devetak, ``Quantum broadcast channels,'' \emph{IEEE
  Transactions on Information Theory}, vol.~57, no.~10, pp. 7147--7162, 2011.

\bibitem{201512TIT_SavWil}
I.~Savov and M.~M. Wilde, ``Classical codes for quantum broadcast channels,''
  \emph{IEEE Transactions on Information Theory}, vol.~61, no.~12, pp.
  7017--7028, 2015.

\bibitem{201605TIT_RadSenWar}
J.~Radhakrishnan, P.~Sen, and N.~Warsi, ``One-shot marton inner bound for
  classical-quantum broadcast channel,'' \emph{IEEE Transactions on Information
  Theory}, vol.~62, no.~5, pp. 2836--2848, 2016.

\bibitem{202107ISIT_AnwPadPra3CQIC}
T.~A. Atif, A.~Padakandla, and S.~S. Pradhan, ``{A}chievable rate-region for
  3—{U}ser {C}lassical-{Q}uantum {I}nterference {C}hannel using {S}tructured
  {C}odes,'' in \emph{2021 IEEE International Symposium on Information Theory
  (ISIT)}, 2021, pp. 760--765.

\bibitem{202212TIT_SohAtiPadPra}
M.~A. Sohail, T.~A. Atif, A.~Padakandla, and S.~S. Pradhan, ``Computing {S}um
  of {S}ources over a {C}lassical-{Q}uantum {M}{A}{C},'' \emph{IEEE
  Transactions on Information Theory}, vol.~68, no.~12, pp. 7913--7934, 2022.

\bibitem{197905TIT_Mar}
K.~{Marton}, ``A coding theorem for the discrete memoryless broadcast
  channel,'' \emph{IEEE Transactions on Information Theory}, vol.~25, no.~3,
  pp. 306--311, May 1979.

\bibitem{201307ISIT_PadPraDBC}
A.~Padakandla and S.~S. Pradhan, ``Achievable rate region for three user
  discrete broadcast channel based on coset codes,'' in \emph{2013 IEEE
  International Symposium on Information Theory}, 2013, pp. 1277--1281.

\bibitem{197201TIT_Cov}
T.~Cover, ``Broadcast channels,'' \emph{IEEE Transactions on Information
  Theory}, vol.~18, no.~1, pp. 2--14, 1972.

\bibitem{198101TIT_HanKob}
T.~Han and K.~Kobayashi, ``A new achievable rate region for the interference
  channel,'' \emph{IEEE Transactions on Information Theory}, vol.~27, no.~1,
  pp. 49--60, January 1981.

\bibitem{201804TIT_PadPra}
A.~{Padakandla} and S.~S. {Pradhan}, ``{A}chievable {R}ate {R}egion for {T}hree
  {U}ser {D}iscrete {B}roadcast {C}hannel {B}ased on {C}oset {C}odes,''
  \emph{IEEE Trans.~on Information Theory}, vol.~64, no.~4, pp. 2267--2297,
  April 2018.

\bibitem{200107TIT_Win}
A.~Winter, ``The capacity of the quantum multiple-access channel,'' \emph{IEEE
  Transactions on Information Theory}, vol.~47, no.~7, pp. 3059--3065, 2001.

\bibitem{202103SAD_Sen}
\BIBentryALTinterwordspacing
P.~Sen, ``Unions, intersections and a one-shot quantum joint typicality
  lemma,'' \emph{S\=adhan\=a}, vol.~46, no.~1, Mar 2021. [Online]. Available:
  \url{http://dx.doi.org/10.1007/s12046-020-01555-3}
\BIBentrySTDinterwordspacing

\bibitem{202102SAD_Sen}
\BIBentryALTinterwordspacing
------, ``Inner bounds via simultaneous decoding in quantum network information
  theory,'' \emph{S\=adhan\=a}, vol.~46, no.~1, Feb 2021. [Online]. Available:
  \url{http://dx.doi.org/10.1007/s12046-020-01517-9}
\BIBentrySTDinterwordspacing

\bibitem{BkWilde_2017}
M.~M. Wilde, \emph{Quantum Information Theory}, 2nd~ed.\hskip 1em plus 0.5em
  minus 0.4em\relax Cambridge University Press, 2017.

\bibitem{BkHolevo_2019}
A.~S. Holevo, \emph{Quantum Systems, Channels, Information}, 2nd~ed.\hskip 1em
  plus 0.5em minus 0.4em\relax De Gruyter, 2019.

\bibitem{201506ISIT_PadPra}
A.~Padakandla and S.~S. Pradhan, ``Coset codes for communicating over
  non-additive channels,'' in \emph{2015 IEEE International Symposium on
  Information Theory (ISIT)}, June 2015, pp. 2071--2075.

\bibitem{201108ISIT_ChaEmaZamAre}
F.~S. Chaharsooghi, M.~J. Emadi, M.~Zamanighomi, and M.~R. Aref, ``A new method
  for variable elimination in systems of inequations,'' in \emph{2011 IEEE
  International Symposium on Information Theory Proceedings}, 2011, pp.
  1215--1219.

\bibitem{201206TIT_FawHaySavSenWil}
O.~{Fawzi}, P.~{Hayden}, I.~{Savov}, P.~{Sen}, and M.~M. {Wilde}, ``Classical
  communication over a quantum interference channel,'' \emph{IEEE Transactions
  on Information Theory}, vol.~58, no.~6, pp. 3670--3691, 2012.

\bibitem{200307TIT_HayNag}
M.~Hayashi and H.~Nagaoka, ``General formulas for capacity of classical-quantum
  channels,'' \emph{IEEE Transactions on Information Theory}, vol.~49, no.~7,
  pp. 1753--1768, 2003.

\bibitem{202202arXiv_Pad3CQBC}
\BIBentryALTinterwordspacing
A.~Padakandla, ``An achievable rate region for $3-$user classical-quantum
  broadcast channels,'' 2022. [Online]. Available:
  \url{https://arxiv.org/abs/2203.00110}
\BIBentrySTDinterwordspacing

\bibitem{202206ISIT_Pad}
------, ``An achievable rate region for 3-user classical-quantum broadcast
  channels,'' in \emph{2022 IEEE International Symposium on Information Theory
  (ISIT)}, 2022, pp. 2821--2826.

\bibitem{201603TIT_PadSahPra}
A.~Padakandla, A.~G. Sahebi, and S.~S. Pradhan, ``An {A}chievable {R}ate
  {R}egion for the {T}hree-{U}ser {I}nterference {C}hannel {B}ased on {C}oset
  {C}odes,'' \emph{IEEE Transactions on Information Theory}, vol.~62, no.~3,
  pp. 1250--1279, March 2016.

\bibitem{200702ITA_KobHan}
K.~Kobayashi and T.~Han, ``A further consideration on the {H}{K} and the
  {C}{M}{G} regions for the interference channel,'' San Diego, Feb. 2007.

\bibitem{200807TIT_ChoMotGarElg}
H.-F. Chong, M.~Motani, H.~K. Garg, and H.~El~Gamal, ``On the {H}an-{K}obayashi
  region for the {I}nterference {C}hannel,'' \emph{Information Theory, IEEE
  Transactions on}, vol.~54, no.~7, pp. 3188--3195, 2008.

\bibitem{198101TIT_GamMeu}
A.~{El Gamal} and E.~{Van der Meulen}, ``A proof of {M}arton's coding theorem
  for the discrete memoryless broadcast channel,'' \emph{IEEE Trans. Inform.
  Theory}, vol. IT-27, no.~1, pp. 120--122, Jan. 1981.

\bibitem{BkNIT_PraPadShi}
\BIBentryALTinterwordspacing
S.~S. Pradhan, A.~Padakandla, and F.~Shirani, ``An algebraic and probabilistic
  framework for network information theory,'' \emph{Foundations and Trends® in
  Communications and Information Theory}, vol.~18, no.~2, pp. 173--379, 2020.
  [Online]. Available: \url{http://dx.doi.org/10.1561/0100000083}
\BIBentrySTDinterwordspacing

\bibitem{202503arXiv_GouPad}
\BIBentryALTinterwordspacing
F.~Gouiaa and A.~Padakandla, ``{A}chievable {R}ate {R}egions for
  {M}ulti-terminal {Q}uantum via {C}oset {C}odes,'' 2025. [Online]. Available:
  \url{http://arxiv.org/}
\BIBentrySTDinterwordspacing

\bibitem{201710TIT_PadPra}
A.~Padakandla and S.~S. Pradhan, ``An {A}chievable {R}ate {R}egion {B}ased on
  {C}oset {C}odes for {M}ultiple {A}ccess {C}hannel {W}ith {S}tates,''
  \emph{IEEE Transactions on Information Theory}, vol.~63, no.~10, pp.
  6393--6415, Oct 2017.

\bibitem{BkNITElGamalKim_2011}
A.~El~Gamal and Y.-H. Kim, \emph{Network Information Theory}.\hskip 1em plus
  0.5em minus 0.4em\relax Cambridge University Press, 2011.

\bibitem{2024MMTIT_Pad}
A.~Padakandla, ``Communicating over a classical-quantum mac with state
  information distributed at the senders,'' \emph{IEEE Transactions on
  Information Theory}, pp. 1--1, 2024.

\end{thebibliography}
\end{document}

